\newcommand{\rightdoublearrow}{%
  \rightarrow\mkern-10mu\protect\joinrel\rightarrow}
\newcommand{\rightdoublearrowfill@}
  {\arrowfill@\relbar\relbar\rightdoublearrow}
\newcommand{\mapstodoublearrowfill@}
  {\arrowfill@{\mapstochar\relbar}\relbar\rightdoublearrow}
\newcommand{\xrightdoublearrow}[2][]
  {\ext@arrow 3{15}59\rightdoublearrowfill@{#1}{#2}}
\newcommand{\xmapstodoublearrow}[2][]
  {\ext@arrow 3{15}59\mapstodoublearrowfill@{#1}{#2}}
\newcommand{\xMapsto}[2][]{\ext@arrow 0599{\Mapstofill@}{#1}{#2}}
\def\Mapstofill@{\arrowfill@{\Mapstochar\Relbar}\Relbar\Rightarrow}
\def\rightarrowfillstar@{\arrowfill@\relbar\relbar{\rightarrow\smash{^*}}}
\newcommand{\xrightarrowstar}[2][]{\ext@arrow
  0{13}{15}8\rightarrowfillstar@{#1}{#2}}
\newcommand{\lrstep}{\@ifstar{\xrightarrowstar}{\xrightarrow}}
\def\Rightarrowfillstar@{\arrowfill@=={\Rightarrow\smash{^*}}}
\newcommand{\NewxRightarrowstar}[2][]{\ext@arrow
  0{13}{15}8\Rightarrowfillstar@{#1}{#2}}
\newcommand{\NewxRightarrow}[2][]{\ext@arrow
  0{13}{15}8\Rightarrowfill@{#1}{#2}}
\newcommand{\LRstep}{\@ifstar{\NewxRightarrowstar}{\NewxRightarrow}}
\let\oldlrstep\lrstep
\renewcommand{\lrstep}[1]{%
  \mathrel{\raisebox{-0.5pt}[5pt]{%
      $\oldlrstep{\raisebox{-1.2pt}[2pt][1pt]{\scalebox{0.7}{$#1$}}}$%
  }}%
}
\newcommand{\lrstepannot}[2]{%
  \mathrel{\raisebox{-0.5pt}[5pt]{%
      $\oldlrstep{\raisebox{-1.2pt}[2pt][1pt]{\scalebox{0.7}{$#1$}}}$%
  \raisebox{-1.5pt}{\scalebox{0.7}{$#2$}}%
  }}%
}
\renewcommand{\lrstepannot}[2]{\lrstep{#1}_{#2}}
\let\oldxmapsto\xmapsto
\renewcommand{\xmapsto}[1]{%
  \mathrel{\raisebox{-0.5pt}[5pt]{%
      $\oldxmapsto{\raisebox{-1.2pt}[2pt][1pt]{\scalebox{0.7}{$#1$}}}$%
  }}%
}
\newcommand{\xmapstoannot}[2]{\xmapsto{#1}_{#2}}
\newcommand{\xmapstoannotwo}[3]{\xmapsto{#1}_{#2}^{#3}}
\newcommand{\Bio}{\mathcal{B}}
\newcommand{\cb}{c_B}
\newcommand{\I}{{\mathrel{\mathcal{I}}}}
\newcommand{\Ind}{\mathrel{\mathcal{I}}}
\newcommand{\act}{\mathsf{act}}
\newcommand{\init}{\mathsf{init}}
\newcommand{\obse}{\mathrm{obs}}
\newcommand{\myio}{\mathsf{io}}
\newcommand{\C}{\mathcal{C}}
\newcommand{\cs}[2]{(#1;#2)}
\newcommand{\Set}{\mathcal{S}}
\newcommand{\eqi}{\mathop{{=}^{?}}}
\newcommand{\neqi}{\mathop{{\neq}^{?}}}
\newcommand{\dedi}[3]{#1 \mathop{\vdash^{?}_{#2}} #3} 
\newcommand{\Sol}{\mathsf{Sol}}
\newcommand{\eqdef}{\stackrel{\mathsf{def}}{=}}
\newcommand{\test}[3]{\mathtt{if}\ #1\ \mathtt{then}\ #2\ \mathtt{else}\ #3}
\newcommand{\testt}[2]{\mathtt{if}\ #1\ \mathtt{then}\ #2}
\newcommand{\Out}{\mathtt{out}}
\newcommand{\In} {\mathtt{in}}
\newcommand{\triple}[3]{(#1;#2;#3)}
\newcommand{\q}{\mathcal{Q}}
\newcommand{\p}{\mathcal{P}}
\newcommand{\E}{\mathsf{E}}
\newcommand{\proc}[2]{(#1;#2)}
\newcommand{\refer}{\triangleright}
\newcommand{\Ch}{\mathcal{C}}
\newcommand{\X}{\mathcal{X}}
\newcommand{\W}{\mathcal{W}}
\newcommand{\T}{\mathcal{T}}
\newcommand{\N}{\mathcal{N}}
\newcommand{\inp}[2]{\In(#1,#2)} 
\newcommand{\out}[2]{\Out(#1,#2)} 
\newcommand{\tr}{\mathsf{tr}}
\newcommand{\pair}[2]{\langle #1,#2 \rangle}
\newcommand{\ok}{\mathsf{ok}}
\newcommand{\ska}{\mathit{ska}}
\newcommand{\skb}{\mathit{skb}}
\newcommand{\start}{\mathsf{start}}
\newcommand{\aenc}[2]{\mathsf{aenc}(#1,#2)}
\newcommand{\adec}[2]{\mathsf{adec}(#1,#2)}
\newcommand{\pk}[1]{\mathsf{pk}(#1)}
\newcommand{\pub}{\mathsf{pub}}
\newcommand{\enc}[2]{\mathsf{enc}(#1,#2)}
\newcommand{\dec}[2]{\mathsf{dec}(#1,#2)}
\newcommand{\projl}[1]{\pi_1(#1)}
\newcommand{\projr}[1]{\pi_2(#1)}
\newcommand{\dom}{\mathrm{dom}}
\newcommand{\img}{\mathrm{img}}
\newcommand{\fv}{\mathit{fv}} 
\newcommand{\fvs}{\mathit{fv}^2} 
\newcommand{\fvp}{\mathit{fv}^1} 
\newcommand{\sint}[1]{\lrstep{#1}}  
\newcommand{\sintw}[1]{\LRstep{#1}} 
\newcommand{\sintf}[2]{\xrightdoublearrow{#1}_{#2}} 
\newcommand{\sintc}[1]{\lrstepannot{#1}{c}} 
\newcommand{\ssym}[1]{\xmapsto{#1}}         
\newcommand{\ssymf}[2]{\xmapstodoublearrow{#1}_{#2}} 
\newcommand{\ssymc}[1]{\xmapstoannot{#1}{c}}         
\newcommand{\ssymd}[1]{\xmapstoannot{#1}{r}}         
\newcommand{\eint}{\approx}    
\newcommand{\eintc}{\approx_c} 
\newcommand{\esym}{\approx^s}  
\newcommand{\esymc}{\approx^s_c}   
\newcommand{\esymdff}{\approx^s_r}
\newcommand{\inceint}{\sqsubseteq}
\newcommand{\inceintc}{\sqsubseteq_c}
\newcommand{\incesym}{\sqsubseteq^s}
\newcommand{\incesymc}{\sqsubseteq_c^s}
\newcommand{\incesymdff}{\sqsubseteq_r^s}
\newcommand{\statequiv}{\sim}
\newcommand{\och}{\prec}                   
\newcommand{\vect}[1]{\overrightarrow{#1}} 
\newcommand{\io}[3]{\mathtt{io}_{\mathit{#1}}(\vect{#2},\vect{#3})}
\newcommand{\iossvect}[3]{\mathtt{io}_{\mathrm{#1}}({#2},{#3})}
\newcommand{\Dep}[2]{\mathrm{dep}\left(#1,#2\right)} 
\newcommand{\AllDep}[1]{\mathrm{Deps}\left(#1\right)}
\newcommand{\depSym}{\mathbb{n}}                     
\newcommand{\constrd}[2]{#1 \depSym #2}       
\newcommand{\mysf}[1]{\ensuremath{\mathsf{#1}}}
\newcommand{\spec}{\mysf{Spec}\xspace}
\newcommand{\apte}{\mysf{Apte}\xspace}
\newcommand{\akiss}{\mysf{Akiss}\xspace}
\newcommand{\quadr}[4]{\left(#1;#2;#3;#4\right)}
\newcommand{\ps}{\mathsf{ps}}
\newcommand{\extof}[1]{\lceil {#1} \rceil}           
\newcommand{\ofext}[1]{\lfloor {#1} \rfloor}           
\newcommand{\csext}[3]{(#1;#2;#3)}
\newcommand{\as}{\mathbf{A}}
\newcommand{\bs}{\mathbf{B}}
\newcommand{\symbinclset}{\prec^{+}}
\newcommand{\symbeqset}{\sim^{+}}
\newcommand{\ssyma}[1]{\xmapstoannotwo{#1}{}{\mathsf{A}}}     
\newcommand{\ssymaone}[1]{\xmapstoannotwo{#1}{}{\mathsf{A1}}} 
\newcommand{\ssymatwo}{\xmapstoannotwo{}{}{\mathsf{A2}}}      
\newcommand{\ssymac}[1]{\xmapstoannotwo{#1}{c}{\mathsf{A}}}
\newcommand{\ssymaonec}[1]{\xmapstoannotwo{#1}{c}{\mathsf{A1}}}
\newcommand{\ssymad}[1]{\xmapstoannotwo{#1}{r}{\mathsf{A}}}
\newcommand{\eqapte}{\approx^{\mathsf{A}}}
\newcommand{\eqaptec}{\approx^{\mathsf{A}}_c}
\newcommand{\eqapted}{\approx^{\mathsf{A}}_r}
\newcommand{\refS}[1]{Axiom~\ref{#1}}
\newtheorem{axiom}{Axiom}
\newcommand{\ie}{\emph{i.e.}\xspace}
\newcommand{\eg}{\emph{e.g.}\xspace}
\newcommand{\ignore}[1]{}
\definecolor{grey}{rgb}{0.30,0.30,0.30} 
\definecolor{darkgrey}{rgb}{0.98,0.98,0.98} 
\definecolor{rouge_sombre}{rgb}{0.8,0,0}
\definecolor{rouge}{rgb}{0.8,0,0}
\definecolor{orange}{rgb}{1,0.5,0}
\definecolor{vert}{rgb}{0.2,0.7,0.2}
\definecolor{oran}{RGB}{255,50,0}
\newtheorem{example}[thm]{Example}
\newtheorem{definition}[thm]{Definition}
\newtheorem{lemma}[thm]{Lemma}
\newtheorem{proposition}[thm]{Proposition}
\newtheorem{corollary}[thm]{Corollary}
\newtheorem{theorem}[thm]{Theorem}
\newcommand{\stef}[1]{#1}
\newcommand{\david}[1]{#1} 
\newcommand{\lucca}[1]{#1} 
\let\myMarginpar\marginpar
\renewcommand\marginpar[1]{\myMarginpar[\scriptsize #1]{\scriptsize #1}}
\begin{document}

\title{A reduced semantics for deciding trace equivalence}
\thanks{This work has received funding from the European Research Council
(ERC) under the European Union’s Horizon 2020 research and innovation
program (grant agreement No 714955-POPSTAR), as well as the ANR projects JCJC VIP ANR-11-JS02-006 and
    Sequoia ANR-14-CE28-0030-01.}

\author[D. Baelde, S. Delaune, and L. Hirschi]{David Baelde\rsuper a}	
\author[]{St\'ephanie Delaune\rsuper b}	
\author[]{Lucca Hirschi\rsuper c}	
\address{{\lsuper{a,c}}LSV, ENS Cachan \& CNRS, Université Paris-Saclay, France}	
\email{\{baelde,hirschi\}@lsv.ens-cachan.fr}  
\address{{\lsuper{b}}CNRS \& IRISA, France}
\email{stephanie.delaune@irisa.fr}  




\begin{abstract}
  Many privacy-type properties of security protocols can be modelled
   using trace equivalence properties in suitable process algebras. 
  It has been shown that such properties can be decided for interesting
  classes of finite processes (\ie without replication) by means of  symbolic execution
  and constraint solving.
  However, this does not suffice to obtain practical tools. Current
  prototypes suffer from a classical combinatorial explosion problem caused
  by the exploration of many interleavings in the behaviour of processes.
  M\"odersheim \emph{et al.}~\cite{ModersheimVB10} have tackled this problem
  for reachability properties using partial order reduction techniques. We revisit their work, generalize it and
  adapt it for equivalence checking. 
We obtain an optimisation in the form of a reduced symbolic semantics
that eliminates redundant interleavings on the fly. The obtained
partial order reduction technique has
been integrated in a tool called \apte.
We conducted complete benchmarks showing
dramatic improvements.
\end{abstract}

\maketitle

\section{Introduction}
\label{sec:intro}

Security protocols are widely used today to secure transactions that rely on 
public channels like the Internet, where malicious agents may listen to 
communications and interfere with them. 
Security has a different meaning depending on the underlying 
application. It ranges from the confidentiality of data (medical files, 
secret keys, etc.) to, \eg verifiability in electronic voting systems. Another 
example is the notion of privacy that appears in many contexts such as 
vote-privacy in electronic voting or untraceability in RFID technologies.

To achieve their security goals, security protocols rely on various
cryptographic primitives such as symmetric and asymmetric encryptions,
signatures, and hashes. Protocols also involve a high level of concurrency and
are difficult to analyse by hand. 
Actually, 
many protocols have been shown to be flawed several years after their
publication (and deployment). 
 For example, a flaw has been discovered in the Single-Sign-On
 protocol used, \eg by Google Apps. It has been shown that a malicious
 application could very easily get access to any other application
 (\eg Gmail or Google Calendar) of their users~\cite{BreakingGoogle-FMSE2008}. This flaw has
 been found when analysing the protocol using formal methods,
 abstracting messages by a term algebra and using the Avantssar
 validation platform~\cite{avantssar-2012}. Another example is a flaw on vote-privacy
 discovered during the formal and manual analysis of an electronic
 voting protocol~\cite{JCS2012-Ben}.

\smallskip{}

Formal symbolic methods have proved their usefulness for precisely analysing the 
security of protocols. Moreover, it
allows one to benefit
from machine 
support through the use of various existing techniques, ranging from
model-checking to resolution and rewriting techniques. Nowdays,
several verification tools are available, \stef{\eg
  \cite{BlanchetCSFW01,Cr2008Scyther,sysdesc-CAV05,meier2013tamarin,santiago2014formal}}.
\stef{A synthesis of decidability and undecidability results for
equivalence-based security properties, and an overview of existing
verification tools 
that may be used to verify equivalence-based security properties can
be found in~\cite{DH-survey16}.}

In order to design decision procedures,
a reasonable assumption is to bound the number of protocol sessions, thereby limiting the length of execution traces. 
Under such an hypothesis, a wide variety of model-checking approaches have been 
developed
 (\eg \cite{MS02,Tiu-csf10}), and
several tools are now available to automatically verify cryptographic 
protocols, {\eg \cite{spec,sysdesc-CAV05}}.
A major challenge faced here is that one has to account for infinitely
many behaviours of the attacker, who can generate arbitrary messages.
In order to cope with this prolific attacker problem and obtain decision
procedures, approaches based on symbolic semantics and constraint resolution
have been proposed~\cite{MS02,RT01}.
This has lead to tools for verifying reachability-based security properties 
such as confidentiality~\cite{MS02} or, more recently, equivalence-based
properties such as privacy~\cite{Tiu-csf10,cheval-ccs2011,CCK-esop12}.
Unfortunately, the resulting tools, especially those for checking
equivalence (\eg \apte~\cite{Cheval-tacas14}, \spec~\cite{Tiu-csf10}, \akiss~\cite{chadha2012automated}) have a very limited
practical impact because they scale badly. 
This is not surprising since they treat concurrency in a very naive
way, exploring all possible symbolic interleavings of concurrent
actions.

\medskip{}

\paragraph*{\bf Related work.}
In standard model-checking approaches for concurrent systems, the 
interleaving problem is handled using partial order reduction (POR)
techniques~\cite{Peled98}.
In a nutshell,
these techniques aim to effectively exploit the fact that
the order of execution of two 
independent (parallel) actions is irrelevant when checking
reachability.
The theory of partial order reduction is well developed in the
context of reactive systems verification
(\eg~\cite{Peled98,ModelCheckingBook,godefroid1996partial}). However,
as pointed out by Clarke \emph{et al.} in~\cite{ClarkeJM03},
POR techniques from traditional model-checking cannot be directly
applied in the context of security protocol verification.
Indeed, the application to security requires one to keep track of the
knowledge of the attacker, and to refer to this
knowledge in a meaningful way (in particular to know which messages
can be forged at some point to feed some input).
Furthermore, security protocol analysis does not rely on the internal
reduction of a protocol, but has to consider arbitrary execution
contexts (representing interactions with arbitrary, active attackers).
Thus, any input may depend on any output, since the attacker has
the liberty of constructing arbitrary messages from past outputs.
This results in a dependency relation which is \emph{a priori} very
large, rendering traditional POR arguments suboptimal, and calling for
domain-specific techniques.

In order to improve existing verification tools for security protocols,
one has to design POR techniques that integrate nicely with symbolic execution.
This is necessary to precisely deal with infinite, structured data.
In this task, we get some inspiration from Mödersheim {\it et
al.}~\cite{ModersheimVB10}, who design a partial order reduction
technique that blends well with symbolic execution in the context of
security protocols verification. However, we shall see that their
key insight is not fully exploited, and yields only a quite limited
partial order reduction. Moreover, they only consider reachability
properties (like all previous work on POR for security protocol verification)
while we seek an approach that is adequate for
model-checking equivalence properties.

\medskip{}

\paragraph*{\bf Contributions.}

In this paper, we revisit the work of~\cite{ModersheimVB10} to obtain a 
partial order reduction technique for the verification of equivalence 
properties.
Among the several definitions of equivalence that have been proposed,
we consider \emph{trace equivalence} in this paper:
two processes are trace equivalent
when they have the same sets of observable traces and, for each such trace,
sequences of messages outputted by the two processes
 are \emph{statically equivalent},
\ie indistinguishable for the attacker.
This notion is well-studied and several algorithms and tools support
it~\cite{BlanchetAbadiFournetJLAP08,chevalier10,Tiu-csf10,cheval-ccs2011,CCK-esop12}.
Contrary to what happens for reachability-based properties, trace equivalence 
cannot be decided relying only on the reachable states. The sequence of 
actions that leads to this state plays a role. Hence, extra precautions
have to be taken before discarding a particular interleaving: we have to 
ensure that this is done in both sides of the equivalence in a similar
fashion.
Our main contribution is an optimised form of equivalence that discards
a lot of interleavings, and
a proof that this reduced equivalence coincides with
trace equivalence.  Furthermore, our study brings
an improvement of the original technique~\cite{ModersheimVB10} that
would apply equally well for reachability checking.
On the practical side,
we explain how we integrated our partial order reduction into the state-of-the art 
tool \apte~\cite{cheval-ccs2011}, prove the correctness of this integration,
and provide experimental results showing dramatic improvements.
We believe that our presentation is generic enough to be easily adapted
for  other tools (provided that they are based on a forward symbolic exploration of traces
combined with a constraint solving procedure).
A big picture of the whole approach along with the new results is given in
Figure~\ref{fig:bp}.
Vertically, it goes from the regular semantics, to symbolic semantics and
\apte's semantics.
Those semantics have variants when our optimisations are applied or not: no optimisation, only compression
or compression plus reduction.

This paper essentially subsumes the conference paper that has been
published in 2014~\cite{BDH-post14}. However, we consider here
  a generalization of the semantics used
in~\cite{BDH-post14}. This generalization notably allows us to capture
the semantics used 
in~\apte, which allows us to formally prove the integration of our optimisations
in that tool.
In addition, this paper incorporates proofs of all the
results, additional examples, 
and an extensive related work section. Finally, it comes with
a solid implementation in the tool \apte~\cite{Cheval-tacas14}.

\medskip{}

\paragraph*{\bf Outline.}
In Section~\ref{sec:model}, we introduce our model for security
processes. 
We then consider the class of simple processes introduced in~\cite{CCD-tcs13},
with else branches and no replication.
Then we present two successive optimisations in the form of refined
semantics and associated trace equivalences.
Section~\ref{sec:compression} presents 
a \emph{compressed} semantics that limits interleavings by
executing blocks of actions. 
Then, by adapting well-known argument, this is lifted to a symbolic semantics in 
Section~\ref{sec:constraint-solving}.
Section~\ref{sec:diff} presents the \emph{reduced} semantics
which makes use of \emph{dependency constraints} to remove more
interleavings.
In Section~\ref{sec:apte}, we explain how
this reduced semantics has been integrated in the tool \apte, prove
its correcteness, and give some benchmarks obtained on several case
studies. Finally, Section~\ref{sec:relWork} is devoted to related work,
and concluding remarks are given in Section~\ref{sec:conclu}.
An overview of the different semantics we will define and the results relating them
is depicted in Figure~\ref{fig:bp}.
A table of symbols can be found in Appendix~\ref{sec:not}.

\begin{figure}[ht]
  \centering
\begin{tikzpicture}
[node distance = 1cm, auto,font=\normalsize,
every node/.style={node distance=2.5cm},
sem/.style={rectangle, draw, 
  fill=black!10,
  inner sep=5pt, 
  text width=2cm,
  text badly centered, 
  minimum height=1.2cm, 
  font=\bfseries\footnotesize\sffamily},
sem-red/.style={rectangle, draw, 
  fill=red!10,
  inner sep=5pt, 
  text width=2cm,
  text badly centered, 
  minimum height=1.2cm, 
  font=\bfseries\footnotesize\sffamily},
tab/.style = {inner sep = 0pt, 
  rectangle,
  fill=white,
  inner sep=0pt, 
  text width=4cm,
  text badly centered, 
},
tab2/.style = {inner sep = 0pt, 
  rectangle,
  fill=white,
  inner sep=0pt, 
  anchor=west,
},
new/.style = {
  fill=black,
  color=black,
},
known/.style = {
  fill=black,
  color=black,
},
]

\node [sem] (ssymc) {$\ssymc{}$};
\node [sem, above of=ssymc] (sintc) {$\sintc{}$};
\node [sem, left=2.5cm of sintc,
] (sint) {$\sint{}$};
\node [sem, left=2.5cm of ssymc] (ssym) {$\ssym{}$};
\node [sem, right=2.5cm of ssymc] (ssymd) {$\ssymd{}$};
\node [sem, below=2cm of ssymc] (ssymac) {$\ssymac{}$};
\node [sem, left=2.5cm of ssymac] (ssyma) {$\ssyma{}$};
\node [sem, right=2.5cm of ssymac] (ssymad) {$\ssymad{}$};

\node [tab, above=2.9cm of ssym] (noOptim-tab) {No Optimization};
\node [tab, above=2.9cm of ssymc] (comp-tab) {Compression};
\node [tab, above=2.9cm of ssymd] (diff-tab) {Reduction};

\node [tab2, left=0.5cm of sint] (concrete-tab) {Concrete};
\node [tab2, left=0.5cm of ssym] (symbolic-tab) {Symbolic};
\node [tab2, left=0.5cm of ssyma] (apte-tab) {Apte};

\node [below=1cm of symbolic-tab] (dotted-left-right) {};
\node [left=0.75cm of dotted-left-right] (dotted-left) {};
\node [right=14.75cm of dotted-left](dotted-right) {};

\draw[new] (sint) -- node[below]
{$\eint=\eintc$} 
node[above]
{Theorem~\ref{theo:comp-soundness-completeness}}
++(sintc);

\draw[new] (sintc) -- node[left]
{Theorem~2}
node[right]
{$\eintc=\esymc$} 
++(ssymc);

\draw[new] (ssymc) -- node[left]
[pos=0.7]{Theorem~\ref{thm:eqaptec-esymc}}
node[right]
[pos=0.7]{$\esymc=\eqaptec$} 
++(ssymac);

\draw[new] (sintc) -- node[below]
[pos=0.5, sloped, anchor=center, below]{$\eintc=\esymdff$} 
node[above]
[pos=0.5, above=-0.35cm]
{\rotatebox{332}{Theorem~\ref{theo:reduced}}}
++(ssymd);

\draw[new] (ssymd) -- node[left]
[pos=0.7]{Theorem~\ref{thm:cor-com-apte-red}}
node[right]
[pos=0.7]{$\esymdff=\eqapted$} 
++(ssymad);

\draw[known] (sint) -- node[left]
{\cite{baudet-ccs2005,CCD-tcs13}}
node[right]
{$\eint=\esym$} 
++(ssym);

\draw[known] (ssym) -- node[left]
[pos=0.7]{\cite{CCD-tcs13}
}
node[right]
[pos=0.7]{$\esym=\eqapte$} 
++(ssyma);

\draw [dashed] (dotted-left) -- (dotted-right);

\end{tikzpicture} 
  \caption{Overview of the paper}
  \label{fig:bp}
\end{figure}

\newcommand{\valid}{\mathit{valid}}
\newcommand{\M}{\mathcal{M}}
\newcommand{\St}{\mathit{st}}
\section{Model for security protocols}
\label{sec:model}

In this section, we introduce the cryptographic process calculus that
we will use to describe security protocols. This calculus is close to
the applied pi calculus~\cite{AbadiFournet2001}. 
We consider a semantics in the spirit of the one used
  in~\cite{BDH-post14} but we also allow to block some actions
depending on a validity predicate.  
This predicate can be chosen in such a way that no action is blocked, making the semantics 
as in~\cite{BDH-post14}.
It can also be chosen as in \apte as we eventually do in order to prove the integration of
our optimisations into this tool.

\subsection{Syntax}
\label{subsec:syntax}

A protocol consists of some agents communicating on a network.
Messages sent by agents are modeled using a term algebra.
We assume two infinite and disjoint sets of variables, $\X$ and $\W$.
Members of $\X$ are denoted $x$, $y$, $z$, whereas members of~$\W$ are
denoted~$w$ and used 
as \emph{handles} for previously
output terms.
We also assume a set~$\N$ of \emph{names}, which are used
for representing keys or nonces\footnote{
 Note that we
 do not have an explicit set of restricted (private) names.
 Actually, all
 names are restricted and public ones will be explicitly given to
 the attacker.},
and a signature $\Sigma$ consisting of a finite set of function symbols.
Terms are generated inductively from names, variables, and function 
symbols applied to other terms. For $S \subseteq \X\cup\W\cup\N$,
the set of terms built from~$S$ by applying function symbols in
$\Sigma$ is denoted by $\T(\Sigma, S)$. 
We write $\St(t)$ for the set of syntactic subterms of a term~$t$.
Terms in $\T(\Sigma, \N\cup\X)$ are denoted by
$u$, $v$, etc. while terms in $\T(\Sigma,\W)$ represent \emph{recipes}
(describing how the attacker built a term from the available outputs)
and are written $M$, $N$, $R$. 
We write $\fv(t)$ for the set of variables (from $\X$ or $\W$)
occurring in a term $t$. A term is \emph{ground} if it does not contain any
variable, \ie it belongs to $\T(\Sigma,\N)$.
One may rely on a sort system for terms, but its details
are unimportant for this paper. 

To model algebraic properties of cryptographic primitives, we consider
an equational theory $\E$. The theory will usually be generated from a
finite set of axioms enjoying nice properties (\eg convergence)
but these aspects are irrelevant for the present work.

\begin{example}
\label{ex:signature}
In order to model asymmetric encryption and pairing, we consider:
\[
\Sigma = \{ \aenc{\cdot}{\cdot},\; \adec{\cdot}{\cdot}, \;\pk{\cdot},
\;\pair{\cdot}{\cdot}, \;\projl{\cdot},\; \projr{\cdot}\}.
\]
To take into account the properties of
these operators, we consider the equational
theory~$\E_{\mathsf{aenc}}$ generated by the three following
equations:
\[
\adec{\aenc{x}{\pk{y}}}{y} =x, \;\;\;\;
\projl{\pair{x_1}{x_2}} = x_1, \mbox{ and } \;\;
\projr{\pair{x_1}{x_2}} = x_2. 
\]
For instance, we have
$\projr{\adec{\aenc{\pair{n}{\pk{ska}}}{\pk{skb}}}{skb}}
=_{\E_\mathsf{aenc}} \pk{ska}$.
\end{example}

  Our model is parameterized by a notion of \emph{message},
  intuitively meant to represent terms that can actually be communicated by
  processes. Formally, we assume a special subset of ground terms $\M$,
  only requiring that it contains at least one public constant.
Then, we say that a ground term $u$ is
  \emph{valid}, denoted $\valid_\M(u)$, whenever for any $v \in \St(u)$,
  we have that there exists $v' \in \M$ such that $v =_\E v'$.
This notion of validity will be imposed \stef{on communicated terms}.
As we shall see, $\M$ can be chosen in such a way that the validity
constraint allows us to discard some terms for which the computation of some parts fail.
Note that $\M$ can also be chosen to be the set of all ground terms, yielding a trivial validity
predicate that holds for all ground terms. The following developments are
parametrized by~$\M$.

\begin{example}
\label{ex:signature-a}
The signature used in \apte is $\Sigma=\Sigma_c\cup\Sigma_d$ where:
\[
  \begin{array}{rcl}
\Sigma_c &=& \Sigma_0\cup\{\mathsf{aenc}(\cdot,\cdot),\mathsf{pk}(\cdot),\mathsf{enc}(\cdot,\cdot), \mathsf{hash}(\cdot), \mathsf{sign}(\cdot,\cdot),\mathsf{vk}(\cdot), \langle \cdot,\cdot\rangle\}\\[2mm]
\Sigma_d &=& \{\mathsf{adec}(\cdot,\cdot), \mathsf{dec}(\cdot,\cdot), \mathsf{check}(\cdot,\cdot), \pi_1(\cdot),\pi_2(\cdot)\}
  \end{array}
\]
\stef{where $\Sigma_0$ may contain some additional user-defined function symbols.} 
The equational theory $\E_\apte$ of \apte is an \stef{extension} of the
theory~$\E_{\mathsf{aenc}}$ generated by adding the following
equations:
\[
\dec{\enc{x}{y}}{y} =x \;\; \quad\quad \;\;
\mathsf{check}(\mathsf{sign}(x,y),\mathsf{vk}(y)) = x
\]
The validity predicate used in the semantics of~\apte is
obtained by taking $\M = \T(\Sigma_c,\N)$, \stef{\ie the ground terms built
using constructor symbols.}
This choice allows us to discard terms for which a failure will happen during the
computation and which therefore do not correspond to a message:
\eg $\projl{\langle \ok, \dec{\enc{a}{k}}{k'}\rangle}$ is not valid
since $\dec{\enc{a}{k}}{k'}$ is not equal modulo $\E_\apte$ to a term in
$\T(\Sigma_c,\N)$.
\end{example}

We do not need the full applied pi calculus \cite{AbadiFournet2001}
to represent security protocols.  Here, 
we only consider public channels and we assume that each process communicates on
a dedicated channel.
Formally, we assume a set $\Ch$ of \emph{channels} and we consider the fragment of
\emph{simple processes} without replication built on \emph{basic processes} as defined in~\cite{CCD-tcs13}.
A basic process represents a party in a protocol, which may sequentially
perform  actions such as waiting for a message, checking that a message has a 
certain form, or outputting a message.
Then, a simple process is a parallel composition of such basic processes
playing on distinct channels.

\begin{definition}[basic/simple process]
\label{def : basic process}
The set  of \emph{basic processes} 
on ${c \in \Ch}$
is defined using the following grammar (where $u,v\in\T(\Sigma,\N\cup\X)$ and
$x \in \X$):
\[
\begin{array}{lcll}
  P,Q &:=& 0                            &\mbox{null} \\
      &|& \test{u=v}{P}{Q}  \quad\quad &\mbox{conditional}\\
      &|& \In(c,x).P                   &\mbox{input}\\
      &|& \Out(c,u).P                  &\mbox{output}\\
\end{array}
\]
 A \emph{simple process} $\p = \{ P_1, \ldots, P_n \}$
 is a {multiset} of basic processes $P_i$ on pairwise
 distinct channels $c_i$. {We assume that null processes are removed.}
\end{definition}

Intuitively, a multiset of basic processes denotes a parallel composition.
For conciseness, we often omit brackets, null processes, and even
``\texttt{else} 0''.
Basic processes are denoted by the letters~$P$ and~$Q$, whereas simple
processes are denoted using~$\p$ and~$\q$.

\smallskip{}

During an execution, the attacker learns the messages that have been
sent on the different public channels. Those messages are organized
into a \emph{frame}. 

\begin{definition}[frame]
  A \emph{frame} $\Phi$ is a substitution whose domain is included in
  $\W$ and image is included in $\T(\Sigma, \N\cup\X)$. It is written
  $\{ w \refer u, \ldots \}$. A frame is \emph{ground} when
  its image only contains ground terms.
\end{definition}
In the \stef{remainder} of this paper, we will actually only consider ground
frames that are made of valid terms.

An \emph{extended simple process} (denoted $A$ or $B$) is a pair made of a simple process
and a frame.  Similarly, we define \emph{extended basic processes}.
When the context makes it clear, we may omit ``extended'' and simply call them {\em simple processes}
and {\em basic processes}.

\begin{example}
\label{ex:private}
We consider the protocol given
in~\cite{AbadiF04} designed for authenticating an agent with another one without revealing their
identities to other participants.
In this protocol, $A$ is willing to engage in
communication with~$B$ and wants to be sure that she is indeed talking to $B$ and not
to an attacker who is trying to impersonate $B$.
However, $A$ does not want to compromise her privacy by revealing her identity
or the identity of~$B$ more broadly. The participants~$A$ and~$B$
proceed as follows:
\[
\begin{array}{rcl}
 A \rightarrow B& \; :\; & \{N_a,\pub_A\}_{\pub_B}\\
 B \rightarrow A& \;: \;& \{N_a,N_b,\pub_B\}_{\pub_A}
 \end{array}
\]
First~$A$ sends to~$B$ a nonce~$N_a$ and her public key encrypted with
the public key of~$B$. If the message is of the expected form then~$B$
sends to~$A$ the nonce~$N_a$, 
a freshly generated nonce~$N_b$ and his public key, all of this being
encrypted with the public key of~$A$. 
Moreover, if the message received by $B$ is not of the expected form
then~$B$ sends out a ``decoy'' message: $\{N_b\}_{\pub_B}$. This message should basically look like~$B$'s other
 message from the point of view of an outsider.

Relying on the signature and equational theory introduced in
Example~\ref{ex:signature}, a session of role~$A$ played by agent~$a$
(with private key $ska$)
with~$b$ (with public key $pkb$) can be modeled as follows:
\[
\begin{array}{lcl}
P(\ska, pkb) &\eqdef&
\Out(c_A, \aenc{\pair{n_a}{\pk{ska}}}{pkb}). \\
&& \In(c_A,x). \\ 
&& \testt{\pair{\projl{\adec{x}{ska}}}{\projr{\projr{\adec{x}{ska}}}} =
  \pair{n_a}{pkb}}{0}
\end{array}
\]
Here, we are only 
  considering the authentication protocol.
A more comprehensive model should include the access to an
application in case of a success.
Similarly, a session of role~$B$ played by agent~$b$ with~$a$ can be
modeled by the following basic process, where $N = \adec{y}{skb}$.
\[
\begin{array}{lcl}
Q(skb, pka) &\eqdef & \In(\cb,y) . \\
&& \mathtt{if}\ \projr{N} = pka\;  \mathtt{then} \; \Out(\cb, \aenc{\pair{\projl{N}}{\pair{n_b}{\pk{skb}}}}{pka}) \\
&&\phantom{\mathtt{if}\ \projr{N} = pka}\; {\mathtt{else} \; \Out(\cb, \aenc{n_b}{\pk{skb}})} 
\end{array}
\]

To model a scenario with one session of each role (played by the
agents~$a$ and~$b$), we may consider the extended process
$\proc{\p}{\Phi_0}$ where:
\begin{itemize}
\item $\p
\eqdef \{P(ska, \pk{skb}), Q(skb, \pk{ska})\}$, and
\item 
 $\Phi_0 \eqdef \{w_0 \refer \pk{ska'}, \; w_1 \refer \pk{ska}, \;  w_2
\refer \pk{skb}\}$. 
\end{itemize}
The purpose of $\pk{\ska'}$ will be clear later on. It allows us to 
consider
the existence of
another agent $a'$ whose public key $\pk{ska'}$ is known by the attacker.
\end{example}

\subsection{Semantics}
\label{subsec:semantics}

We first define a standard concrete semantics using a relation over 
\emph{ground} extended simple processes, \ie extended simple processes
$\proc{\p}{\Phi}$ such that 
$\fv(\p) = \varnothing$ \david{(as said above, we also assume
that $\Phi$ contains only valid ground terms)}.
The semantics of a ground extended simple process
$\proc{\p}{\Phi}$ is induced by the relation $\sint{a}$ over ground
extended simple processes as defined in Figure~\ref{fig:concrete-semantics}.

\begin{figure}[h]
\[
\begin{array}{lrcl}
\mbox{\sc In} &
\proc{\{\In(c,x).Q\}\uplus\p}{\Phi} 
& \sint{\In(c,M)} & 
\proc{\{Q\{x\mapsto u\}\}\uplus\p}{\Phi}
\\ \multicolumn{4}{r}{\hspace{5cm} \mbox{if $M \in \T(\Sigma,\dom(\Phi))$,
 $\valid(M\Phi),\valid(u)$ and $M\Phi=_\E u$}}
\\[2mm]
 \mbox{\sc Out}
&
 \proc{\{\Out(c,u).Q\}\uplus\p}{\Phi}
 & \sint{\Out (c,w)} & \proc{\{Q\}\uplus\p}{\Phi\cup\{w\refer u\}}\\
\multicolumn{4}{r}{\mbox{if $w$ is a fresh variable, and $\valid(u)$}}
\\[2mm]
 \mbox{\sc Then} \;\;\;\;\;&
  \multicolumn{3}{l}{\proc{\{\test{u=v}{Q_1}{Q_2}\}\uplus\p}{\Phi} 
\; \sint{\;\tau\;}\; \proc{\{Q_1\}\uplus\p}{\Phi}} \\
\multicolumn{4}{r}{\mbox{if $u =_{\E} v$, $\valid(u)$, and $\valid(v)$}}\\[2mm]
\mbox{\sc Else} &
 \multicolumn{3}{l}{\proc{\{\test{u=v}{Q_1}{Q_2}\}\uplus\p}{\Phi}
 \; \sint{\;\tau\;} \; 
 \proc{\{Q_2\}\uplus\p}{\Phi}} \\
\multicolumn{4}{r}{\mbox{if $u \neq_{\E} v$ or $\neg \valid(u)$
  or $\neg \valid(v)$}}\\[2mm]
\multicolumn{4}{l}{\text{where } c\in\mathcal{C}, w\in\mathcal{W}\text{ and }x\in\mathcal{X}.}\\
\end{array}
\]
\caption{Concrete semantics}
\label{fig:concrete-semantics}
\end{figure}

A process may input any valid term that an attacker can build (rule {\sc
  In}): $\{x \mapsto u\}$ is a substitution that replaces any occurrence
of $x$ with $u$. 
Once a recipe $M$ is fixed, we may note that there are still
different instances of the rule, but only in the sense that $u$ is
chosen modulo the equational theory~$\E$.
In practice, of course, not all such $u$ are enumerated. How this
is achieved in practice is orthogonal to the theoretical development
carried out here.
In the {\sc Out} rule, we enrich the attacker's knowledge by adding the newly 
output term~$u$, with a fresh handle~$w$, to the frame.
The two remaining rules are unobservable ($\tau$ action) from the point of 
view of the attacker. 
When $\mathcal{M}$ contains all the ground terms, $\valid(u)$
  is true for any term $u$ and this semantics coincides with the one
  defined in~\cite{BDH-post14}. However, this parameter gives us enough
  flexibility to obtain a semantics similar to the one used in \apte,
  and therefore formally prove in Section~\ref{sec:apte} how to
  integrate our techniques in \apte.

 The relation $A \sint{a_1 \cdot \ldots \cdot a_k} B$
 between extended simple processes,
 where $k\geq 0$ and
 each $a_i$ is an observable or a $\tau$ action,
is defined in the usual way. We also consider the relation $\sintw{\;\tr}$
defined as follows: $A \sintw{\;\tr} B$ if, and only if, there exists
$a_1 \cdot \ldots \cdot a_k$ such that $A \sint{a_1 \cdot \ldots \cdot a_k} B$, and $\tr$ is
obtained from $a_1 \cdot \ldots \cdot a_k$ by erasing all occurrences of
$\tau$.

\begin{example}
\label{ex:semantics}
Consider the simple process $\proc{\p}{\Phi_0}$
introduced in Example~\ref{ex:private} (with $\M$ equal to
  $\T(\Sigma_c,\N)$ as in \apte). We have:
\[
\proc{\p}{\Phi_0}
 \sint{\Out(c_A,w_3) \cdot \In(\cb,w_3) \cdot \tau \cdot
   \Out(\cb,w_4) \cdot \In(c_A,w_4) \cdot \tau}
 \proc{\varnothing}{\Phi}.
\]

This trace corresponds to the normal execution of one instance of the
protocol. The two silent actions have been triggered using the {\sc
  Then} rule. The resulting frame $\Phi$ is as follows:
\[
\Phi_0 \uplus \{w_3 \refer
\aenc{\pair{n_a}{\pk{ska}}}{\pk{skb}}, \;w_4 \refer
\aenc{\pair{n_a}{\pair{n_b}{\pk{skb}}}}{\pk{ska}}\}.
\]
\end{example}


\subsection{Trace equivalence}
\label{subsec:equivalence}

Many interesting security properties, such as privacy-type
properties studied, \eg in~\cite{arapinis-csf10}, are formalized
using the notion of \emph{trace equivalence}. 
Before defining trace equivalence, 
we first introduce the notion of \emph{static
equivalence}  that compares sequences of messages. 

\begin{definition}[static equivalence] \label{def:statequiv}
Two frames $\Phi$ and $\Phi'$ are in \emph{static equivalence}, $\Phi
\statequiv \Phi'$, when we have that $\dom(\Phi) = \dom(\Phi')$,
and:
\begin{itemize}
\item $\valid(M\Phi)\;\;\Leftrightarrow \;\; \valid(M\Phi')$ for any
  term $M \in \T(\Sigma,\dom(\Phi))$; and
\item $M\Phi =_\E N\Phi \;\;\Leftrightarrow \;\; M\Phi' =_\E N\Phi'$
for any terms $M, N \in \T(\Sigma,\dom(\Phi))$ such that
$\valid(M\Phi)$ and $\valid(N\Phi)$.
\end{itemize}
\end{definition}

Intuitively, two frames are equivalent if an attacker cannot see the
difference between the two situations they represent, \ie   they satisfy the same equalities
and failures.

\begin{example}
\label{ex:static}
Consider the frame $\Phi$ given in Example~\ref{ex:semantics} and the
frame $\Phi'$ below:
\[
\Phi' \eqdef \Phi_0 \uplus \{\; w_3 \refer
\aenc{\pair{n_a}{\pk{ska'}}}{\pk{skb}}, \;\;w_4 \refer
\aenc{n_b}{\pk{skb}} \;\}.
\]
We have that $\Phi \statequiv \Phi'$.  
This is a non-trivial equivalence. 
Intuitively, it
holds since the attacker is not able to decrypt any of the
ciphertexts, and each ciphertext contains a nonce that prevents him to
build it from its components. 

Now, if we decide to give access to $n_a$ to the attacker, \ie considering 
$\Phi_+ = \Phi \uplus \{w_5 \refer n_a\}$ and $\Phi'_+ = \Phi' \uplus \{w_5
\refer n_a\}$, then the two frames $\Phi_+$ and $\Phi'_+$ are not in
static equivalence anymore as witnessed by $M = \aenc{\pair{w_5}{w_1}}{w_2}$ and
$N = w_3$. Indeed, we have that $M\Phi_+ =_{\E_\mathsf{aenc}} N\Phi_+$ whereas
${M\Phi'_+ \neq_{\E_{\mathsf{aenc}}} N\Phi'_+}$, and all these 
  witnesses are
valid.
\end{example}

\begin{definition}[trace equivalence] \label{def:concrete-equivalence}
Let~$A$ and~$B$ be two extended simple processes. We have that $A \inceint B$ if,
for every
sequence of actions $\tr$ such that $A \sintw{\;\tr} \proc{\p}{\Phi}$, there
exists $\proc{\p'}{\Phi'}$ such that $B \sintw{\;\tr}
\proc{\p'}{\Phi'}$ and $\Phi \statequiv \Phi'$. 
The processes~$A$ and~$B$ are 
\emph{trace equivalent}, denoted by $A \eint B$, if $A \inceint B$
and $B \inceint A$.
\end{definition}

\begin{example}
\label{ex:trace-equiv}
 Intuitively, the private authentication protocol presented in
 Example~\ref{ex:private} 
preserves anonymity if an attacker cannot
 distinguish whether $b$ is willing to talk to $a$ (represented by the
 process $Q(\skb,\pk{\ska})$) or willing to talk to $a'$  (represented by the
 process $Q(\skb,\pk{\ska'})$), provided $a$, $a'$ and~$b$ are honest participants.
This can be expressed relying on the following equivalence:
\[
\proc{ Q(\skb,\pk{\ska})}{\Phi_0} \stackrel{?}{\eint}
\proc{Q(\skb,\pk{\ska'})}{\Phi_0}.
\]

For illustration purposes,  we also consider a variant of the process~$Q$, 
denoted~$Q_0$, where its $\mathtt{else}$ branch has been
  replaced by $0$ (\ie the null process). We will see that the ``decoy''
 message plays a crucial role to ensure privacy.
We have that:
\[
\proc{Q_0(\skb,\pk{\ska})}{\Phi_0} \sint{\In(\cb,\aenc{\pair{w_1}{w_1}}{w_2}) \cdot \tau
  \cdot  \Out(\cb,w_3)} \proc{\varnothing}{\Phi}
\]
where $\Phi = \Phi_0 \uplus \{w_3 \refer
\aenc{\pair{\pk{ska}}{\pair{n_b}{\pk{skb}}}}{\pk{ska}}\}$.
\stef{We may note that this trace does not correspond to a normal execution
of the protocol. Still, the first input is fed with the message
$\aenc{\pair{\pk{ska}}{\pk{ska}}}{\pk{skb}}$ which is a message of
the expected format from the point of view of the
process $Q_0(skb,\pk{ska})$. Therefore, \lucca{once conditionals are positively evaluated,
the output $\Out(\cb,w_3)$ can be triggered.}}

This trace has no counterpart in $\proc{Q_0(\skb,\pk{\ska'})}{\Phi_0}$. Indeed, we have
that:
\[
\proc{Q_0(\skb,\pk{\ska'})}{\Phi_0} \sint{\In(\cb,\aenc{\pair{w_1}{w_1}}{w_2}) \cdot \tau}
\proc{\varnothing}{\Phi_0}.
\]
Hence, we
have that $\proc{Q_0(\skb,\pk{\ska})}{\Phi_0} \not\eint
\proc{Q_0(\skb,\pk{\ska'})}{\Phi_0}$.

However, it is the case that $\proc{Q(\skb,\pk{\ska})}{\Phi_0} \eint
\proc{Q(\skb,\pk{\ska'})}{\Phi_0}$.
This equivalence
can be checked using the tool \apte~\cite{APTE} within few seconds for a
simple scenario as the one considered here, and that takes few
minutes/days as soon as we want to consider~2/3 sessions of each
role.
\end{example}

\section{Compression based on grouping actions}
\label{sec:compression}

Our first refinement of the semantics,
which we call compression, is closely related to
focusing from proof theory~\cite{Andreoli92}: we will assign a polarity to
processes and constrain the shape of executed traces based on those 
polarities. This will provide a first significant reduction
of the number of traces to consider when checking equivalence-based
properties between simple processes.
Moreover, 
compression can easily be used as a replacement for the usual semantics in verification algorithms. 

\smallskip{}

The key idea is to force processes
to perform all enabled output actions as soon
as possible. In our setting, we can even safely force them
to perform a complete \emph{block} of input actions
followed by ouput actions.

\begin{example}
Consider the process $\proc{\p}{\Phi}$ with 
$\mathcal{P} = \{\In(c_1,x).P_1,\ \Out(c_2,b).P_2\}$.
In order to reach $\proc{\{P_1\{x \mapsto u\},\ P_2\}}{\Phi \cup \{w
  \refer b\}}$, we have to execute the action
$\In(c_1,x)$ (using a recipe $M$ that allows one to deduce $u$)
and the action $\Out(c_2,b)$ (giving us a label of the form $\Out(c_2,w)$). 
In case of reachability properties, the execution order of these
actions only 
matters if $M$ uses $w$.  Thus we can safely perform the outputs \stef{in priority}.

The situation is more complex when considering trace equivalence.
In that case, we are concerned not only with reachable states, but also
with \emph{how} those states are reached. Quite simply, traces matter.
Thus, if we want to discard the trace $\In(c_1,M).\Out(c_2,w)$ when studying
process $\mathcal{P}$ and consider only its permutation
$\Out(c_2,w). \In(c_1,M)$, we have to 
make sure that the same permutation is available on the other process.
The key to ensure that identical permutations will be available on both
sides of the equivalence is our restriction to the class of simple processes.
\end{example}


\subsection{Compressed semantics}
\label{subsec:comp-semantics}

We now introduce the compressed semantics. Compression is an 
optimisation, since it removes some interleavings. But it also gives rise to 
convenient ``macro-actions'', called \emph{blocks}, that combine a sequence of 
inputs followed by some outputs, potentially hiding silent actions.
Manipulating those blocks rather than indiviual actions
makes it easier to define our second optimisation.

\smallskip{}

For sake of simplicity, we consider \emph{initial} simple
processes. A simple
process $A = \proc{\p}{\Phi}$ is \emph{initial} if for any $P \in \p$,
we have that $P = 0$,
$P = \In(c,x).P'$ or $P = \Out(c,u).P'$ for some term $u$ such
that $\neg \valid(u)$.
In other words, each basic process composing $A$ starts with an input
unless it is blocked due to an unfeasible output.

\begin{example}
\label{ex:initial}
Continuing Example~\ref{ex:private}, $\proc{\{P(\ska, \pk{\skb}),Q(\skb,\pk{\ska})\}}{\Phi_0}$ is not initial. 
Instead, we may consider $\proc{\{P_\init,
  Q(\skb,\pk{\ska})\}}{\Phi_0}$ where
\[
P_\init \eqdef \In(c_A,z). \testt{z = \start}{P(\ska,\pk{\skb})}
\]
assuming that $\start$ is a (public) constant in our signature.
\end{example}

\begin{figure}[h]
  \[
  \begin{array}{lc}
\mbox{\sc In} &  \infer[\mbox{with }\ell\in\{i^*;i^+\}]
  {\proc{P}{\Phi}\sintf{\;\In(c,M) \cdot \tr\;}{\ell} \proc{P''}{\Phi''}}
  {\proc{P}{\Phi}\sint{\In(c,M)} \proc{P'}{\Phi'}
  & \proc{P'}{\Phi'}\sintf{\;\;\tr\;\;}{i^*} \proc{P''}{\Phi''}}
   \\ [2mm]
\mbox{\sc Out} & 
 \infer[\mbox{with } \ell\in\{i^*;o^*\}]
   {\proc{P}{\Phi}\sintf{\;\Out(c,w) \cdot \tr\;}{\ell} \proc{P''}{\Phi''}}
   {\proc{P}{\Phi}\sint{\Out(c,w)} \proc{P'}{\Phi'}
   & \proc{P'}{\Phi'}\sintf{\;\;\tr\;\;}{o^*} \proc{P''}{\Phi''}}
 \\ [2mm]
\mbox{\sc Tau} &
 \infer[\mbox{with }\ell\in\{o^*;i^+;i^*\}]
   {\proc{P}{\Phi} \sintf{\;\;\tr\;\;}{\ell} \proc{P''}{\Phi''}}
   {\proc{P}{\Phi} \sint{\;\;\tau\;\;} \proc{P'}{\Phi'} &
    \proc{P'}{\Phi'} \sintf{\;\;\tr\;\;}{\ell} \proc{P''}{\Phi''}}
 \\ [4mm]
\mbox{\sc Proper} &
   \infer{{\proc{0}{\Phi}\sintf{\;\;\epsilon\;\;}{o^*}
   \proc{0}{\Phi}}}{}
   \quad
   \infer{
     {\proc{\In(c,x).P}{\Phi}\sintf{\;\;\epsilon\;\;}{o^*}
     \proc{\In(c,x).P}{\Phi}}
   }{}
 \\ [2mm]
& \quad
   \infer{
     {\proc{\Out(c,u).P}{\Phi}\sintf{\;\;\epsilon\;\;}{o^*}
     \proc{\Out(c,u).P}{\Phi}}
   }{\neg \valid(u)} \\[4mm]
 \mbox{\sc Improper} &
 \infer{
   {\proc{0}{\Phi}\sintf{\;\;\epsilon\;\;}{i^*} \proc{\bot}{\Phi}}
 }{}
\quad
\infer{
{\proc{\Out(c,u).P}{\Phi}\sintf{\;\;\epsilon\;\;}{i^*} \proc{\bot}{\Phi}}
 }{\neg \valid(u)}
\end{array}
\]
  \caption{Focused semantics on extended basic processes}
  \label{fig:sintf}
\end{figure}

The main idea of the compressed semantics is to ensure
that when a basic process starts executing some actions, it actually
executes a maximal block of actions. In analogy with focusing in sequent
calculus, we say that the basic process takes the focus, and can only
release it under particular conditions.
We define in Figure~\ref{fig:sintf} how blocks can be executed by
extended basic processes. In that semantics,
the label~$\ell$ denotes the stage of the execution, starting with $i^+$, then
$i^*$ after the first input and $o^*$ after the first output.

\begin{example}
\label{ex:compressed-semantics}
Going back to Example~\ref{ex:trace-equiv}, we have that:
\[
\proc{Q_0(\skb,\pk{\ska})}{\Phi_0}
\sintf{\;\;
  \In(\cb,\aenc{\pair{w_1}{w_1}}{w_2})\cdot\Out(\cb,w_3) \;\;}{i^+}
\proc{0}{\Phi}
\]
where $\Phi$ is as given in Example~\ref{ex:trace-equiv}.
As illustrated by the proof tree below, we also have $\proc{Q_0(\skb,\pk{\ska'})}{\Phi_0}
\sintf{\;\tr\;}{i^+} \proc{\bot}{\Phi_0}$ with $\tr =
\In(\cb,\aenc{\pair{w_1}{w_1}}{w_2})$. 
\[
\begin{prooftree}
\proc{Q_0(\skb,\pk{\stef{\ska'}})}{\Phi_0} \sint{\tr}
\proc{Q'}{\Phi_0}
\begin{prooftree}
\proc{Q'}{\Phi_0} \sint{\tau} \proc{0}{\Phi_0} 
\;\;
\begin{prooftree}
\justifies
\proc{0}{\Phi_0} \sintf{\epsilon}{i^*} \proc{\bot}{\Phi_0}
\using{\mbox{\sc {Improper}} \hspace{-0.7cm}}
\end{prooftree}
\justifies
\proc{Q'}{\Phi_0} \sintf{\epsilon}{i^*} \proc{\bot}{\Phi_0}
\using{\mbox{\sc {Tau}} \hspace{-0.4cm}} 
\end{prooftree}
\justifies
\proc{Q_0(\skb,\pk{\stef{\ska'}})}{\Phi_0}
\sintf{\tr}{i^+} \proc{\bot}{\Phi_0}
\using{\mbox{\sc In}}
\end{prooftree}
\]
where $Q' \eqdef \testt{\pk{\ska} = \pk{\ska'}}{\Out(c_B,u)}$ for some
message $u$. 
\end{example}

Then we define the relation $\sintc{}$
between extended simple processes
{as the least reflexive transitive relation satisfying the
rules given in Figure~\ref{fig:sintc}}.

\begin{figure}[h]
\[
\begin{array}{llcll}
 \mbox{\sc Block} &
  \infer
  {\proc{\{Q\}\uplus \p}{\Phi}\sintc{\;\;\tr\;\;}
    \proc{\{Q'\}\uplus\p}{\Phi'}}
  {\proc{Q}{\Phi}\sintf{\;\;\tr\;\;}{i^+} \proc{Q'}{\Phi'}
    & Q'\neq\bot} 
&&
\mbox{\sc Failure} &
  \infer
  {\proc{\{Q\}\uplus\p}{\Phi}\sintc{\;\;\tr\;\;} \proc{\varnothing}{\Phi'}}
  {\proc{Q}{\Phi}\sintf{\;\;\tr\;\;}{i^+} \proc{Q'}{\Phi'} & Q' = \bot}
  \end{array}
\]

 \caption{Compressed semantics on extended simple processes}
  \label{fig:sintc}
\end{figure}

A basic process is allowed to \emph{properly} end a block execution when it 
has performed outputs and it cannot perform any more output
or unobservable action ($\tau$).
Accordingly, we call \emph{proper block} a non-empty 
sequence of inputs followed by a non-empty sequence of outputs, all on the 
same channel.
For completeness, we also allow blocks to be terminated \emph{improperly},
when the process that is executing has performed inputs but no output,
and has reached the null process~$0$ or an output which is blocked.
Accordingly, we call \emph{improper block} a non-empty
sequence of inputs on the same channel.

\begin{example}
\label{ex:witness}
Continuing Example~\ref{ex:compressed-semantics}, 
using the rule {\sc block}, we can derive
\[
\proc{\{P_\init, Q_0(\skb,\pk{\ska})\}}{\Phi_0}
\sintc{\;\;
  \In(\cb,\aenc{\pair{w_1}{w_1}}{w_2})\cdot\Out(\cb,w_3) \;\;}
\proc{P_\init}{\Phi}
\]
\stef{where $P_\init$ is defined in Example~\ref{ex:initial}.}
We can also derive 
\[
\proc{\{P_\init, Q_0(\skb,\pk{\ska'})\}}{\Phi_0}
\sintc{\;\;
  \In(\cb,\aenc{\pair{w_1}{w_1}}{w_2})\;\;}
\proc{\varnothing}{\Phi_0}
\]
 using the rule {\sc Failure}.
Note that the resulting simple process is reduced to~$\varnothing$ even though
$P_\init$ has never been executed.
\label{ex:running-compre}
\end{example}

At first sight, killing the whole process when applying the rule {\sc Failure} may seem too strong. 
However, even if this kind of scenario is observable by the attacker, it does not bring
him any new knowledge, hence it plays only a limited role
in trace  equivalence: 
it is in fact sufficient to consider such improper blocks
only at the end of traces.

 \begin{example}
 \label{ex:dirty-block}
 Consider $\p = \{\; \In(c,x).\In(c,y),\; \In(c',x') \;\}$.
 Its compressed traces are of the form $\In(c,M).\In(c,N)$ and $\In(c',M')$.
 The concatenation of those two improper traces cannot be executed in 
 the compressed semantics. Intuitively, we do not loose anything for
 trace equivalence, because if a process can exhibit those two improper
 blocks they must be in parallel and hence considering their combination is 
 redundant.
 \end{example}

We now define the notions of \emph{compressed trace equivalence} (denoted
$\eintc$) and \emph{compressed trace inclusion} (denoted $\inceintc$),
similarly to $\eint$ and $\sqsubseteq$ but
relying on $\sintc{}$ instead of $\sintw{}$.

\begin{definition}[compressed trace equivalence]
\label{def:compressed-equivalence}
Let~$A$ and~$B$ be two extended simple processes. We have that $A \inceintc B$ if,
for every
sequence of actions $\tr$ such that $A \sintc{\;\tr} \proc{\p}{\Phi}$, there
exists $\proc{\p'}{\Phi'}$ such that $B \sintc{\;\tr}
\proc{\p'}{\Phi'}$ and $\Phi \statequiv \Phi'$. 
The processes~$A$ and~$B$ are 
\emph{compressed trace equivalent}, denoted by $A \eintc B$, if $A 
\inceintc B$
and $B \inceintc A$.
\end{definition}

\begin{example}
We have that $\proc{\{P_\init,Q_0(\skb,\pk{\ska})\}}{\Phi_0} \not\eintc
\proc{\{P_\init,Q_0(\skb,\pk{\ska'})\}}{\Phi_0}$.
The trace $\In(\cb,\aenc{\pair{w_1}{w_1}}{w_2})\cdot\Out(\cb,w_3)$ exhibited in
Example~\ref{ex:witness} is executable from
$\proc{\{P_\init,Q_0(\skb,\pk{\ska})\}}{\Phi_0}$. However, this trace has no
counterpart when starting with
$\proc{\{P_\init,Q_0(\skb,\pk{\ska'})\}}{\Phi_0}$.
\end{example}

\subsection{Soundness and completeness}
\label{subsec:soundness-completeness}

 We shall now establish soundness and
 completeness of the compressed semantics. More precisely, we show that
 the two relations~$\eint$ and~$\eintc$ coincide on initial simple
 processes (Theorem~\ref{theo:comp-soundness-completeness}). \stef{All the
 proofs of this section are given in Appendix~\ref{app:compression}.}

Intuitively, we can always permute output  (resp.~input) actions occurring
on distinct channels, and we can also permute an output
with an input if the outputted message is not used to build the inputted
term.
More formally, 
we define an \emph{independence relation~$\Ind_a$ over actions} as
the least symmetric relation satisfying:
\begin{itemize}
\item $\Out(c_i,w_i) \Ind_a \Out(c_j,w_j)$ and $\In(c_i,M_i) \Ind_a
  \In(c_j,M_j)$ as soon as $c_i \neq c_j$, 
\item $\Out(c_i,w_i) \Ind_a \In(c_j,M_j)$ when in addition $w_i
  \not\in \fv(M_j)$.
\end{itemize}
Then, we consider ${=}_{\Ind_a}$ to be the least
congruence (w.r.t. concatenation) satisfying:
\begin{center}
 $\act\cdot\act' =_{\Ind_a} \act'\cdot \act$
for all $\act$ and $\act'$ with $\act \Ind_a \act'$, 
\end{center}
and we show that
processes are equally able to execute equivalent (w.r.t. $=_{\Ind_a}$) traces.

\begin{restatable}{lemma}{lempermute}
\label{lem:permute}
Let $A$, $A'$ be two extended simple processes and $\tr$, $\tr'$ be such that
$\tr =_{\Ind_a} \tr'$. We have that $A \sintw{\;\tr} A'$ if, and only if, $A
\sintw{\;\tr'} A'$. 
\end{restatable}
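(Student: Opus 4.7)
The relation $=_{\Ind_a}$ is the congruence generated by elementary swaps $\alpha\cdot\alpha' \mapsto \alpha'\cdot\alpha$ with $\alpha\Ind_a\alpha'$. Hence it suffices to prove the statement when $\tr$ and $\tr'$ differ by a single such swap, and the general case follows by induction on the length of the derivation witnessing $\tr =_{\Ind_a} \tr'$. Moreover, since both implications are symmetric (the independence relation is symmetric), it is enough to prove one direction. So the core task is: assuming $A \sintw{u\cdot \alpha\cdot \alpha'\cdot v} A'$ with $\alpha \Ind_a \alpha'$, show $A \sintw{u\cdot \alpha'\cdot \alpha\cdot v} A'$.

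Splitting the execution, we obtain $A \sintw{u} B$ and $C \sintw{v} A'$, and we are reduced to the \emph{local} statement: if $B \sintw{\alpha\cdot\alpha'} C$ with $\alpha \Ind_a \alpha'$, then $B \sintw{\alpha'\cdot\alpha} C$. At the level of the relation $\sint{}$, this execution is a concrete sequence of elementary steps interleaving the two observable actions with some $\tau$-steps. Because $A$ is a simple process and $\alpha,\alpha'$ mention different channels $c\neq c'$ (this is enforced by every clause defining $\Ind_a$), the $\tau$-steps can be classified by the channel of the basic process that fires them. Each basic process lives on its own channel, and the rules of Figure~\ref{fig:concrete-semantics} only modify the component whose action is fired (and, for {\sc Out}, extend the frame). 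Thus $\tau$-steps on channel $c$ commute with every step on channel $c'$, and conversely.

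It therefore remains to handle the three possible shapes of the elementary swap between $\alpha$ and $\alpha'$. In the case of two outputs on different channels, each simply extracts a term from its own basic process and appends it to the frame under a fresh handle; swapping the order only renames the fresh handles, but the definition of $\Ind_a$ requires the two handles $w_i, w_j$ to already be part of the trace, so the resulting frame is unchanged up to the chosen handles and the successor state $C$ is reached identically. Two inputs on different channels commute trivially because each rewrites its own component and neither modifies the frame. For an output $\Out(c,w)$ followed by an input $\In(c',M)$ on a different channel with $w \notin \fv(M)$, the recipe $M$ is built entirely from handles that already belonged to the frame \emph{before} the output, so firing the input first is well-defined (validity and the equation $M\Phi =_\E u$ are preserved since $\Phi$ is a restriction of the larger frame on the relevant handles), and firing the output afterwards is unaffected because the input does not consume the output's component.

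The main obstacle is precisely this last case: one must verify carefully that the side-conditions of the {\sc In} rule (namely that the recipe lies in $\T(\Sigma,\dom(\Phi))$, that $\valid(M\Phi)$ and $\valid(u)$ hold, and that $M\Phi =_\E u$) still go through in the smaller frame before the output is performed. This is exactly what the condition $w \notin \fv(M)$ in the definition of $\Ind_a$ delivers: it ensures that every handle occurring in $M$ is already bound in the prefix frame, so that the input recipe remains meaningful and yields the same term $u$. Once this point is established, the diamond between the two elementary steps is straightforward, and composing it with the commutation of $\tau$-steps across channels concludes the local swap and hence the lemma.
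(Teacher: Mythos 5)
Your proof is correct and follows essentially the same route as the paper's: reduce to a single elementary swap, commute $\tau$-steps across channels using the channel-locality of basic processes in a simple process, and then check the three diamond cases, with the output/input case resting on the observation that $w\notin\fv(M)$ forces $\fv(M)\subseteq\dom(\Phi)$ so the input's side-conditions survive in the smaller frame. The only cosmetic difference is that the paper explicitly mentions (and then dismisses as easier) the fourth ordered case, input-then-output, which your symmetry remark and enlarged-frame observation cover implicitly.
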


Now, considering traces that are only made of proper blocks, a
strong relationship can be established between the two semantics.

\begin{restatable}{proposition}{procompsoundness}
\label{pro:comp-soundness}
Let $A$, $A'$ be two simple extended processes, and $\tr$ be a trace
made of proper blocks  such that $A \sintc{\;\tr\;} A'$. Then we have
that  $A \sintw{\;\tr} A'$.
\end{restatable}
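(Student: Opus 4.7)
The plan is to proceed by induction on the number of proper blocks composing $\tr$. The base case $\tr = \epsilon$ is trivial, since both $\sintc{}$ and $\sintw{}$ are reflexive. For the inductive step, I would decompose $\tr = \tr' \cdot \tr_b$, where $\tr_b$ is the last proper block. Since $\tr_b$ is proper (hence contains at least one output), the last rule in the compressed derivation cannot be {\sc Failure} (which produces only inputs and empties the multiset); it must be {\sc Block}. Thus the derivation factors as
\[
A \sintc{\;\tr'\;} \proc{\{Q\} \uplus \mathcal{R}}{\Phi} \sintc{\;\tr_b\;} \proc{\{Q'\} \uplus \mathcal{R}}{\Phi'} = A',
\]
with premise $\proc{Q}{\Phi} \sintf{\;\tr_b\;}{i^+} \proc{Q'}{\Phi'}$ and $Q' \neq \bot$. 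Applying the induction hypothesis to $\tr'$ gives $A \sintw{\;\tr'\;} \proc{\{Q\} \uplus \mathcal{R}}{\Phi}$.

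It then suffices to prove the following key sublemma: if $\proc{Q}{\Phi} \sintf{\;\tr_b\;}{\ell} \proc{Q'}{\Phi'}$ for some $\ell \in \{i^+, i^*, o^*\}$ and $Q' \neq \bot$, then $\proc{Q}{\Phi} \sintw{\;\tr_b\;} \proc{Q'}{\Phi'}$. This is shown by a straightforward induction on the derivation in the focused semantics of Figure~\ref{fig:sintf}. Each of the rules {\sc In}, {\sc Out}, {\sc Tau} combines one concrete step from $\sint{}$ with a recursive focused premise, so the sequence of labels collected along the derivation is exactly a witness that the same $\tr_b$ can be executed in the concrete semantics (with intermediate $\tau$'s erased by the definition of $\sintw{}$). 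The {\sc Proper} base cases simply yield the empty trace, which is reflexively derivable in $\sintw{}$; the side-condition $\neg \valid(u)$ on a blocked output plays no role here because no corresponding action is emitted. Crucially, since $Q' \neq \bot$, the {\sc Improper} base cases are not used in this derivation.

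To lift the sublemma from extended basic to extended simple processes, I would observe that all the actions in $\tr_b$ take place on the channel $c$ of $Q$, so the other basic processes in $\mathcal{R}$ (which live on pairwise distinct channels by definition of simple processes) remain idle and the same sequence of concrete steps can be performed inside the larger multiset:
\[
\proc{\{Q\} \uplus \mathcal{R}}{\Phi} \sintw{\;\tr_b\;} \proc{\{Q'\} \uplus \mathcal{R}}{\Phi'}.
\]
Concatenating with the trace obtained from the induction hypothesis yields $A \sintw{\;\tr' \cdot \tr_b\;} A'$, as required.

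The main obstacle, which is really just careful bookkeeping rather than a genuine difficulty, is ensuring that the side-conditions added by the focused rules (polarity labels, validity checks, the $Q' \neq \bot$ restriction) are all compatible with the concrete semantics; none of them weaken the preconditions of the concrete rules, so every focused step projects soundly to a concrete step. Lemma~\ref{lem:permute} is not needed in this direction, because compression only restricts the set of available traces and never reorders them: we are showing soundness of $\sintc{}$ with respect to $\sintw{}$, not completeness.
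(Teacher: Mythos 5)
Your proposal is correct and takes essentially the same route as the paper, which disposes of this proposition in one sentence by observing that $\sintc{}$ is included in $\sintw{}$ on traces made of proper blocks because {\sc Failure} cannot arise for them. Your block-by-block induction, the sublemma on the focused semantics, and the lifting to the ambient multiset are exactly the details that the paper leaves implicit.
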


Actually, the result stated in Proposition~\ref{pro:comp-soundness} 
immediately follows from
  the observation that $\sintc{}$ is included in $\sintw{}$
  for traces made of proper blocks since for them
  {\sc Failure} cannot arise.

\begin{restatable}{proposition}{procompcompleteness}
\label{pro:comp-completeness}
Let $A$, $A'$ be two initial simple processes, and $\tr$ be a trace
made of proper blocks such that
 $A \sintw{\;\tr} A'$.
Then, we have that $A \sintc{\;\tr\;} A'$.
\end{restatable}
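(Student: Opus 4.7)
The plan is to proceed by induction on the number of proper blocks composing $\tr$. If $\tr = \epsilon$ the result is immediate by reflexivity of $\sintc{}$. Otherwise, write $\tr = b \cdot \tr''$ where $b$ is the first proper block, performed on some channel $c$, and unfold the hypothesis as a concrete derivation $A \sint{a_1 \cdot \ldots \cdot a_n} A'$ whose $\tau$-erasure equals $\tr$.

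The central idea is to extract, from the sequence $a_1 \cdot \ldots \cdot a_n$, a contiguous sub-sequence $\sigma_b$ containing exactly the actions performed by the basic process $P_c$ of $A$ that contribute to $b$, together with the trailing $\tau$ actions of $P_c$ that must be consumed before it reaches a state compatible with the side condition of rule {\sc Proper} (null, input-prefixed, or blocked on an invalid output). To bring $\sigma_b$ to the front, I would use a mild strengthening of Lemma~\ref{lem:permute} to concrete traces: because the process is simple, every observable or $\tau$ action of a basic process on a channel $c' \neq c$ is independent from every action of $P_c$, since $\tau$ actions only update the state of their own basic process and leave the frame untouched. The rearrangement preserves the final state $A'$, and because $A$ is initial, $P_c$ starts with an input so no $\tau$ of $P_c$ precedes $\sigma_b$.

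Once $\sigma_b$ forms the prefix, a routine induction along $\sigma_b$ gives $\proc{P_c}{\Phi} \sintf{b}{i^+} \proc{P_c'}{\Phi_1}$: the first input of $b$ is consumed by {\sc In} at stage $i^+$, subsequent inputs keep us at $i^*$, the first output moves us to $o^*$ via {\sc Out}, the intermediate $\tau$s are absorbed by {\sc Tau} at the appropriate stage, and by construction the residual $P_c'$ is null, input-prefixed, or blocked on an invalid output so that {\sc Proper} closes the derivation. Since $P_c' \neq \bot$, rule {\sc Block} yields $A \sintc{b} \proc{\{P_c'\} \cup \p''}{\Phi_1}$, where $\p'' = \p \setminus \{P_c\}$ is unaffected by $\sigma_b$. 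The resulting extended simple process is still initial-shaped and its residual concrete execution realises $\tr''$; applying the induction hypothesis and composing with the {\sc Block} step gives $A \sintc{\tr} A'$.

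The main obstacle lies in the permutation step: Lemma~\ref{lem:permute} is stated at the observable level, so one must either reprove it at the concrete level with commutations for $\tau$ actions, or first permute the observable prefix of $b$ to the front via the existing lemma and then reinsert the $\tau$s on channel $c$ using the fact that $\tau$s of other basic processes commute trivially with any action of $P_c$. Both routes are routine given the structure of simple processes, but this is where the essential work of the proof takes place. The other points—choosing $\sigma_b$ so that the side condition of {\sc Proper} is met, and observing that the intermediate state remains initial-shaped so that the induction can continue—follow directly from the termination conditions built into {\sc Proper}.
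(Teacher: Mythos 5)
Your proof is correct and follows essentially the same route as the paper's: induction on the number of proper blocks, isolating the first block by pulling the relevant $\tau$ actions of $P_c$ forward (the observable actions already occur in block order, so only $\tau$ actions ever need to be permuted, and these commute freely with the actions of other basic processes), deriving the block in the focused semantics so that \textsc{Proper} and then \textsc{Block} apply, and recursing on the still-initial intermediate process --- the paper merely packages your ``routine induction along $\sigma_b$'' as three layered sub-lemmas for the $o^*$, $i^*$ and $i^+$ stages and handles the $\tau$-reordering with the same level of informality. One small caution: your blanket claim that every observable action on a channel $c'\neq c$ is independent of every action of $P_c$ is false as stated (an input on $c'$ may use a handle produced by an output on $c$), but this does no harm here because the rearrangement you actually need only ever swaps $\tau$ actions of $P_c$ against actions of other basic processes.
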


This result is more involved and relies on the additional hypothesis
that $A$ and $A'$ have to be initial to ensure that no {\sc Failure} will
arise.

\begin{restatable}{theorem}{theocompsoundnesscompleteness}
\label{theo:comp-soundness-completeness}
Let $A$ and $B$ be two initial simple processes. We have that
\[
A \eint B \;\; \Longleftrightarrow\;\; A \eintc B.
\]
\end{restatable}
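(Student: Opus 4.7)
The plan is to derive the theorem from Propositions~\ref{pro:comp-soundness} and~\ref{pro:comp-completeness}, which relate $\sintc{}$ and $\sintw{}$ on traces made of proper blocks. By symmetry of the equivalences, it suffices to establish that the inclusions $\inceint$ and $\inceintc$ coincide on every pair of initial simple processes.

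For the direction $\inceint \Rightarrow \inceintc$, I would start from a compressed trace $A \sintc{\tr} C$ and decompose $\tr = \tr_0 \cdot \tr_1$, where $\tr_0$ is a concatenation of proper blocks and $\tr_1$ is either empty or a single trailing improper block. Proposition~\ref{pro:comp-soundness} handles $\tr_0$; the improper block in $\sintc$ can be simulated in $\sintw{}$ by performing its inputs and driving the involved basic process to a null or blocked-output state, so altogether $A \sintw{\tr} A_1$ with the same frame as $C$. Using $A \inceint B$ we obtain $B \sintw{\tr} B_1$ with statically equivalent frames, and we factor this as $B \sintw{\tr_0} B_0 \sintw{\tr_1} B_1$. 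Since proper blocks preserve initiality, $B_0$ is initial and Proposition~\ref{pro:comp-completeness} gives $B \sintc{\tr_0} B_0$. When $\tr_1$ is non-empty, it remains to apply the \textsc{Failure} rule from $B_0$: the basic process of $B_0$ on the channel of $\tr_1$ must become null or blocked at an invalid output after consuming those inputs (up to $\tau$-reductions), for otherwise it could perform a further visible action that $A$ cannot mirror (since $A$'s basic process on that same channel is itself null or blocked), contradicting $A \inceint B$.

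For the reverse direction $\inceintc \Rightarrow \inceint$, I would start from $A \sintw{\tr} A'$ and extend $\tr$ to a block-shaped trace $\tr \cdot \tr_c$ by driving each basic process of $A'$ that is mid-block to the end of its current block, either reaching an output phase and closing a proper block, or being driven to a null or blocked-output state and closing an improper block. Such a closing extension always exists because basic processes are finite and hence terminating. Proposition~\ref{pro:comp-completeness} then yields $A \sintc{\tr \cdot \tr_c} A^*$; the hypothesis $A \inceintc B$ gives $B \sintc{\tr \cdot \tr_c} B^*$ with $\Phi_{A^*} \statequiv \Phi_{B^*}$; and Proposition~\ref{pro:comp-soundness} translates this back to a concrete execution $B \sintw{\tr \cdot \tr_c} B^{**}$. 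Restricting the $B$-execution to the prefix corresponding to $\tr$ yields $B \sintw{\tr} B'$ whose frame is a sub-frame of $\Phi_{B^{**}}$; since static equivalence restricts to common sub-domains, $\Phi_{A'} \statequiv \Phi_{B'}$, establishing $A \inceint B$.

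The main obstacle will be the handling of trailing improper blocks in both directions. In the forward direction, deducing that $B$ is stuck in the same way as $A$ rests on a careful use of trace equivalence against a hypothetical additional action of $B$, together with the $\tau$-closure of its state. In the reverse direction, the extension $\tr_c$ must be defined so as to reliably close every open block, and the transfer lemmas must be applied (or slightly extended) to also cover a final improper block, which is precisely where the \textsc{Failure} rule interacts with static equivalence.
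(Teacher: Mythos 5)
Your forward direction follows the paper's argument, but one step is asserted too quickly: after factoring $B \sintw{\tr_0} B_0$ you claim $B_0$ is initial ``since proper blocks preserve initiality''. In the concrete semantics nothing forces $B$ to stop at a block boundary, so $B_0$ could still have a valid output pending on one of the channels of $\tr_0$; initiality of $B_0$ must instead be derived from $A \inceint B$ (a block that is maximal for $A$ must be maximal for $B$, otherwise $B$'s extra output has no counterpart in $A$), exactly as you argue for the improper tail. This is recoverable, so I would not count it as a real gap.

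The reverse direction does contain a genuine gap. You apply Proposition~\ref{pro:comp-completeness} directly to $\tr\cdot\tr_c$, but that proposition only covers traces that are literally concatenations of proper blocks, and two things go wrong. First, $\tr$ is an arbitrary interleaving, so even after completion the actions of distinct basic processes remain interleaved inside what should be atomic blocks; you must first permute $\tr\cdot\tr_c$ into block form using the independence relation $=_{\Ind_a}$ and Lemma~\ref{lem:permute}, and permute back at the end to recover the original trace $\tr$ on the $B$ side. Second, and more seriously, the completed trace may end with \emph{several} improper blocks on distinct channels, and such a trace is simply not executable in the compressed semantics: the \textsc{Failure} rule reduces the whole process to $\varnothing$, so at most one improper block can occur, and only at the very end (cf.\ Example~\ref{ex:dirty-block}). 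Hence there is in general no $A^*$ with $A \sintc{\tr\cdot\tr_c} A^*$, and likewise no single compressed trace of $B$ to transfer back. The paper circumvents this by splitting the reordered completion as $\tr_{io}\cdot\tr_i$ with $\tr_{io}$ proper, transferring $\tr_{io}$ via Proposition~\ref{pro:comp-completeness}, invoking $A \inceintc B$ separately for $\tr_{io}$ extended by \emph{each} improper block, and only then recombining the resulting executions of $B$ in the concrete semantics, where the improper blocks can coexist because they live on pairwise distinct channels. Your closing remark about ``a final improper block'' (singular) suggests you anticipated only the single-block case; the multi-block case is where the argument as written breaks.
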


\begin{proof}[Proof sketch, details in Appendix~\ref{app:compression}]
The main difficulty 
is that Proposition~\ref{pro:comp-completeness} only considers traces
composed of proper blocks whereas we have to consider all traces.
To prove the $\Rightarrow$ implication, we
have to pay attention to the last block of the compressed trace that
can be an improper one (composed of several inputs on a channel $c$). 
The $\Leftarrow$ implication is more difficult since we have to
consider a trace $\tr$ of a process~$A$
that is an interleaving of some prefix of proper and
improper blocks. We will first complete it with $\tr^+$ to obtain an interleaving of
proper and improper blocks. We then reorder the 
actions to obtain a trace $\tr'$ such that $\tr \cdot \tr^+=_{\Ind_a} \tr'$ and
$\tr' = \tr_{\mathrm{io}} \cdot \tr_{\mathrm{in}}$ where $\tr_{\mathrm{io}}$
is made of proper blocks while $\tr_{\mathrm{in}}$ is made of improper blocks.
For each improper block~$b$ of $\tr_{\mathrm{in}}$, we show by applying
Lemma~\ref{lem:permute} and Proposition~\ref{pro:comp-completeness}
that $A$ is able to perform $\tr_{\mathrm{io}}$ in the compressed
semantics and the resulting extended process can execute the improper block $b$.
We thus have that $A$ is able to perform 
$\tr_{\mathrm{io}} \cdot b$ in the compressed semantics and thus $B$ as well.
Finally, we show that the executions of all those (concurrent) blocks~$b$
can be put together, obtaining that~$B$ can perform~$\tr'$, and thus
$\tr$ as well.
\end{proof}

Note that, as illustrated by the following example, the two underlying
notions of trace 
inclusion do \emph{not} coincide.

\begin{example}
Let $\p = \{\In(c,x)\}$ and $\q = \{\In(c,x).\Out(c,n)\}$ accompanied
with an arbitrary frame $\Phi$. 
We have
${\proc{\p}{\Phi} \inceint \proc{\q}{\Phi}}$ but
$\proc{\p}{\Phi} \not\inceintc \proc{\q}{\Phi}$
since in the compressed semantics~$\proc{\q}{\Phi}$ is not
allowed to stop its execution after its first input. 
\end{example}


\section{Deciding trace equivalence via constraint solving}
\label{sec:constraint-solving}

In this section, we propose a symbolic semantics for our compressed
semantics following, \eg \cite{MS02,baudet-ccs2005}. 
Such a semantics avoids
potentially infinite branching of our compressed semantics
due to inputs from the environment. 
Correctness is maintained by associating with each process a set of constraints on terms.

\subsection{Constraint systems}
\label{subsec:constraint-systems}

Following the notations of~\cite{baudet-ccs2005}, we consider a new set $\X^2$ of 
\emph{second-order variables}, denoted by $X$, $Y$, etc.
 We shall use those variables to abstract over recipes.
We denote by $\fvs(o)$ the set of free second-order variables of an object
$o$, typically a constraint system. To prevent ambiguities, we shall
use $\fvp$ instead of $\fv$ for free first-order variables.

\begin{definition}[constraint system]\label{def:cs}
 A \emph{constraint system} $\C = \cs{\Phi}{\Set}$ consists of a frame
 $\Phi$, and a set of
  constraints $\Set$.
 We consider three kinds of \emph{constraints}:
\[
\dedi{D}{X}{x} \quad\quad u \eqi v \quad\quad u \neqi v
\]

\noindent  where $D\subseteq\W$, $X\in\X^2$, $x\in\X$ and
$u,v\in\T(\Sigma, \N\cup\X)$.
\end{definition}

The first kind of constraint expresses that a second-order variable~$X$
has to be instantiated by a recipe that uses only 
variables from a certain set~$D$, and
that the obtained term should be $x$. The handles in~$D$
 represent terms that have been previously outputted by the process.

\medskip{}

We are not interested in general constraint systems, but only
consider constraint systems that are \emph{well-formed}.
Given a constraint system~$\C$, we define a dependency order on
first-order variables in ${\fvp(\C) \cap \X}$
  by declaring that $x$ depends on $y$ if, and only if, $\Set$ contains a
  deduction constraint $\dedi{D}{X}{x}$ with $y\in\fvp(\Phi(D))$.
  A constraint system $\C$ is \emph{well-formed} if:
\begin{itemize} 
\item the
  dependency relationship is acyclic, and
\item  for every
  $x\in\fvp(\C) \cap \X$ (resp.~$X\in\fvs(\C)$) there is a unique constraint
  $\dedi{D}{X}{x}$ in~$\Set$.
\end{itemize}
  For $X\in\fvs(\C)$, we write $D_\C(X)$ for the domain $D\subseteq\W$
  of the deduction constraint $\dedi{D}{X}{x}$ associated to~$X$
  in~$\C$.

\begin{example}
\label{ex:cs}
\stef{Continuing Example~\ref{ex:private}}, let $\Phi = \Phi_0 \uplus \{w_3 \refer
\aenc{\pair{\projr{N}}{\pair{n_b}{\pk{skb}}}}{\pk{ska}}\}$ with $N =
\adec{y}{skb}$, and $\Set$ be a set containing two constraints:
\[
\dedi{\{w_0,w_1,w_2\}}{Y}{y} \mbox{ and } \projr{N} \eqi
\pk{ska}.
\]

\noindent We have that $\C = \cs{\Phi}{\Set}$ is a well-formed constraint system.
There is only one first-order variable $y \in \fvp(\C) \cap \X$, and it does
not occur in $\fvp(\Phi(\{w_0,w_1,w_2\}))$, which is empty. Moreover,
there is indeed a unique constraint that introduces~$y$.
\end{example}

Our notion of well-formed constraint systems is in line with what is used,
\eg in~\cite{MS02,baudet-ccs2005}.
We use a simpler variant here that is sufficient for our purpose. 

\begin{definition}[solution] \label{def:cssol}
A \emph{solution} of a constraint system $\C = \cs{\Phi}{\Set}$ is a
substitution~$\theta$ such that
$\dom(\theta) = \fvs(\C)$, and
$X\theta \in \T(\Sigma, D_\C(X))$ for any $X \in \dom(\theta)$.
Moreover, we require that there exists a ground substitution $\lambda$
with $\dom(\lambda) = \fvp(\C)$ such that:
\begin{itemize}
  \item for every $\dedi{D}{X}{x}$ in $\Set$, we have
  $(X\theta)(\Phi\lambda) =_\E x\lambda$,
 $\valid((X\theta)(\Phi\lambda))$, and $\valid(x\lambda)$;
\item for every $u \eqi v$ in $\Set$, we have $u\lambda =_\E
  v\lambda$, $\valid(u\lambda)$, and $\valid(v\lambda)$; and
\item for every $u \neqi v$ in $\Set$, we have $u\lambda \neq_\E
  v\lambda$,  or $\neg \valid(u\lambda)$, or $\neg \valid(v\lambda)$.
\end{itemize}
Moreover, we require that all the terms occurring in $\Phi\lambda$ are
valid.
The set of solutions of a constraint system $\C$ is denoted $\Sol(\C$).
Since we consider constraint systems that are well-formed,
the substitution $\lambda$ is unique modulo $\E$ given 
$\theta\in\Sol(\C)$. We denote it by $\lambda_\theta$ when $\C$ is
clear from the context.
\end{definition}

\lucca{Note that the validity constraints in the notion of solution of symbolic processes
reflect the validity constraints of the concrete semantics (\ie outputted and inputted terms must be valid and the equality between
terms requires the two terms to be valid).}
Since we consider well-formed constraint systems, we may note that the
substitution $\lambda$ above is not obtained through unification. This
substitution is entirely determined
(modulo $\E$) from~$\theta$ by considering the deducibility
constraints only.

\begin{example}
\label{ex:solution}
Consider again the constraint system $\C$ given in
Example~\ref{ex:cs}. We have that $\theta = \{Y \mapsto
\aenc{\pair{w_1}{w_1}}{w_2}\}$ is a solution of $\C$. Its
associated first-order solution is $\lambda_\theta = \{y \mapsto
\aenc{\pair{\pk{ska}}{\pk{ska}}}{\pk{skb}}\}$. 
\end{example}


\subsection{Symbolic processes: syntax and semantics}
\label{subsec:symbolic-calculus}

Given an extended simple process $\proc{\p}{\Phi}$, 
we compute the constraint systems capturing its possible executions,
starting from the symbolic process $\triple{\p}{\Phi}{\varnothing}$.
Note that we are now manipulating processes that are not ground
anymore, but may contain free variables.

\begin{definition}[symbolic process] \label{def:symbolic-process}
A symbolic process is a tuple $\triple{\p}{\Phi}{\Set}$ where
$\cs{\Phi}{\Set}$ is a constraint system and
$\fvp(\p)\subseteq(\fvp(\Set) \cap \X)$.
\end{definition}

\begin{figure}[h]

\[\begin{array}{lcl}
\mbox{\sc In} &\;\;\;\;&
\triple{\In(c,y).P}{\Phi}{\Set} \ssym{\In(c,X)}
\triple{P\{y\mapsto x\}}{\Phi}{\Set \cup \{ \dedi{\dom(\Phi)}{X}{x} \}}\\
&& \hspace{1cm}\hfill\mbox{where $X$ (resp. $x$) is a  fresh second-order
  (resp. first-order) variable} 
\\[1mm]
\mbox{\sc Out} &&
\triple{\Out(c,u).P}{\Phi}{\Set} \ssym{\Out(c,w)}
  \triple{P}{\Phi\cup\{w\refer u\}}{\Set}\\
&&\hfill\mbox{where $w$ is a fresh first-order variable}
\\[1mm]
\mbox{\sc Then} &&
\triple{\test{u=v}{P}{Q}}{\Phi}{\Set} \ssym{\;\;\tau\;\;}
\triple{P}{\Phi}{\Set \cup \{ u \eqi v \}}
\\[1mm]
\mbox{\sc Else} &&
\triple{\test{u=v}{P}{Q}}{\Phi}{\Set} \ssym{\;\;\tau\;\;}
\triple{Q}{\Phi}{\Set \cup \{ u \neqi v \}}
\end{array}
\]

 \caption{Symbolic semantics for symbolic basic processes}
\label{figure:symbolic-sem}
\end{figure}

We give in Figure~\ref{figure:symbolic-sem} 
a standard symbolic semantics for 
 symbolic basic processes.
From this semantics given on symbolic basic processes only,
  we derive a semantics on simple symbolic processes in a natural
  way:
\[
 \infer[]
  {\triple{\{P\} \uplus \p}{\Phi}{\Set}\ssym{\alpha} \triple{\{P'\} \uplus\p}{\Phi'}{\Set'}}
  {\triple{P}{\Phi}{\Set}\ssym{\alpha} \triple{P'}{\Phi'}{\Set'}}
\]

We can also derive 
our compressed symbolic 
semantics $\ssymc{\;\tr\;}$ following 
the same pattern as for the concrete semantics (see Figure~\ref{figure:compressed-symbolic-sem}).
We consider interleavings that execute maximal blocks of
actions, and we allow improper termination of a block only at
the end of a trace.
\stef{Note that the $\lnot valid(u)$ conditions of the third {\sc Proper} rule and the
  second {\sc Improper} rule are replaced by $u\neq^? u$ constraints in their 
symbolic counterparts.}

\begin{figure}[h]
\[
  \begin{array}{lc}
\mbox{\sc In} &  \infer[\mbox{with }\ell\in\{i^*;i^+\}]
  {\triple{P}{\Phi}{\Set}\ssymf{\In(c,X). \tr\;}{\ell} \triple{P''}{\Phi''}{\Set''}}
  {\triple{P}{\Phi}{\Set}\ssym{\In(c,X)} \triple{P'}{\Phi'}{\Set'}
  & \triple{P'}{\Phi'}{\Set'}\ssymf{\;\tr\;\;}{i^*} \triple{P''}{\Phi''}{\Set''}}
   \\ [4mm]
\mbox{\sc Out} & 
 \infer[\mbox{with } \ell\in\{i^*;o^*\}]
   {\triple{P}{\Phi}{\Set}\ssymf{\Out(c,w). \tr\;}{\ell} 
   \triple{P''}{\Phi''}{\Set''}}
   {\triple{P}{\Phi}{\Set}\ssym{\Out(c,w)} \triple{P'}{\Phi'}{\Set'}
   & \triple{P'}{\Phi'}{\Set'}\ssymf{\;\;\tr\;\;}{o^*} \triple{P''}{\Phi''}{\Set''}}
 \\ [4mm]
\mbox{\sc Tau} &
 \infer[\mbox{with }\ell\in\{o^*;i^+;i^*\}]
 {\triple{P}{\Phi}{\Set} \ssymf{\;\;\tr\;\;}{\ell} \triple{P''}{\Phi''}{\Set''}}
 {\triple{P}{\Phi}{\Set} \ssym{\tau} \triple{P'}{\Phi'}{\Set'} &
    \triple{P'}{\Phi'}{\Set'} \ssymf{\;\;\tr\;\;}{\ell} \triple{P''}{\Phi''}{\Set''}}
 \\ [4mm]
\mbox{\sc Proper} &
\infer{
   {\triple{0}{\Phi}{\Set}\ssymf{\;\;\epsilon\;\;}{o^*}
     \triple{0}{\Phi}{\Set}}
   }{}
\quad
\infer{
   {\triple{\In(c,x).P}{\Phi}{\Set}\ssymf{\;\;\epsilon\;\;}{o^*}
     \triple{\In(c,x).P}{\Phi}{\Set}}
   }{}\\[4mm]
& \quad
   \infer{
     {\triple{\Out(c,u).P}{\Phi}{\Set}\ssymf{\;\;\epsilon\;\;}{o^*}
     \triple{\Out(c,u).P}{\Phi}{\Set \cup \{u\neqi u\}}}
   }{} 
\\ [7mm]
\mbox{\sc Improper} &
\infer{
   {\triple{0}{\Phi}{\Set}\ssymf{\;\;\epsilon\;\;}{i^*} \triple{\bot}{\Phi}{\Set}}
 }{}
\quad
\infer{
{\triple{\Out(c,u).P}{\Phi}{\Set}\ssymf{\;\;\epsilon\;\;}{i^*}
                      \triple{\bot}{\Phi}{\Set \cup \{u \neqi u\}}}}
{}
\end{array}
\]
\bigskip{}
\[
\begin{array}{lcl}
 \mbox{\sc Block} &&\mbox{\sc Failure}\\
  \infer
  {\triple{\{Q\}\uplus \p}{\Phi}{\Set}\ssymc{\;\;\tr\;\;}
    \triple{\{Q'\}\uplus\p}{\Phi'}{\Set'}}
  {\triple{Q}{\Phi}{\Set}\ssymf{\;\;\tr\;\;}{i^+} \triple{Q'}{\Phi'}{\Set'}
    & Q'\neq\bot} 
&\;\;\;\;\;&
  \infer
  {\triple{\{Q\}\uplus\p}{\Phi}{\Set}\ssymc{\;\;\tr\;\;} 
  \triple{\varnothing}{\Phi'}{\Set'}}
  {\triple{Q}{\Phi}{\Set}\ssymf{\;\;\tr\;\;}{i^+} \triple{Q'}{\Phi'}{\Set'} & Q' = \bot}
  \end{array}
\]
 \caption{Compressed symbolic semantics}
  \label{figure:compressed-symbolic-sem}
\end{figure}

\begin{example}
\label{ex:symbolic-semantics}
\stef{Continuing Example~\ref{ex:private},}
we have that 
$\triple{\{Q_0(\skb,\pk{\ska})\}}{\Phi_0}{\varnothing} \ssymc{\;\tr\;}
\triple{\varnothing}{\Phi}{\Set}$ where:
\begin{itemize}
\item $\tr = \In(\cb,
Y
)\cdot\Out(\cb,w_3)$, and
\item $\C = \cs{\Phi}{\Set}$ is the constraint system defined in
  Example~\ref{ex:cs}.
\end{itemize}
\end{example}

We are now able to define the notion of equivalence associated to
these two semantics, namely symbolic trace equivalence (denoted $\esym{}$) and
symbolic compressed trace equivalence (denoted $\esymc{}$).
For a trace $\tr$, we note $\obse(\tr)$ the trace obtained from $\tr$ by
removing all~$\tau$ actions.

\begin{definition}
\label{def:equiv-symb}
Let~$A = \proc{\p}{\Phi}$
and~$B=\proc{\q}{\Psi}$ be two simple processes. We have that $A \incesym B$
when, for every trace $\tr$ such
  that $\triple{\p}{\Phi}{\varnothing} \ssym{\;\tr}
  \triple{\p'}{\Phi'}{\Set_A}$, for every $\theta \in
  \Sol(\Phi';\Set_A)$, we have that:
\begin{itemize}
\item $\triple{\q}{\Psi}{\varnothing} \ssym{\;\tr'}
  \triple{\q'}{\Psi'}{\Set_B}$ where $\obse(\tr')=\obse(\tr)$
  with $\theta \in \Sol(\Psi';\Set_B)$, and 
\item $\Phi'\lambda^A_\theta \statequiv
  \Psi'\lambda^B_\theta$ where $\lambda^A_\theta$ (resp. $\lambda^B_\theta$) is the substitution
  associated to $\theta$ w.r.t. $\cs{\Phi'}{\Set_A}$
  (resp. $\cs{\Psi'}{\Set_B}$).
\end{itemize}
We have that $A$ and $B$ are in \emph{trace equivalence w.r.t. $\ssym{}$}, 
denoted $A \esym B$, if $A \incesym B$ and $B\incesym A$.
\end{definition}

We derive similarly the notion of trace equivalence induced by
$\ssymc{}$. We do not have to take care of the $\tau$ actions since
they are performed implicitly in
the compressed semantics.

\begin{definition}
\label{def:equiv-comp-symb}
Let~$A = \proc{\p}{\Phi}$
and~$B=\proc{\q}{\Psi}$ be two extended simple processes. We have that $A \incesymc B$
when, for every trace $\tr$ such
  that $\triple{\p}{\Phi}{\varnothing} \ssymc{\;\tr}
  \triple{\p'}{\Phi'}{\Set_A}$, for every $\theta \in
  \Sol(\Phi';\Set_A)$, we have that:
\begin{itemize}
\item $\triple{\q}{\Psi}{\varnothing} \ssymc{\;\tr}
  \triple{\q'}{\Psi'}{\Set_B}$ with $\theta \in
  \Sol(\Psi';\Set_B)$, and 
\item $\Phi'\lambda^A_\theta \statequiv
  \Psi'\lambda^B_\theta$ where $\lambda^A_\theta$ (resp. $\lambda^B_\theta$) is the substitution
  associated to $\theta$ w.r.t. $\cs{\Phi'}{\Set_A}$
  (resp. $\cs{\Psi'}{\Set_B}$).
\end{itemize}
We have that $A$ and $B$ are in \emph{trace equivalence
  w.r.t. $\ssymc{}$}, denoted $A \esymc B$, if $A \incesymc B$ and $B
\incesymc A$.
\end{definition}

\begin{example}
We have that $\proc{\{Q_0(\skb,\pk{\ska})\}}{\Phi_0} \not\incesymc
\proc{\{Q_0(\skb,\pk{\ska'})\}}{\Phi_0}$. Continuing
Example~\ref{ex:symbolic-semantics}, we have seen that:
\begin{itemize}
\item  $\triple{\{Q_0(\skb,\pk{\ska})\}}{\Phi_0}{\varnothing} \ssymc{\;\tr\;}
\triple{\varnothing}{\Phi}{\Set}$ \lucca{(see Example~\ref{ex:symbolic-semantics})}, and 
\item \stef{$\theta = \{Y \mapsto \aenc{\pair{w_1}{w_1}}{w_2}\}$ is a
  solution of $\C = (\Phi;\Set)$}
\stef{(see Example~\ref{ex:solution})}. 
\end{itemize}
The only symbolic process that
is reachable from $\triple{\{Q_0(\skb,\pk{\ska'})\}}{\Phi_0}{\varnothing}$ 
using $\tr$ is $\triple{\varnothing}{\Phi'}{\Set'}$ with:
\begin{itemize}
\item $\Phi' = \Phi_0 \uplus\{w_3 \refer \aenc{\pair{\projr{N}}{\pair{n_b}{\pk{skb}}}}{\pk{ska'}}\}$, and
\item $\Set' = \big\{\dedi{\{w_0,w_1,w_2\}}{Y}{y};\;\; \projr{N} \eqi
  \pk{ska'}\big\}$.
\end{itemize} 
One can check that $\theta$ is not a solution of $\cs{\Phi'}{\Set'}$.
\end{example}

For processes without replication, the symbolic transition system
induced by $\ssym{}$ (resp~$\ssymc{}$) is essentially finite.
Indeed, the choice of fresh names for handles and second-order variables
does not matter, and therefore the relations
$\ssym{}$ and $\ssymc{}$ are essentially finitely branching.
Moreover, the length of traces of a simple process is obviously bounded.
Thus, deciding (symbolic) trace equivalence between
processes boils down to
the problem of deciding a notion of equivalence between sets of
constraint systems. This problem is well-studied and several
procedures already exist~\cite{baudet-ccs2005,chevalier10}, \eg~\apte~\cite{cheval-ccs2011} (see Section~\ref{sec:apte}).


\subsection{Soundness and completeness}
\label{subsec:symbolic-soundness-completeness}

It is well-known that the symbolic semantics $\ssym{}$ is sound and
complete w.r.t. $\sint{}$, and therefore that the two underlying
notions of equivalence, namely $\esym$ and $\eint$, coincide.
This has been proved for instance in~\cite{baudet-ccs2005,CCD-tcs13}.
Using the same approach, we can
show soundness and completeness of our symbolic compressed semantics
w.r.t. our concrete compressed semantics. We have:
\begin{itemize}
\item \emph{Soundness}: each transition in the compressed symbolic semantics represents a set
of transitions that can be done in the concrete compressed semantics.
\item \emph{Completeness}: each transition in the compressed semantics can be
matched by a transition in the compressed symbolic semantics.
\end{itemize}

These results are formally expressed in Proposition~\ref{pro:symb-soundness} and
Proposition~\ref{pro:symb-completeness} below. These propositions are
simple consequences of similar propositions that link the (small-step) symbolic
semantics and the (small-step) standard semantics. Lifting these
results to the compressed semantics is straightforward since both
semantics are built using exactly the same scheme (see Figures~\ref{fig:sintf} and~\ref{figure:compressed-symbolic-sem}).

\begin{proposition}
\label{pro:symb-soundness}
Let $\proc{\p}{\Phi}$ be an extended simple process such that
$\triple{\p}{\Phi}{\varnothing} \ssymc{\tr} \triple{\p'}{\Phi'}{\Set'}$,
and $\theta \in \Sol(\Phi';\Set')$. We have that
$\proc{\p}{\Phi} \sintc{\tr\theta}
\proc{\p'\lambda}{\Phi'\lambda}$ where $\lambda$ is the first-order
solution of $\triple{\p'}{\Phi'}{\Set'}$ associated to $\theta$.
\end{proposition}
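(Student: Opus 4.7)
My plan is to derive this compressed soundness result from a small-step version, then lift it via structural induction on the focused derivation, exploiting the fact that the symbolic and concrete focused semantics share the exact same shape (Figures~\ref{fig:sintf} and~\ref{figure:compressed-symbolic-sem}).

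First, I would establish the small-step analogue: if $\triple{\p}{\Phi}{\Set_0} \ssym{\alpha_1\cdots\alpha_n} \triple{\p'}{\Phi'}{\Set'}$ and $\theta\in\Sol(\Phi';\Set')$, then writing $\lambda$ for the associated first-order solution (and $\lambda_0$ for its restriction to an intermediate constraint system), we have $\proc{\p\lambda_0}{\Phi\lambda_0} \sint{(\alpha_1\cdots\alpha_n)\theta} \proc{\p'\lambda}{\Phi'\lambda}$. The proof proceeds by induction on $n$. The base case is trivial. For the inductive step, I would observe that a solution of a larger constraint system restricts to a solution of any prefix constraint system obtained along the derivation, so I can apply the induction hypothesis and then argue one last step. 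I analyze each rule from Figure~\ref{figure:symbolic-sem}: {\sc In} adds $\dedi{\dom(\Phi)}{X}{x}$, and by Definition~\ref{def:cssol} the solution $\theta$ provides a recipe $M = X\theta\in\T(\Sigma,\dom(\Phi))$ with $M(\Phi\lambda) =_\E x\lambda$, both sides valid, so the concrete {\sc In} rule fires with recipe $M$ and term $x\lambda$; {\sc Out} adds $w\refer u$ to the frame, and the global validity requirement on $\Phi\lambda$ in Definition~\ref{def:cssol} furnishes the validity side-condition of the concrete {\sc Out}; {\sc Then} and {\sc Else} introduce $u\eqi v$ and $u\neqi v$ constraints whose solution clauses in Definition~\ref{def:cssol} precisely match the side-conditions of the corresponding concrete rules, including the treatment of non-validity on the else-branch.

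Next, I lift this to the compressed level. Both $\sintc{}$ and $\ssymc{}$ are built by assembling focused derivations using the identical meta-rules. I would proceed by structural induction on the derivation of $\triple{\p}{\Phi}{\varnothing}\ssymc{\tr}\triple{\p'}{\Phi'}{\Set'}$: each application of {\sc Block} (resp.~{\sc Failure}) is justified by a focused derivation for the chosen basic process, which, using the small-step result, translates into a focused concrete derivation with label $\ell$, and can then be composed into a compressed concrete step. Crucially, the symbolic {\sc Proper} rule for blocked outputs and the symbolic {\sc Improper} rule for blocked outputs add a $u\neqi u$ constraint, which is satisfiable only if $\neg\valid(u\lambda)$ (otherwise $u\lambda=_\E u\lambda$ trivially holds and the validity clauses of Definition~\ref{def:cssol} are met): this exactly recovers the $\neg\valid(u)$ side-condition of the corresponding concrete rules in Figure~\ref{fig:sintf}.

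The main obstacle I expect is bookkeeping rather than conceptual: I need to track carefully how the first-order substitution $\lambda$ associated with~$\theta$ evolves as the constraint system grows (new first-order variables $x$ are introduced by each symbolic input), and to check that earlier processes $\p$ in the derivation are correctly instantiated by the restriction of $\lambda$ to the first-order variables available at that point. Well-formedness of the constraint system (acyclic dependency, unique introducing constraint) guarantees that this restriction is well-defined and unique modulo $\E$, so the correspondence between $\theta$ and $\lambda$ behaves uniformly along the derivation. Once this is set up, each rule-case is routine, and the compressed result follows without further difficulty.
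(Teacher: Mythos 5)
Your proposal is correct and follows exactly the route the paper takes: the paper states that Proposition~\ref{pro:symb-soundness} is a ``simple consequence'' of the standard small-step soundness of $\ssym{}$ w.r.t.\ $\sint{}$, lifted to the compressed level because Figures~\ref{fig:sintf} and~\ref{figure:compressed-symbolic-sem} are built with the same scheme, which is precisely your two-stage argument (including the correct observation that the $u\neqi u$ constraints of the symbolic {\sc Proper}/{\sc Improper} rules are satisfiable exactly when $\neg\valid(u\lambda)$). You in fact supply more detail than the paper, which leaves the proof implicit.
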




\begin{proposition}
\label{pro:symb-completeness}
Let $\proc{\p}{\Phi}$ be an extended simple process such that
$\proc{\p}{\Phi} \sintc{\tr} \proc{\p'}{\Phi'}$. There exists a
symbolic process $\triple{\p_s}{\Phi_s}{\Set}$, a solution
$\theta \in \Sol(\Phi_s;\Set)$, and a sequence $\tr_s$ such
that:
\begin{itemize}
\item $\triple{\p}{\Phi}{\varnothing} \ssymc{\tr_s}
  \triple{\p_s}{\Phi_s}{\Set}$;
\item $\proc{\p'}{\Phi'} = \proc{\p_s\lambda}{\Phi_s\lambda}$;
  and
\item $\tr = \tr_s\theta$
\end{itemize} 
where $\lambda$ is the first-order solution of
$\triple{\p_s}{\Phi_s}{\Set}$ associated to $\theta$.
\end{proposition}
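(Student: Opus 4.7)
The plan is to follow the hint given immediately after the statement: first establish the analogous completeness result for the small-step semantics $\sint{}$ versus $\ssym{}$, then lift it mechanically to the focused semantics $\sintf{}$ versus $\ssymf{}$, and finally conclude for $\sintc{}$ versus $\ssymc{}$ using the {\sc Block} and {\sc Failure} rules, which are structurally identical on both sides.

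For the small-step level, I would proceed by induction on the length of the derivation $\proc{\p}{\Phi} \sint{\tr} \proc{\p'}{\Phi'}$. The base case $\tr = \epsilon$ is trivial: take $\tr_s = \epsilon$, $\Set = \varnothing$ and $\theta$ the empty substitution. For the inductive step, I analyse the first action. If it is an input $\In(c,M)$ of a valid term $u$ with $M\Phi =_\E u$, the symbolic {\sc In} rule fires with fresh $X$ and $x$, adding $\dedi{\dom(\Phi)}{X}{x}$ to $\Set$; the desired symbolic solution extends the one built by induction hypothesis by $X \mapsto M$, and the associated first-order solution extends $\lambda$ by $x \mapsto u$ — the validity conditions in Definition~\ref{def:cssol} are met because $\valid(M\Phi)$ and $\valid(u)$ hold by the concrete rule. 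The {\sc Out} case is immediate since the symbolic {\sc Out} rule mirrors the concrete one with the same fresh handle. The {\sc Then}/{\sc Else} cases each add a single (dis)equality constraint that is satisfied by construction.

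Lifting to the focused relation $\sintf{}$ is now a routine second induction, this time on the derivation of $\sintf{}$, because every rule of Figure~\ref{fig:sintf} is in one-to-one correspondence with a rule of Figure~\ref{figure:compressed-symbolic-sem} that uses the same label discipline $\{i^+,i^*,o^*\}$. The only non-trivial cases are the {\sc Proper} and {\sc Improper} base cases involving a blocked output $\Out(c,u).P$ with $\neg\valid(u)$: here I use the corresponding symbolic rule, which adds the constraint $u \neqi u$; since $u\lambda =_\E u\lambda$ always holds, this constraint is satisfied in a solution precisely when $\neg\valid(u\lambda)$, matching exactly the concrete side condition. The {\sc Block} and {\sc Failure} rules of Figure~\ref{fig:sintc} are reproduced verbatim in Figure~\ref{figure:compressed-symbolic-sem}, so the final step — concluding for $\sintc{}$ — consists in applying the appropriate rule on the symbolic side, with the same witness $\theta$ obtained by the focused-level statement.

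The main obstacle I expect is not the structural induction, which is entirely routine, but the careful bookkeeping around the validity predicate, especially in the {\sc Improper} case. In the concrete semantics an improper block is allowed either because the basic process reaches $0$ after some inputs or because it reaches a blocked output; on the symbolic side this disjunction is encoded by two different rules, and one must check that the solution $\theta$ built inductively actually witnesses the right one. Equally delicate is the requirement, inside Definition~\ref{def:cssol}, that \emph{all} terms occurring in $\Phi\lambda$ be valid: one has to verify that each deduction step of the concrete trace only adds valid terms to the frame, which follows from the {\sc Out} rule of Figure~\ref{fig:concrete-semantics} but must be propagated through the induction. Once this bookkeeping is settled, the three-layer induction delivers the proposition directly.
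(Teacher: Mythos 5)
Your proposal is correct and takes essentially the same route as the paper, which in fact gives no detailed proof here but simply asserts that the result is a straightforward lifting of the standard small-step completeness (from~\cite{baudet-ccs2005,CCD-tcs13}) through the focused rules to the compressed level, exactly the three-layer induction you describe. You have also correctly identified the one subtlety the paper flags, namely that the $\neg\valid(u)$ side conditions of the concrete {\sc Proper}/{\sc Improper} rules are encoded symbolically as $u\neqi u$ constraints satisfied precisely when $\neg\valid(u\lambda)$.
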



Finally, relying on these two results, we can establish  
that symbolic trace equivalence ($\esymc$) exactly captures
compressed trace equivalence ($\eintc$). 
Actually, both inclusions can be established separately.

\begin{restatable}{theorem}{thrmsintcssymv}
 \label{thrm:sintc-ssymc}
For any extended simple processes $A$ and $B$, we have that:
\[
A \inceintc B
\iff A \incesymc B.
\]
\end{restatable}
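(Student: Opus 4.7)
I plan to prove the two inclusions separately, each by composing soundness (Proposition~\ref{pro:symb-soundness}) and completeness (Proposition~\ref{pro:symb-completeness}) of the compressed symbolic semantics with respect to the concrete compressed semantics. The overall pattern in both directions is: translate a run in one semantics into a run in the other via the appropriate proposition, apply the hypothesis, and translate back.

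For the direction $A \incesymc B \Rightarrow A \inceintc B$, I start from a concrete execution $A \sintc{\;\tr\;} \proc{\p'}{\Phi'}$. Completeness yields a symbolic run $A \ssymc{\;\tr_s\;} \triple{\p_s}{\Phi_s}{\Set_A}$ and a solution $\theta \in \Sol(\Phi_s;\Set_A)$ with $\tr = \tr_s\theta$ and $\proc{\p'}{\Phi'} = \proc{\p_s\lambda^A_\theta}{\Phi_s\lambda^A_\theta}$. Applying the symbolic inclusion hypothesis to $(\tr_s,\theta)$ gives a matching symbolic execution $B \ssymc{\;\tr_s\;} \triple{\q_s}{\Psi_s}{\Set_B}$ with the same $\theta$ as solution and a statically equivalent frame $\Psi_s\lambda^B_\theta$. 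Soundness then lifts this to a concrete execution $B \sintc{\;\tr\;} \proc{\q_s\lambda^B_\theta}{\Psi_s\lambda^B_\theta}$, and static equivalence is preserved on the nose.

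For the converse $A \inceintc B \Rightarrow A \incesymc B$, I start from $A \ssymc{\;\tr_s\;} \triple{\p_s}{\Phi_s}{\Set_A}$ with $\theta \in \Sol(\Phi_s;\Set_A)$. Soundness gives a concrete run $A \sintc{\;\tr_s\theta\;} \proc{\p_s\lambda^A_\theta}{\Phi_s\lambda^A_\theta}$, and the concrete hypothesis produces $B \sintc{\;\tr_s\theta\;} \proc{\q'}{\Psi'}$ with $\Phi_s\lambda^A_\theta \statequiv \Psi'$. Completeness then provides \emph{some} symbolic execution $B \ssymc{\;\tr'_s\;} \triple{\q_s}{\Psi_s}{\Set_B}$ and a solution $\theta'$ with $\tr'_s\theta' = \tr_s\theta$ and $\proc{\q'}{\Psi'} = \proc{\q_s\lambda^B_{\theta'}}{\Psi_s\lambda^B_{\theta'}}$.

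The main technical hurdle is reconciling $\tr'_s$ with the original $\tr_s$: the definition of $\incesymc$ requires $B$ to perform the very same symbolic trace as $A$, not some other symbolic trace that happens to instantiate to the same concrete one. Since $\tr'_s\theta' = \tr_s\theta$ agrees label by label, the sequence of channels and the input/output pattern of the two symbolic traces coincide; the handles $w_i$ produced at output steps are determined by freshness in the current frame and thus match; and the only remaining freedom is the choice of fresh second-order variables at input steps, which can be canonically renamed. Up to this renaming we may take $\tr'_s = \tr_s$, and the corresponding $\theta''$ agrees with $\theta$ on every input variable since $X\theta'' = X\theta$ by construction, so $\theta \in \Sol(\Psi_s;\Set_B)$. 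The required static equivalence $\Phi_s\lambda^A_\theta \statequiv \Psi_s\lambda^B_\theta$ then follows from $\Psi_s\lambda^B_\theta =_\E \Psi'$, using the uniqueness modulo $\E$ of the first-order solution associated with $\theta$ in a well-formed constraint system. This alignment step is the only non-routine part; everything else is a direct application of Propositions~\ref{pro:symb-soundness} and~\ref{pro:symb-completeness}.
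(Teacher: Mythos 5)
Your proposal is correct and follows essentially the same route as the paper: both inclusions are derived separately by composing Propositions~\ref{pro:symb-soundness} and~\ref{pro:symb-completeness} with the respective hypothesis, and the alignment of the symbolic trace on the $B$ side is justified exactly as the paper does (via the third item of Proposition~\ref{pro:symb-completeness} together with the freedom to rename fresh second-order variables and the fact that the handles are pinned down by the concrete trace). Nothing further is needed.
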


As an immediate consequence of Theorem~\ref{theo:comp-soundness-completeness}
and Theorem~\ref{thrm:sintc-ssymc}, we obtain that the relations~$\eint$ and~$\esymc$ coincide.

\begin{corollary}
\label{cor:sintc-ssymc}
For any initial simple processes $A$ and $B$, we have that:
\[
A \eint B
\iff A \esymc B.
\]
\end{corollary}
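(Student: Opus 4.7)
The plan is straightforward: the corollary is explicitly advertised as an immediate consequence of Theorem~\ref{theo:comp-soundness-completeness} together with Theorem~\ref{thrm:sintc-ssymc}, so the proof is just a chaining of the two equivalences. No new combinatorial or semantic argument is needed; the real work has already been done in the soundness/completeness results for compression (Section~\ref{sec:compression}) and for the symbolic lifting (Section~\ref{sec:constraint-solving}).

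Concretely, I would fix two initial simple processes $A$ and $B$ and argue as follows. First, applying Theorem~\ref{theo:comp-soundness-completeness}, the concrete trace equivalence $\eint$ coincides with the compressed trace equivalence $\eintc$ on the class of initial simple processes, so $A \eint B$ iff $A \eintc B$. Then, unfolding $\eintc$ into its two inclusions, $A \eintc B$ iff $A \inceintc B$ and $B \inceintc A$. By Theorem~\ref{thrm:sintc-ssymc}, each of these inclusions is equivalent to the corresponding symbolic compressed inclusion: $A \inceintc B \iff A \incesymc B$ and $B \inceintc A \iff B \incesymc A$. Repackaging these two inclusions as a single equivalence yields $A \esymc B$, which closes the chain.

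The only subtlety worth mentioning is the asymmetry in the hypotheses: Theorem~\ref{theo:comp-soundness-completeness} requires both processes to be initial (the example just after its statement shows that on non-initial processes even the inclusions can differ), whereas Theorem~\ref{thrm:sintc-ssymc} is stated for arbitrary extended simple processes. Since the corollary restricts to initial simple processes, both hypotheses are met simultaneously, and the chain goes through without any side conditions. There is no obstacle of substance; writing the proof amounts to a two-line appeal to the earlier theorems.
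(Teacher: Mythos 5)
Your proposal is correct and matches the paper exactly: the corollary is stated there as an immediate consequence of Theorem~\ref{theo:comp-soundness-completeness} and Theorem~\ref{thrm:sintc-ssymc}, obtained by chaining $\eint = \eintc$ (on initial simple processes) with $\eintc = \esymc$. Your remark about the asymmetry of hypotheses — only the first theorem needs initiality — is accurate and harmless since the corollary assumes it.
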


\section{Reduction using dependency constraints}
\label{sec:diff}

Unlike compression, 
which is essentially based on the input/output nature of actions, our second
optimisation takes
into account the exchanged messages.
Let us first illustrate one simple instance
of our optimisation and how dependency constraints~\cite{ModersheimVB10}
may be used to incorporate it into symbolic semantics.

\begin{example}
\label{ex:diamond}
Let $P_i = \In(c_i,x_i).\Out(c_i,u_i).P'_i$ with $i \in \{1,2\}$,
and $\Phi_0 = \{w_0 \refer n\}$ be a ground frame.
We consider the simple process $A = \proc{\{P_1,P_2\}}{\Phi_0}$, and
the  two symbolic interleavings depicted in Figure~\ref{figure:ex}.
The two resulting symbolic processes are of the form
$\triple{\{P'_1,P'_2\}}{\Phi}{\Set_i}$ where
$\Phi = \Phi_0 \uplus \{w_1 \refer u_1, w_2 \refer u_2\}$, 
$$\Set_1 = \big\{\dedi{w_0}{X_1}{x_1}; \; \dedi{w_0,
w_1}{X_2}{x_2}\big\}, \mbox{ and }
\Set_2 = \big\{\dedi{w_0}{X_2}{x_2};\; \dedi{w_0,
  w_2}{X_1}{x_1}\big\}.$$
The sets of concrete processes that these two symbolic processes
represent are different, which means that we cannot discard any of
those interleavings. However, these sets have a significant overlap
corresponding to concrete instances of the interleaved blocks that
are actually independent, \ie where the output of one block is not
necessary to obtain the input of the next block.
In order to avoid considering such concrete processes twice,
we may add a \emph{dependency constraint} $\constrd{X_1}{w_2}$ \stef{in $\Set_2$},
whose purpose is to discard all solutions~$\theta$
such that the message $x_1\lambda_\theta$ can be derived without using
$w_2 \refer u_2\lambda_\theta$.
For instance, the concrete trace 
$\In(c_2, w_0)\cdot\Out(c_2,w_2)\cdot \In(c_1,w_0)\cdot \Out(c_1,w_1)$
would be discarded thanks to this new constraint.
\end{example}

\begin{figure}[h]
\begin{tikzpicture}[inner sep=0pt,node distance=0.7cm]


\node (sommet) {$\bullet$};
\node[below of=sommet, left=1.2cm of sommet] (l1) {$\bullet$};
\node[below of=l1] (l2) {$\bullet$};
\node[below of=l2] (l3) {$\bullet$};
\node[below of=l3] (l4) {$\bullet$};
\node[below of=sommet, right=1.2cm of sommet] (r1) {$\bullet$};
\node[below of=r1] (r2) {$\bullet$};
\node[below of=r2] (r3) {$\bullet$};
\node[below of=r3] (r4) {$\bullet$};


\path (sommet) edge  node[left,pos=0.3]{$\In(c_1,X_1)\;\;\;\;$} (l1);
\path (l1) edge node[left]{$\Out(c_1,w_1)\;$} (l2);
\path (l2) edge node[left]{$\In(c_2,X_2)\;$} (l3);
\path (l3) edge node[left]{$\Out(c_2,w_2)\;$} (l4);
\path (sommet) edge  node[right,pos=0.3]{$\;\;\;\;\In(c_2,X_2)$} (r1);
\path (r1) edge node[right]{$\;\Out(c_2,w_2)$} (r2);
\path (r2) edge node[right]{$\;\In(c_1,X_1)$} (r3);
\path (r3) edge node[right]{$\;\Out(c_1,w_1)$} (r4);


\coordinate (l4t) at ($ (l4) - (90:0.1cm) $);
\coordinate (l4l) at ($ (l4t) - (60:0.5cm) $);
\coordinate (l4r) at ($ (l4t) - (120:0.5cm) $);
\filldraw[color=gray!20] (l4t) -- (l4l) -- (l4r) -- (l4t);
\coordinate (r4t) at ($ (r4) - (90:0.1cm) $);
\coordinate (r4l) at ($ (r4t) - (60:0.5cm) $);
\coordinate (r4r) at ($ (r4t) - (120:0.5cm) $);
\filldraw[color=gray!20] (r4t) -- (r4l) -- (r4r) -- (r4t);


\node[right=1.0cm of l4] (cl) {} ;
\node[left=1.0cm of r4] (cr) {} ;
\filldraw[pattern=north east lines] (cl) circle (0.5);
\filldraw[pattern=north west lines] (cr) circle (0.5);
\draw[dashed] (l4) -- ($ (cl) + (120:0.5cm) $);
\draw[dashed] (l4) -- ($ (cl) + (-120:0.5cm) $);
\draw[dashed] (r4) -- ($ (cr) - (120:0.5cm) $);
\draw[dashed] (r4) -- ($ (cr) - (-120:0.5cm) $);

\end{tikzpicture}
\caption{Two symbolic compressed traces (Example~\ref{ex:diamond})}
\label{figure:ex}
\end{figure}

The idea of \cite{ModersheimVB10} is to accumulate dependency constraints
generated whenever such a pattern is detected in an execution, and use
an adapted constraint resolution procedure to narrow and eventually
discard the constrained symbolic states.
We seek to exploit similar ideas for optimising the verification of
trace equivalence rather than reachability.
This requires extra care, since pruning traces as
described above may break completeness when considering trace equivalence. As 
before, the key to obtain a valid optimisation will be to discard traces in a 
similar way on the two processes being compared.
In addition to handling this necessary subtlety, we also propose a new proof
technique for justifying dependency constraints. The generality of that 
technique allows us to add more dependency constraints, taking into account
more patterns than the simple one from the previous example.

\subsection{Reduced semantics}
\label{subsec:reduced-semantics}

We start by introducing \emph{dependency constraints}.

\begin{definition} \label{def:dep-constr}
  A \emph{dependency
  constraint} is a constraint of the form $\constrd{\vect X}{\vect w}$
  where $\vect X$ is a vector of second-order variables in $\X^2$, 
  and $\vect w$ is a vector of handles, \ie variables in $\W$.

Given a constraint system $\C = \cs{\Phi}{\Set}$, a set $\Set_D$ of
dependency constraints, and $\theta \in
\Sol(\C)$. We write $\theta \models_{\cs{\Phi}{\Set}} {\Set_D}$ when
$\theta$ also satisfies the dependency constraints in $\Set_D$, \ie \stef{when
for
each  $\constrd{\vect X}{\vect w} \in \Set_D$ 
there is some $X_i \in \vect X$ 
 such that
  for all recipes $M\in\T(\Sigma, D_\C(X_i))$ satisfying
  $M(\Phi\lambda_\theta) {=_{\E}} (X_i\theta)(\Phi\lambda_\theta)$
  and $\valid(M(\Phi\lambda_\theta))$,
  we have that $\fvp(M) \cap \vect w \neq \varnothing$}
where $\lambda_\theta$ is the substitution associated to $\theta$
w.r.t.\ $\cs{\Phi}{\Set}$.
\end{definition}

Intuitively, a dependency constraint  $\constrd{\vect X}{\vect w}$ is
satisfied as soon as at least one message among those in $({\vect
  X}\theta)(\Phi\lambda_\theta)$ can  only be deduced by using a message stored in $\vect w$.

\begin{example}
Continuing Example~\ref{ex:diamond}, assume that $u_1 = u_2 = n$
and let $\theta = \{X_1 \mapsto w_2; X_2 \mapsto w_0\}$. We have that
$\theta \in  \Sol(\C_2)$ and
the substitution associated to $\theta$ w.r.t. $\C_2$ is
$\lambda^2_\theta = \{x_1 \mapsto n; x_2 \mapsto n\}$. However,
$\theta$ does not satisfy the dependency constraint
$\constrd{X_1}{w_2}$. Indeed, we have that $w_0(\Phi\lambda^2_\theta)
=_\E (X_1\theta)(\Phi\lambda^2_\theta)$ whereas $\{w_0\} \cap
\{w_2\} = \emptyset$. Intuitively, this means that there is no good
reason to postpone the execution of the block on channel~$c_1$ if the
output on $c_2$ is not useful to build the message used in input on $c_1$.
\end{example}

We shall now define formally how dependency constraints will be added to
our constraint systems. For this, we fix an arbitrary
total order $\och$ on channels. Intuitively, this order expresses
which executions should be favored, and which should be allowed
only under dependency constraints.
To simplify the presentation, we use the notation $\io{c}{X}{w}$ as a
shortcut for $\In(c,X_1)\cdot \ldots \cdot \In(c,X_\ell)\cdot \Out(c,w_1)
\cdot \ldots \cdot \Out(c,w_k)$ assuming that $\vect{X} =
(X_1,\ldots,X_\ell)$ and $\vect{w} = (w_1,\ldots,w_k)$. Note that
$\vect{X}$ and/or $\vect{w}$ may be empty.

\begin{definition}[generation of dependency constraints]
  \label{def:alldep}
Let $c$ be a channel, and
 $\tr = \io{c_1}{X_1}{w_1}\cdot\ldots\cdot\io{c_{n}}{X_{n}}{w_{n}}$
 be a trace.
 If there exists a rank $k\leq n$ such that
 $c\och c_k$ and $c_i \och c$ for all $k < i \leq n$, then 
 $\Dep{\tr}{c} =\{\; w ~|~ w \in \vect{w_i} \mbox{ with $k \leq i \leq n$}\}$.
 Otherwise, we have that $\Dep{\tr}{c} = \varnothing$.

Then, given a trace $\tr$, we define $\AllDep{\tr}$ by
$\AllDep{\epsilon} = \emptyset$ and
$$\AllDep{\tr \cdot \io{c}{X}{w} } =
\begin{cases}
 \AllDep{\tr} \cup \{\constrd{\vect X}{\Dep{\tr}{c}}\} & \text{if }\Dep{\tr}{c}\neq\emptyset\\
 \AllDep{\tr} & \text{otherwise}
\end{cases}$$
\end{definition}

Intuitively, $\AllDep{\tr}$ corresponds to the accumulation of
the dependency constraints generated for all prefixes of $\tr$.

\begin{example}
\label{ex:taille-trois}
Let $a$, $b$, and $c$ be channels in $\C$ such that $a \och b \och c$.
The dependency constraints generated during the symbolic execution of a simple process of the form
$\proc{\{\In(a,x_a).\Out(a,u_a), \,\In(b,x_b).\Out(b,u_b),  \,\In(c,x_c).\Out(c,u_c)\}}{\Phi}$
are depicted below.
\begin{center}
  \begin{tikzpicture}[inner sep=0pt]
\draw (4,3) node 		(racine) {$\bullet$};

\draw (1,2) node 		(a) {$\bullet$};
\draw (4,2) node 		(b) {$\bullet$};
\draw (7,2) node 		(c) {$\bullet$};

\draw (0,1) node 		(ab) {$\bullet$};
\draw (2,1) node 		(ac) {$\bullet$};
\draw (3,1) node 		(ba) {$\bullet$};
\draw (5,1) node 		(bc) {$\bullet$};
\draw (6,1) node 		(ca) {$\bullet$};
\draw (8,1) node 		(cb) {$\bullet$};

\draw (0,0) node 		(abc) {$\bullet$};
\draw (2,0) node 		(acb) {$\bullet$};
\draw (3,0) node 		(bac) {$\bullet$};
\draw (5,0) node 		(bca) {$\bullet$};
\draw (6,0) node 		(cab) {$\bullet$};
\draw (8,0) node 		(cba) {$\bullet$};

\path (racine) edge  node[left]{$\mathtt{io}_{a}\;\;\;\;\;$} (a);
\path (racine) edge  node[left]{$\mathtt{io}_{b}\;$} (b);
\path (racine) edge  node[right]{$\;\;\;\;\mathtt{io}_{c}$} (c);

\path (a) edge node[left]{$\mathtt{io}_{b}\;\;$} (ab);
\path (a) edge node[right]{$\;\;\mathtt{io}_{c}$} (ac);
\path (b) edge node[left]{$\mathtt{io}_{a}\;\;$} (ba);
\path (b) edge node[right]{$\;\;\mathtt{io}_{c}$} (bc);
\path (c) edge node[left,above]{$\mathtt{io}_{a}\;\;\;\;\;\;$} (ca);
\path (c) edge node[left,below]{$\mathtt{io}_{b}\;\;\,$} (cb);

\path (ab) edge node[left]{$\mathtt{io}_{c}\;$} (abc);
\path (ac) edge node[left]{$\mathtt{io}_{b}\;$} (acb);
\path (ba) edge node[left]{$\mathtt{io}_{c}\;$} (bac);
\path (bc) edge node[left]{$\mathtt{io}_{a}\;$} (bca);
\path (ca) edge node[left,pos=0.6]{$\mathtt{io}_{b}\;$} (cab);
\path (cb) edge node[left]{$\mathtt{io}_{a}\;$} (cba);

\draw[->,>=latex,color=blue, line width=1pt] (acb) to[bend right] (ac);
\draw[->,>=latex,color=blue, line width=1pt] (ba) to[bend right] (b);
\draw[->,>=latex,color=blue, line width=1pt] (bca) to[bend right] (bc);
\draw[->,>=latex,color=blue, line width=1pt] (ca) to[out=20,in=250] (c);
\draw[->,>=latex,color=blue, line width=1pt] (cb) to[bend right] (c);
\draw[->,>=latex,color=blue, line width=1pt] (cba) to[bend right] (cb);

\draw[-latex,color=red,dashed,line width=1pt] (cab) .. controls +(20:8mm) and +(290:4mm) .. (c);
\draw[-latex,color=red,dashed,line width=1pt] (cab) .. controls +(20:4mm) .. (ca);
\end{tikzpicture}
\end{center}

We use $\mathsf{io}_i$ as a shortcut for $\In(i,X_i)\cdot\Out(i,w_i)$ and
we represent  dependency constraints using arrows.
For instance, on the trace $\mathsf{io}_a\cdot\mathsf{io}_c\cdot\mathsf{io}_b$, a dependency constraint
of the form $\constrd{X_b}{w_c}$ (represented by the left-most arrow)
is generated. 
    Now, on the trace
    $\mathsf{io}_c\cdot\mathsf{io}_a\cdot\mathsf{io}_b$
    we add $\constrd{X_a}{w_c}$ after the second transition,
    and $\constrd{X_b}{\{w_c,w_a\}}$ (represented by the 
    dashed $2$-arrow) after the third transition. Intuitively,
    the latter constraint expresses that
    $\mathsf{io}_b$ is only allowed to come after $\mathsf{io}_c$ if it 
    depends on it, possibly indirectly through $\mathsf{io}_a$.
\end{example}

Dependency constraints give rise to a new notion of trace equivalence,
which further refines the previous ones.

\begin{definition}[reduced trace equivalence]
\label{def:reduced-equivalence}
Let $A = \proc{\p}{\Phi}$ and $B  = \proc{\q}{\Psi}$ be two extended simple
processes. We have that $A\incesymdff B$ when, for every sequence $\tr$
such that $\triple{\p}{\Phi}{\emptyset} \ssymc{\;\tr}
  \triple{\p'}{\Phi'}{\Set_A}$, for every $\theta \in
  \Sol(\Phi';\Set_A)$ such that $\theta \models_{\cs{\Phi'}{\Set_A}}
  \AllDep{\tr}$, we have that:
\begin{itemize}
\item $\triple{\q}{\Psi}{\varnothing} \ssymc{\;\tr}
  \triple{\q'}{\Psi'}{\Set_B}$ with $\theta \in
  \Sol(\Psi';\Set_B)$, and $\theta \models_{\cs{\Psi'}{\Set_B}} \AllDep{\tr}$;
\item $\Phi'\lambda^A_\theta \statequiv
  \Psi'\lambda^B_\theta$ where $\lambda^A_\theta$ (resp. $\lambda^B_\theta$) is the substitution
  associated to $\theta$ w.r.t. $\cs{\Phi'}{\Set_A}$
  (resp. $\cs{\Psi'}{\Set_B}$).
\end{itemize}
We have that $A$ and $B$ are in \emph{reduced trace equivalence}, {denoted $A 
\esymdff B$}, if $A \incesymdff B$ and $B \incesymdff A$.
\end{definition}

\subsection{Soundness and completeness}
In order to establish that $\esymc$ and $\esymdff$ coincide, we shall
study more carefully concrete traces, consisting of proper blocks
possibly followed by a single improper block.
We will then define a precise characterization of executions whose associated
solution satisfies dependency constraints.
We denote by~$\Bio$ the set of blocks $\io{c}{M}{w}$ such that
$c\in\mathcal{C}$, $M_i\in\T(\Sigma, \W)$ for each $M_i\in\vect M$, and $w_j
\in \W$ for each $w_j \in \vect{w}$.
In this section, a concrete trace is seen as a sequence of blocks,
\ie it belongs to $\Bio^*$.

\begin{definition}[independence between blocks]
\label{def:independence}
Two  blocks $b_1 = \io{c_1}{M_1}{w_1}$ and $b_2 = \io{c_2}{M_2}{w_2}$ 
are \emph{independent}, written
$b_1 \mid\mid b_2$, when $c_1 \neq c_2$ and none of the variables of
$\vect{w_2}$ \david{occurs} in $\vect{M_1}$, and none of the variables of
$\vect{w_1}$ \david{occurs} in $\vect{M_2}$. Otherwise the blocks are \emph{dependent}. 
\end{definition}

It is easy to see that
independent blocks that are proper can be permuted in a compressed trace without
affecting the executability and the result of executing that
trace. It is not the case for improper blocks, which
can only be performed at the very end of a compressed execution.

However, this notion of independence based on recipes is too restrictive:
it may introduce spurious dependencies.
Indeed, it is often possible to make two blocks dependent
by slightly modifying recipes without altering the inputted messages.
For instance, $w'$ does not occur in recipe $M = w$ but does in
$M'=\pi_1(\langle w, w' \rangle)$ while $M'$ induces the same message as~$M$.
We thus define a more permissive notion of equivalence over traces, which allows
permutations of independent blocks but also changes of recipes that
preserve messages.
During these permutations, we require that (concrete) traces remain
\emph{plausible}.

\begin{definition}[plausible]
A trace $\tr$ is
\emph{plausible} if for any input $\In(c,M)$ such that $\tr =
\tr_0 \cdot \In(c,M) \cdot \tr_2$, we have $M \in \T(\Sigma,\W_0)$ where $\W_0$ is the set of
handles occurring in~$\tr_0$.
\end{definition}

Given two blocks $b_1 = \io{c_1}{M_1}{w_1}$ and $b_2
=\io{c_2}{M_2}{w_2}$, 
we note  $(b_1 =_\E b_2)\Phi$ when $\vect{M_1}\Phi =_\E
\vect{M_2}\Phi$, $\valid(M_1\Phi)$, $\valid(M_2\Phi)$,
and $\vect{w_1} = \vect{w_2}$. Intuitively, the two blocks only differ
by a change of recipes such that the underlying messages are kept unchanged.
We lift this notion to sequences of blocks,
  \ie $(\tr =_\E \tr')\Phi$, in the natural way.

\begin{definition}
\label{def:equiv-phi}
Given a frame $\Phi$, the relation $\equiv_\Phi$ is the smallest
equivalence over plausible traces (made of blocks) such that:
\begin{enumerate}
\item $\tr \cdot b_1 \cdot b_2 \cdot \tr' \equiv_\Phi \tr \cdot b_2 \cdot b_1 
  \cdot \tr'$ when $b_1 \mid\mid b_2$; and 
\item $\tr \cdot b_1 \cdot \tr' \equiv_\Phi \tr \cdot b_2 \cdot \tr'$ when $(b_1 =_\E
  b_2)\Phi$.
\end{enumerate}
\end{definition}



\begin{lemma}
\label{lem:permute-propre-concret}
Let  $A \sintc{\tr} \proc{\p}{\Phi}$ with $\tr$ be a trace made of proper
blocks. We have that $A \sintc{\tr'} \proc{\p}{\Phi}$ for any $\tr'
\equiv_\Phi \tr$.
\end{lemma}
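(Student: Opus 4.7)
The plan is to proceed by induction on the length of the rewriting chain witnessing $\tr \equiv_\Phi \tr'$. Since $\equiv_\Phi$ is defined as the least equivalence over plausible traces containing the two base relations (adjacent block permutation under $\mid\mid$ and recipe change under $=_\E$), it suffices to establish the lemma for a single application of either base relation; reflexivity, symmetry, and transitivity then extend the statement to all of $\equiv_\Phi$.

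First, consider the adjacent block swap case, where $\tr = \tr_0 \cdot b_1 \cdot b_2 \cdot \tr_1$, $\tr' = \tr_0 \cdot b_2 \cdot b_1 \cdot \tr_1$, and $b_1 \mid\mid b_2$. I would reason directly at the block level. Decompose the execution as
\[
A \sintc{\tr_0} A_0 \sintc{b_1} A_1 \sintc{b_2} A_2 \sintc{\tr_1} (\p; \Phi).
\]
By the Block rule, $A_0 = (\{Q_1\} \uplus \p_0; \Phi_0)$ with $Q_1$ a basic process on channel $c_1$ that focuses on $b_1$. Because $c_1 \neq c_2$, the basic process $Q_2$ on channel $c_2$ that executes $b_2$ from $A_1$ is already present in $A_0$. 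Since $\vect{w_1} \cap \fv(\vect{M_2}) = \emptyset$, the recipes of $b_2$ only mention handles already in $\Phi_0$, and their values under $\Phi_0$ coincide with their values under the enriched frame of $A_1$; hence the focused execution $Q_2 \sintf{b_2}{i^+} Q'_2$ can be fired directly from $A_0$. From that intermediate state, $Q_1$ can now perform $b_1$ using the dual independence $\vect{w_2} \cap \fv(\vect{M_1}) = \emptyset$, reaching exactly $A_2$. Freshness of output handles is preserved because handles chosen fresh in the original ordering remain fresh against the smaller frame they face after the swap, and the resulting frames coincide since substitution is insensitive to the order in which disjoint bindings are added.

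Next, consider the recipe-change case, where $\tr = \tr_0 \cdot b \cdot \tr_1$, $\tr' = \tr_0 \cdot b' \cdot \tr_1$, and $(b =_\E b')\Phi$. Let $\Phi_0 \subseteq \Phi$ be the frame reached after executing $\tr_0$. The plausibility of both traces forces the recipes of $b$ and $b'$ to lie in $\T(\Sigma, \dom(\Phi_0))$, so $M_i \Phi_0 = M_i \Phi$ and $M'_i \Phi_0 = M'_i \Phi$ for the corresponding recipes. The hypothesis $(b =_\E b')\Phi$ therefore transfers to $\Phi_0$, ensuring that the recipes in $b'$ deduce messages that are $=_\E$-equal to those deduced by $b$ and satisfy the same validity conditions. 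By choosing for $b'$ the same inputted witnesses $u$ as for $b$, the focused execution of $b'$ binds the variables of the underlying basic process to identical values, performs the same outputs (since $\vect{w} = \vect{w}'$ and outputs depend only on the accumulated substitution), and reaches the same state; the remaining suffix $\tr_1$ then proceeds unchanged.

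The main obstacle is the careful bookkeeping across the swap: one must argue that the recipes of the block moved forward remain well-defined over the smaller frame they now face, that validity conditions are not lost, and that the two frames reached through the two orderings are truly equal as substitutions. An alternative route would lift to the small-step semantics via Proposition~\ref{pro:comp-soundness} and exploit Lemma~\ref{lem:permute} — since $b_1 \mid\mid b_2$ makes every pair of actions from the two blocks $\Ind_a$-independent — but re-lifting to the compressed semantics via Proposition~\ref{pro:comp-completeness} would require an initiality hypothesis absent from the present lemma, so the direct block-level reasoning above is to be preferred.
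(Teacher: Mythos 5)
Your proof is correct, and its overall strategy matches the paper's: reduce $\equiv_\Phi$ to single applications of its two generators and check that each preserves compressed executability, the recipe-change case being handled identically (only the derived messages matter, provided validity is preserved). The one genuine difference is in the block-swap case. The paper's (two-sentence) proof delegates it to Lemma~\ref{lem:permute}, which lives in the uncompressed semantics $\sintw{}$; making that route fully precise means descending via Proposition~\ref{pro:comp-soundness}, permuting, and re-ascending via Proposition~\ref{pro:comp-completeness}, and you are right that the latter carries an initiality hypothesis that Lemma~\ref{lem:permute-propre-concret} does not assume (and that its use inside Theorem~\ref{theo:reduced} does not supply). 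Your direct block-level argument — the two blocks live on distinct channels hence distinct basic processes, independence $b_1 \mid\mid b_2$ guarantees the recipes of the block moved forward are already well-defined and yield the same (valid) messages over the smaller frame, and the resulting frames agree as substitutions — sidesteps this and is arguably the cleaner justification of what the paper asserts. The price is the bookkeeping you carry out explicitly (freshness of handles, preservation of the maximality condition of the {\sc Block} rule); what it buys is a proof that holds for arbitrary extended simple processes, as the lemma is stated.
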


This result is easily proved, following from the
fact that proper compressed executions are preserved by the two generators of
$\equiv_\Phi$.  The first case is given by
Lemma~\ref{lem:permute}. The second one follows from a simple
observation of the transition rules: only the derived messages
matter, while the recipes that are used to derive them are irrelevant
(as long as validity is ensured).

We established that compressed executions are preserved by changes
of traces within~$\equiv_\Phi$-equivalence classes.
We shall now prove that, by keeping only executions satisfying dependency 
constraints, we actually select exactly one representative in this class.

We lift the ordering on channels to blocks: $\io{c}{M}{w} \prec
\io{c'}{M'}{w'}$ if and only if $c \prec c'$. Finally, we define
$\prec$ on concrete traces as the lexicographic extension of the order on
blocks. 
Given a frame $\Phi$, we say that a plausible trace $\tr$ is
$\Phi$-minimal if it is minimal in its equivalence class modulo $\equiv_\Phi$.

\begin{lemma}
\label{lem:minimal}
Let $A \ssymc{\tr} \triple{\p}{\Phi}{\Set}$
  and $\theta\in\Sol\cs{\Phi}{\Set}$.
  We have that $\tr\theta$ is $\Phi\lambda_\theta$-minimal if, and
  only if, $\theta \models_{\cs{\Phi}{\Set}} \AllDep{\tr}$.
\end{lemma}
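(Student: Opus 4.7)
The plan is to prove both directions by contrapositive, using the geometric content of a generated dependency constraint. A constraint $\constrd{\vect X_p}{\vect w}$ in $\AllDep{\tr}$ is attached to the block $b_p$ at some position $p$ in $\tr$, of channel $d = c_p$; by construction $\vect w$ collects the output handles of the \emph{window} of blocks $b_r, b_{r+1}, \ldots, b_{p-1}$, where $r$ is the latest position preceding $p$ whose channel dominates $d$. Intuitively, satisfying the constraint means that the inputs of $b_p$ cannot be rewritten to avoid reading from $\vect w$, which is exactly the obstruction preventing $b_p$ from being pushed leftward past $b_r$ under the permutation-and-recipe-change relation $\equiv_{\Phi\lambda_\theta}$.

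For $\Rightarrow$, assume for contradiction that some $\constrd{\vect X_p}{\vect w}$ is not satisfied: for every $X_{p,s}$ there is a recipe $M_s \in \T(\Sigma, D_\C(X_{p,s}))$ deriving $(X_{p,s}\theta)(\Phi\lambda_\theta)$ with $\fvp(M_s) \cap \vect w = \varnothing$. I first apply a recipe change at $b_p$ replacing $\vect X_p\theta$ by $\vect M$, then swap the rewritten $b_p$ leftward past $b_{p-1}, b_{p-2}, \ldots, b_r$ one step at a time. Each swap is valid because (i) by definition of $r$ the channels of $b_r, \ldots, b_{p-1}$ are all distinct from $d$; (ii) the outputs of those blocks lie in $\vect w$, hence outside $\fvp(\vect M)$; and (iii) the (fresh) outputs of $b_p$ are not read by any earlier block. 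Plausibility is preserved throughout because $\vect M$ only uses handles from $b_1, \ldots, b_{r-1}$. The resulting trace is $\equiv_{\Phi\lambda_\theta}$-equivalent to $\tr\theta$ but has channel $d$ at position $r$ (previously $c_r \succ d$), hence is lexicographically smaller, contradicting the minimality of $\tr\theta$.

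For $\Leftarrow$, suppose $\tr' \equiv_{\Phi\lambda_\theta} \tr\theta$ with $\tr' \prec \tr\theta$, and let $j$ be the first position at which the channel sequences of $\tr\theta$ and $\tr'$ differ; write $d$ for the channel of $\tr'$ at position $j$, so $d \prec c_j$. The key combinatorial fact is that the independence relation of Definition~\ref{def:independence} requires distinct channels, so $\equiv_\Phi$ never exchanges two blocks on the same channel; counting occurrences of each channel in the first $j-1$ positions (which coincide in $\tr\theta$ and $\tr'$) then pins down the blocks at those positions to be exactly $b_1, \ldots, b_{j-1}$ in both traces. In particular, the block appearing at position $j$ in $\tr'$ must be some $b_p$ with $p > j$, and no block on channel $d$ can lie strictly between positions $j$ and $p$ in $\tr$. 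This ensures that the rank $r$ of Definition~\ref{def:alldep} exists for $b_p$ and satisfies $r \geq j$, so the constraint $\constrd{\vect X_p}{\vect w}$ is generated with $\vect w$ made of handles of $b_r, \ldots, b_{p-1}$, all of indices $\geq j$. Since in $\tr'$ the block at position $j$ may only draw its recipes from the handles of $b_1, \ldots, b_{j-1}$ while still producing the messages $(\vect X_p\theta)(\Phi\lambda_\theta)$, those recipes witness alternative derivations for every $X_{p,s}$ avoiding $\vect w$, contradicting $\theta \models_{\cs{\Phi}{\Set}} \AllDep{\tr}$.

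The main obstacle in both directions is the combinatorial observation that same-channel blocks are never reordered under $\equiv_\Phi$; this is immediate from the definition of independence, but exploiting it to identify the correct window $[r, p-1]$ from Definition~\ref{def:alldep} and the exact set of handles available at position $j$ in $\tr'$ is where the proof does its real work. A delicate secondary point in the $\Rightarrow$ direction is maintaining plausibility during the successive swaps, which follows smoothly from $\fvp(\vect M) \cap \vect w = \varnothing$ forcing $\vect M$ to use only handles from blocks strictly earlier than the window.
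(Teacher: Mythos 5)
Your proof is correct and follows essentially the same route as the paper's: in both directions the key mechanism is the swap of the offending block past the window $[r,p-1]$ of Definition~\ref{def:alldep}, using the facts that $\equiv_\Phi$ never reorders same-channel blocks and that a violated constraint yields recipes drawing only on handles strictly before the window. The only structural differences are cosmetic — the paper proves the forward direction by induction on the trace (handling just the last block's constraint in each step) and picks a minimal representative for the converse, whereas you treat an arbitrary constraint directly and work with any smaller $\equiv_{\Phi\lambda_\theta}$-equivalent trace via the first position where the channel sequences diverge.
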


\begin{proof}
Let $A$ and $\triple{\p}{\Phi}{\Set}$ be such that $A \ssymc{\tr}
\triple{\p}{\Phi}{\Set}$ and $\theta \in\Sol\cs{\Phi}{\Set}$.
Let $\lambda_\theta$ be the substitution associated to $\theta$
w.r.t. $\cs{\Phi}{\Set}$.

\smallskip{}

\noindent $(\Rightarrow)$ We first show that if $\tr\theta$ is
$\Phi\lambda_\theta$-minimal then $\theta \models_{\cs{\Phi}{\Set}}
\AllDep{\tr}$, by induction on the length of the trace $\tr$.
The base case, \ie $\tr=\epsilon$, is straightforward since
$\AllDep{\tr} = \emptyset$.
Now, assume that $\tr=\tr_0 \cdot b$ for some block $b$ and $A \ssymc{\tr_0}
\triple{\p_0}{\Phi_0}{\Set_0} \ssymc{b}
\triple{\p}{\Phi}{\Set}$. Let~$\theta_0$ be the substitution
$\theta$ restricted to variables occurring in
$\cs{\Phi_0}{\Set_0}$, and $\lambda_{\theta_0}$ be the associated
first-order substitution. We have that $\theta_0 \in
\Sol\cs{\Phi_0}{\Set_0}$, $\lambda_{\theta_0}$ coincides with $\lambda_\theta$
on variables occurring in $\cs{\Phi_0}{\Set_0}$, and
$\Phi_0\lambda_{\theta_0}$ coincides with $\Phi\lambda_\theta$ on the
domain of $\Phi_0\lambda_{\theta_0}$. As a prefix of $\tr\theta$,
we have that $\tr_0\theta$ is $\Phi\lambda_{\theta}$-minimal. 
We can thus apply our induction hypothesis on  $A \ssymc{\tr_0}
\triple{\p_0}{\Phi_0}{\Set_0}$ and $\theta_0
\in\Sol\cs{\Phi_0}{\Set_0}$.
Assume that $b = \io{c}{X}{w}$.
If $\Dep{\tr_0}{c} = \emptyset$, we immediately conclude.
Otherwise, it only remains to show that $\theta \models_{\cs{\Phi}{\Set}}
\constrd{\vect X}{\Dep{\tr_0}{c}}$.
By definition of the generation of dependency constraints,  we know that  
$\tr_0$ is of the form
$\tr'_0 \cdot b_{c_0} \cdot b_{c_1}\cdot \ldots \cdot b_{c_n}$ where:
\begin{itemize}
\item $\forall\,0\le i\le n,\, b_{c_i}=\myio_{c_i}(\vect{X_i},\vect{w_i})$,
\item $c\prec c_0$ and $c_i \prec c$ for all $0 < i \leq n$; and
\item $\Dep{\tr}{c} = \{ w ~|~ w \in \vect{w_i} \mbox{ with } 0 \leq i
    \leq n\}$.
\end{itemize} 
Assume that the dependency constraint is not satisfied, this
means that for some $\vect M$ such that
$(\vect X\theta)(\Phi\lambda_\theta) =_\E (\vect
M)(\Phi\lambda_\theta)$ and $\valid((\vect
M)(\Phi\lambda_\theta))$,
we have that $\fv^1(\vect M) \cap \{w ~|~ w
\in \vect{w_i} \mbox{ with } 0 \leq i \leq n\} = \emptyset$.
Therefore, we have that
\[
\begin{array}{rll}
\tr\theta &=& \tr_0\theta\cdot b\theta \\
& =&\tr'_0\theta\cdot b_{c_0}\theta\cdot b_{c_1}\theta\cdot \ldots\cdot 
  b_{c_n}\theta\cdot \io{c}{X\theta}{w}\\
 &\equiv_{\Phi\lambda_\theta}& \tr'_0\theta\cdot \io{c}{M}{w}\cdot 
  b_{c_0}\theta\cdot b_{c_1}\theta\cdot \ldots\cdot b_{c_n}\theta.
\end{array}
\]
Since $c \prec c_0$, this would contradict the
$\Phi\lambda_\theta$-minimality of $\tr\theta$. Hence the result.

\smallskip{}
\noindent $(\Leftarrow)$
Now, assuming that $\tr\theta$ is not $\Phi\lambda_\theta$-minimal,
we shall establish that there is a dependency constraint $\constrd{\vect
  X}{\vect w}
\in \AllDep{\tr}$ that is not satisfied by $\theta$.
Let $\tr_m$ be a $\Phi\lambda_\theta$-minimal trace of the equivalence
class of $\tr\theta$. We have in particular $\tr_m
\equiv_{\Phi\lambda_\theta} \tr\theta$ and $\tr_m \prec \tr\theta$.

Let $\tr_0$ (resp.\ $\tr_m^0$) be the longest prefix of $\tr$
(resp.\ $\tr_m$) such that $(\tr_0\theta =_\E
\tr_m^0)\Phi\lambda_\theta$. We have that $\tr = \tr_0\cdot b\cdot\tr_1$  and
$\tr_m = \tr_m^0\cdot b_m\cdot \tr_m^1$ with $c_m \prec c$ where $c_m$
(resp.\ $c$) 
is the channel used in block $b_m$ (resp.\ $b$). 
By definition of $\equiv_{\Phi\lambda_\theta}$, block $b_m$ must have
a counterpart in $\tr$ and, more precisely, in $\tr_1$. We thus have a
more precise decomposition of $\tr$: $\tr =
\tr_0\cdot b\cdot\tr_{11}\cdot b'_m\cdot \tr_{12}$ such that $(b'_m\theta =_\E
b_m)\Phi\lambda_\theta$.

Let $b'_m = \io{c_m}{X}{w}$. We now show that
the constraint 
$\constrd{\vect X}{\Dep{\tr_0\cdot b\cdot \tr_{11}}}{c_m}$ is in $\AllDep{\tr}$
and is not satisfied by $\theta$, implying 
$\theta \not\models_{\cs{\Phi}{\Set}} \AllDep{\tr}$. We have seen that 
$(b'_m\theta =_\E b_m)\Phi\lambda_\theta$ and $c_m \prec c$.
Since $c_m \prec c$, by definition of $\Dep{\cdot}{\cdot}$ we deduce that 
$\emptyset\neq\Dep{\tr_0\cdot b\cdot \tr_{11}}{c_m}
\subseteq \{\; w \;|\; \Out(d,w) \mbox{ occurs in } b\cdot \tr_{11}\text{ for some 
}d,w \;\}$.
But, since we also know that $(b'_m\theta =_\E b_m)\Phi\lambda_\theta$
and $\tr_m = \tr^0_m\cdot b_m\cdot \tr_m^1$ is a plausible trace, we have that
$b_m = \io{c_m}{M}{w_m}$ for some recipes $M$ such that
$\vect{M}\Phi\lambda_\theta =_\E (\vect{X}\theta)\Phi\lambda_\theta$, $\valid(\vect{M}\Phi\lambda_\theta)$,
and $\fv^1(M) \cap \Dep{\tr_0\cdot b\cdot \tr_{11}}{c_m} = \emptyset$. This allows
us to conclude that $\theta \not\models_{\cs{\Phi}{\Set}} \AllDep{\tr}$.
\end{proof}

We are now able to show that the notion of trace equivalence
  based on this reduced semantics coincides with the compressed one
  (as well as its symbolic counterpart as given in
  Definition~\ref{def:equiv-comp-symb}).
Even though the reduced
semantics is based on the symbolic compressed semantics, it is more natural to
establish the theorem by going back to the concrete compressed semantics,
because we have to consider a concrete execution to check whether
dependency constraints are satisfied or not in our reduced semantics
anyway.

\begin{theorem}
\label{theo:reduced}
For any extended simple processes $A$ and $B$, we have that:
\begin{center}
$A \inceintc B$ if and only if $A \incesymdff B$.
\end{center}
\end{theorem}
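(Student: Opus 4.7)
The central tool will be the characterization established in Lemma~\ref{lem:minimal}: a solution $\theta$ of a symbolic trace $\tr$ satisfies the dependency constraints $\AllDep{\tr}$ exactly when the concrete trace $\tr\theta$ is $\prec$-minimal inside its $\equiv_{\Phi\lambda_\theta}$-equivalence class. Combined with Theorem~\ref{thrm:sintc-ssymc}, which identifies $\inceintc$ with $\incesymc$, this reduces the theorem to moving between symbolic and concrete traces while preserving (or recovering) minimality. A small preliminary observation will also be useful: when $\Phi \statequiv \Psi$, the relations $\equiv_\Phi$ and $\equiv_\Psi$ coincide, because the independence relation $\mid\mid$ between blocks (Definition~\ref{def:independence}) is purely syntactic on recipes, and the relation $(b_1 =_\E b_2)\Phi$ refers only to equalities and validity of recipes applied to $\Phi$, both of which are preserved under static equivalence.

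For the $(\Rightarrow)$ direction, I would translate $A \inceintc B$ into $A \incesymc B$ via Theorem~\ref{thrm:sintc-ssymc}, pick a symbolic execution $A \ssymc{\tr} \triple{\p'}{\Phi'}{\Set_A}$ and $\theta \in \Sol\cs{\Phi'}{\Set_A}$ satisfying $\AllDep{\tr}$, and obtain the matching symbolic execution $B \ssymc{\tr} \triple{\q'}{\Psi'}{\Set_B}$ with the same~$\theta$ and $\Phi'\lambda^A_\theta \statequiv \Psi'\lambda^B_\theta$. Lemma~\ref{lem:minimal} turns satisfaction of $\AllDep{\tr}$ on the $A$-side into $\Phi'\lambda^A_\theta$-minimality of $\tr\theta$, which by the frame invariance above is also $\Psi'\lambda^B_\theta$-minimality, and Lemma~\ref{lem:minimal} applied on the $B$-side concludes that $\theta$ satisfies $\AllDep{\tr}$ there as well.

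For the $(\Leftarrow)$ direction, I would take a concrete execution $A \sintc{\tr_c} \proc{\p_c}{\Phi_c}$ and lift it via Proposition~\ref{pro:symb-completeness} to a symbolic execution $A \ssymc{\tr_s} \triple{\p_s}{\Phi_s}{\Set}$ with a solution $\theta$ such that $\tr_s\theta = \tr_c$. If $\theta \models_{\cs{\Phi_s}{\Set}} \AllDep{\tr_s}$, the hypothesis $A \incesymdff B$ already produces a symbolic matching in~$B$ that Proposition~\ref{pro:symb-soundness} converts into the desired concrete matching of $\tr_c$. Otherwise, by Lemma~\ref{lem:minimal}, $\tr_c$ is not $\Phi_c$-minimal, and I would pick a $\prec$-minimal representative $\tr_m \equiv_{\Phi_c} \tr_c$. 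Lemma~\ref{lem:permute-propre-concret} then yields $A \sintc{\tr_m} \proc{\p_c}{\Phi_c}$; lifting this execution symbolically produces a solution that now satisfies $\AllDep{\cdot}$ (Lemma~\ref{lem:minimal} again); applying $A \incesymdff B$ and Proposition~\ref{pro:symb-soundness} yields $B \sintc{\tr_m} \proc{\q_c}{\Psi_c}$ with $\Phi_c \statequiv \Psi_c$; and Lemma~\ref{lem:permute-propre-concret} applied on $B$'s side (using $\equiv_{\Phi_c} = \equiv_{\Psi_c}$) converts $\tr_m$ back into $\tr_c$, as required.

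The main obstacle will be that a compressed trace may end with an \emph{improper} block, while Lemma~\ref{lem:permute-propre-concret} is stated only for traces of proper blocks; moreover the $\prec$-minimum in the $\equiv_{\Phi_c}$-class could displace an improper block away from the end, which is forbidden by the compressed semantics since the {\sc Failure} rule reduces the whole process to~$\varnothing$. I would handle this by splitting $\tr_c = \tr_c^{\mathrm{pro}}\cdot\tr_c^{\mathrm{imp}}$, applying the argument above to $\tr_c^{\mathrm{pro}}$ in order to obtain a matching proper prefix on $B$, and then replaying $\tr_c^{\mathrm{imp}}$ separately: since an improper block produces no output, this ultimately boils down to another instance of $A \incesymdff B$ on a suitably chosen compressed symbolic trace ending with~$\tr_c^{\mathrm{imp}}$ whose solution satisfies the appropriate dependency constraints, combined with the already established static equivalence of the frames.
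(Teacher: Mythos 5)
Your plan is correct and follows essentially the same route as the paper's proof: Lemma~\ref{lem:minimal} as the bridge between satisfaction of $\AllDep{\cdot}$ and $\prec$-minimality, the invariance of $\equiv_\Phi$ (hence of minimality) under static equivalence of frames, Lemma~\ref{lem:permute-propre-concret} together with Propositions~\ref{pro:symb-soundness} and~\ref{pro:symb-completeness} to move between the concrete trace and its minimal representative, and a separate treatment of a trailing improper block whose minimal position may lie strictly inside the trace. The only step you leave implicit is the final recombination on $B$'s side in that last case — having matched the proper part $\tr_1\cdot\tr_2$ and, separately, the minimal trace $\tr_1\cdot b_m$ ending with the improper block, one still has to append $b_m$ after $\tr_2$, which the paper justifies by observing that the channel of $b_m$ does not occur in $\tr_2$ (so the corresponding basic process of the simple process $B$ is untouched by $\tr_2$).
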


\begin{proof}
Let $A = \proc{\p}{\Phi}$ and $B = \proc{\q}{\Psi}$ be two extended simple processes.

\smallskip{}
\noindent ($\Rightarrow$)
Consider an execution of the form $\triple{\p}{\Phi}{\emptyset}
\ssymc{\tr_s} \triple{\p_s}{\Phi_s}{\Set_A}$ and a substitution $\theta \in
\Sol\cs{\Phi_s}{\Set_A}$ such that $\theta \models_{\cs{\Phi_s}{\Set_A}}
\AllDep{\tr_s}$. Thanks to Proposition~\ref{pro:symb-soundness}, we have
that $\proc{\p}{\Phi} \sintc{\tr_s\theta}
\proc{\p_s\lambda^A_\theta}{\Phi_s\lambda^A_\theta}$ where 
$\lambda^A_\theta$ is the
substitution associated to $\theta$ w.r.t. $\cs{\Phi_s}{\Set_A}$.

Since $A \inceintc B$, we deduce that there exists
$\proc{\q'}{\Psi'}$ such that:
\[
B \stackrel{\mathsf{def}}{=} \proc{\q}{\Psi}\sintc{\tr_s\theta}
\proc{\q'}{\Psi'} \mbox{ and } \Phi_s\lambda^A_\theta
\statequiv \Psi'
\]
Relying on Proposition~\ref{pro:symb-completeness}, we deduce that
there exists $\triple{\q_s}{\Psi_s}{\Set_B}$ such that:
\[
\triple{\q}{\Psi}{\emptyset} \ssymc{\tr_s}
\triple{\q_s}{\Psi_s}{\Set_B}, \theta \in \Sol\cs{\Psi_s}{\Set_B} \mbox{
  and } \proc{\q_s\lambda^B_{\theta}}{\Psi_s\lambda^B_{\theta}} = \proc{\q'}{\Psi'}
\]
where $\lambda^B_{\theta}$ is the
substitution associated to $\theta$ w.r.t. $\cs{\Psi_s}{\Set_B}$.
The fact that we get the same symbolic trace $\tr_s$ and same solution $\theta$ comes from
the third point of Proposition~\ref{pro:symb-completeness} 
and the flexibility of the symbolic semantics $\ssymc{\cdot}$ that allows us to choose
second order variables of our choice 
(as long as they are fresh).

Lemma~\ref{lem:minimal} tells us that $\tr\theta$ is
$\Phi_s\lambda^A_\theta$-minimal. 
Since $\Phi_s\lambda^A_\theta
\statequiv \Psi_s\lambda^B_\theta$, we easily deduce that $\tr_s\theta$
is also $\Psi_s\lambda^B_\theta$-minimal, and thus 
Lemma~\ref{lem:minimal} tells us that $\theta  \models_{\cs{\Psi_s}{\Set_B}}
\AllDep{\tr}$. This allows us to conclude.

\smallskip{}
\noindent ($\Leftarrow$)
Consider an execution of the form $\proc{\p}{\Phi}
  \sintc{\tr} \proc{\p'}{\Phi'}$. We prove the result by induction on
the number of blocks involved in~$\tr$, and
we distinguish two cases depending on whether~$\tr$ ends with an
improper block or not.

\smallskip{}
\noindent \stef{\underline{Case where $\tr$ is made of proper blocks.}}
Let $\tr_m$ be a $\Phi'$-minimal trace in the
equivalence class of $\tr$.
Lemma~\ref{lem:permute-propre-concret} tells us that
$\proc{\p}{\Phi} \sintc{\tr_m} \proc{\p'}{\Phi'}$.
Thanks to Proposition~\ref{pro:symb-completeness}, we know that there
exist $\triple{\p_s}{\Phi_s}{\Set_A}$, $\tr^s_m$, and $\theta$ such
that:
\[
\triple{\p}{\Phi}{\emptyset} \ssymc{\tr^s_m}
\triple{\p_s}{\Phi_s}{\Set_A}, 
\theta \in \Sol\cs{\Phi_s}{\Set_A},
\proc{\p_s\lambda^A_\theta}{\Phi_s\lambda^A_\theta} =
\proc{\p'}{\Phi'}, \mbox{ and }
\tr^s_m\theta =\tr_m.
\]
Using Lemma~\ref{lem:minimal}, we deduce that $\theta
\models_{\cs{\Phi_s}{\Set_A}} \AllDep{\tr^s_m}$.
By hypothesis, $A \incesymdff B$, hence:
\begin{itemize}
\item $\triple{\q}{\Psi}{\varnothing} \ssymc{\;\tr^s_m}
  \triple{\q_s}{\Psi_s}{\Set_B}$ with $\theta \in
  \Sol(\Psi_s;\Set_B)$, and $\theta \models_{\cs{\Psi_s}{\Set_B}} \AllDep{\tr^s_m}$;
\item $\Phi_s\lambda^A_{\theta} \statequiv
  \Psi_s\lambda^B_{\theta}$ where $\lambda^B_{\theta}$ is the substitution
  associated to $\theta$ w.r.t. $\cs{\Phi_s}{\Set_B}$
\end{itemize}
Thanks to Proposition~\ref{pro:symb-soundness}, we deduce that 
\[
\proc{\q}{\Psi} \sintc{\tr^s_m\theta} \proc{\q_s\lambda^B_\theta}{\Psi_s\lambda^B_\theta}.
\]
Moreover, since $\Phi' = \Phi_s\lambda^A_\theta \statequiv \Psi_s\lambda^B_\theta$, we get
$\tr_m \equiv_{\Psi_s\lambda^B_\theta} \tr$ from the fact that $\tr_m
\equiv_{\Phi'} \tr$. Applying Lemma~\ref{lem:permute-propre-concret},
we conclude that
\[
\proc{\q}{\Psi} \sintc{\tr}
\proc{\q_s\lambda^B_\theta}{\Psi_s\lambda^B_\theta} \mbox{ with }
\Phi' \statequiv \Psi_s\lambda^B_\theta.
\]
\smallskip{}

\noindent \stef{\underline{Case where
  $\tr$ is of the form $\tr_0\cdot b$ where $b$ is an improper block.}}
We have that:
\[
\proc{\p}{\Phi}\sintc{\tr_0} \proc{\p'}{\Phi'}
\sintc{b} \proc{\emptyset}{\Phi'}
\]
Let $\tr_m$ be a $\Phi'$-minimal trace in the
equivalence class of $\tr$.
By definition of the relation $\equiv$,  block $b$ must have a
counterpart in $\tr_m$. We thus have that $\tr_m$ is of the form
$\tr_m = \tr_1\cdot b_m\cdot \tr_2$ where $b_m$ is the improper block
corresponding to~$b$. We do not necessarily have that $b = b_m$ but we
know that $(b =_\E b_m)\Phi'$. If $\tr_2$ is an empty trace, \ie $b_m$
is at the end of $\tr_m$, the reasoning from the previous case
applies.

%
Otherwise, we have that $\tr_1\cdot \tr_2 \equiv_{\Phi'} \tr_0$, 
$\tr_1\cdot \tr_2$ is $\Phi'$-minimal, and $\tr_2$ is non
  empty. Thus, thanks to 
Lemma~\ref{lem:permute-propre-concret}, we
have that:
\[
\proc{\p}{\Phi} \sintc{\tr_1} \proc{\p_1}{\Phi_1} \sintc{\tr_2} \proc{\p'}{\Phi'}
\]
Since $\tr_1\cdot \tr_2$ is made of proper blocks, we can apply our
previous reasoning, and conclude that there
exist $\proc{\q_1}{\Psi_1}$ and $\proc{\q'}{\Psi'}$ such that:
\[
\proc{\q}{\Psi} \sintc{\tr_1} \proc{\q_1}{\Psi_1} \sintc{\tr_2}
\proc{\q'}{{\Psi'}}
\;\mbox{ and }\; \Phi' \statequiv \Psi'
\;\mbox{ (and thus } (b =_\E b_m)\Psi' \mbox{).}
\]

Since we know that $\proc{\p'}{\Phi'} \sintc{b}
\proc{\emptyset}{\Phi'}$, and $\tr_0\cdot b \equiv_{\Phi'} \tr_1\cdot b_m\cdot \tr_2$,
we deduce that $\proc{\p_1}{\Phi_1} \sintc{b_m}
\proc{\emptyset}{\Phi_1}$. We have that $\proc{\p}{\Phi}
\sintc{\tr_1\cdot b_m} \proc{\emptyset}{\Phi_1}$, and $\tr_1\cdot b_m$ is a
$\Phi_1$-minimal trace (note that the improper block is at the end).
Thus, applying our induction hypothesis, we have that:
\[
\proc{\q}{\Psi} \sintc{\tr_1} \proc{\q_1}{\Psi_1} \sintc{b_m} \proc{\emptyset}{\Psi_1}.
\]

Since the channel used in $b_m$ does not
occur in $\tr_2$, we deduce that
\[
\proc{\q}{\Psi} \sintc{\tr_1} \proc{\q_1}{\Psi_1} \sintc{\tr_2}
\proc{\q'}{\Psi'} \sintc{b_m} \proc{\emptyset}{\Psi'}
\]
Relying on Lemma~\ref{lem:permute-propre-concret} and the fact that
$(b_m =_\E b)\Psi'$, we deduce that 
\[
\proc{\q}{\Psi} \sintc{\tr_0}
\proc{\q'}{\Psi'} \sintc{b} \proc{\emptyset}{\Psi'}.\tag*{\qEd}
\]

\def\popQED{}
\end{proof}

Putting together Theorem~\ref{theo:comp-soundness-completeness} and Theorem~\ref{theo:reduced}, we are now able to
state our main result: our notion of reduced trace equivalence
actually coincides with the usual notion of trace equivalence.
\lucca{This result is generic and holds for an arbitrary equational
  theory, as well as for
  an arbitrary notion of validity (as defined in
  Section~\ref{subsec:syntax}).}

\begin{corollary}
For any initial simple processes $A$ and $B$, we have that:
\begin{center}
  $A \eint B$ if and only if $A \esymdff B$.
\end{center}
\end{corollary}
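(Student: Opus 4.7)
The plan is simply to chain the two main theorems already established. The corollary is the combination of Theorem~\ref{theo:comp-soundness-completeness} (concrete equivalence coincides with compressed equivalence on initial simple processes) and Theorem~\ref{theo:reduced} (compressed trace inclusion coincides with reduced trace inclusion, on arbitrary extended simple processes), so no new argument is needed.

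First, assume $A$ and $B$ are initial simple processes. By Theorem~\ref{theo:comp-soundness-completeness}, we have $A \eint B$ iff $A \eintc B$. This reduces the problem to showing that $A \eintc B$ iff $A \esymdff B$.

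Next, since $\eintc$ is defined as the conjunction of $\inceintc$ in both directions, and similarly $\esymdff$ is the conjunction of $\incesymdff$ in both directions, it suffices to prove one-sided inclusion. Theorem~\ref{theo:reduced} gives exactly $A \inceintc B$ iff $A \incesymdff B$ (and symmetrically $B \inceintc A$ iff $B \incesymdff A$), so combining the two directions we obtain $A \eintc B$ iff $A \esymdff B$. Chaining the two equivalences yields $A \eint B$ iff $A \esymdff B$, concluding the proof.

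There is no real obstacle here, only a bookkeeping check that the hypotheses match: Theorem~\ref{theo:comp-soundness-completeness} requires the processes to be initial, while Theorem~\ref{theo:reduced} holds for any extended simple processes, so the ``initial'' assumption carries through harmlessly. The proof will accordingly be a two-line corollary simply invoking both theorems.
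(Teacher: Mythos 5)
Your proof is correct and is exactly the argument the paper intends: the corollary is stated there as an immediate consequence of Theorem~\ref{theo:comp-soundness-completeness} and Theorem~\ref{theo:reduced}, chained precisely as you do, with the ``initial'' hypothesis needed only for the first theorem. Nothing further is required.
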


\section{Integration in \apte}
\label{sec:apte}

We validate our approach by integrating our refined semantics
in the \apte tool. As we shall see, the compressed semantics
can easily be used as a replacement for the usual semantics in verification 
algorithms. However, exploiting the reduced semantics is not trivial,
and requires to adapt the constraint resolution procedure. 

\stef{It is beyond the scope of this paper to provide a detailed summary of
how the verification tool \apte actually works. A 50 pages paper
describing solely the
constraint resolution procedure of \apte
is available~\cite{CCD-ic16}. \david{This procedure manipulates matrices
of constraint systems, with additional kinds of constraints necessary for its
inner workings.} Proofs of the soundness,
completeness and termination of the algorithm are available in a long
and technical appendix (more than 100 pages).}

\stef{In order to show how our reduced semantics have been integrated
  in the constraint solving procedure of \apte, we choose to provide a
high-level axiomatic presentation of \apte's algorithm. This
allows us to prove that our integration is correct \david{without having
to enter into complex, unnecessary details}
of \apte's algorithm.
\david{Our axioms} are consequences of
results stated and proved in~\cite{Cheval-phd2012} and have been
written in concertation with Vincent Cheval. However, due to 
some changes in the presentation, proving them will require to adapt
most of the proofs. It is therefore beyond the scope of this paper to
\david{formally prove that our axioms are satisfied by the concrete
procedure.}}


We start this section
with a high-level axiomatic presentation of \apte's algorithm,
following the original procedure \cite{cheval-ccs2011} but
assuming public channels only (sections~\ref{subsec:nutshell}, \ref{subsec:specs}).
\stef{The purpose of this  presentation is to
provide enough details about \apte to explain how our
optimisations have been integrated, leaving out unimportant details.}
\lucca{Next, we show that this axiomatization is sufficient to prove soundness and
completeness of \apte w.r.t.~ trace equivalence (Section~\ref{subsec:proof}).}
Then we explain the simplifications induced by the restriction to simple
processes, and how compressed semantics can be used to enhance the
procedure and prove the correctness of this integration (Section~\ref{subsec:apte-compression}).
We finally describe how our reduction technique can be integrated,
and prove the correctness of this integration  (Section~\ref{subsec:apte-dependency}).
\david{
We present some benchmarks in
Section~\ref{subsec:benchmarks}, showing that our integration
allows to effectively benefit from both of our
partial order reduction techniques.
}


\subsection{\apte in a nutshell}
\label{subsec:nutshell}
\apte has been designed for a fixed equational theory
 $\E_\apte$ (formally defined in Example~\ref{ex:signature-a})
containing standard cryptographic primitives.
It relies on a notion of message which requires that only constructors
are used, and a semantics in which actions are blocked unless they
are performed on such messages.
This fits in our
framework, described in Section~\ref{sec:model},
by taking $\M = \T(\Sigma_c,\N)$.

We now give a high-level description of the algorithm that is
implemented in \apte. 
The main idea is to perform all possible symbolic executions
of the processes, keeping together
the processes that can be reached using the same sequence of symbolic
action. Then, at each step of this symbolic execution, the procedure
checks that for every solution of every process on one side, there is
a corresponding solution for some process on the other side so that
the resulting frames are in static equivalence. This check
for \emph{symbolic equivalence} is not obviously decidable. To achieve
it, \apte's procedure relies on a set of rules for simplifying sets of
constraint systems. These rules are used to put constraint systems in a
\emph{solved form} that enables the efficient verification of symbolic
equivalence.

The symbolic execution used in \apte is the same as described in 
Section~\ref{sec:constraint-solving}. However, \apte's constraint resolution
procedure introduces new kinds of constraints.
Fortunately, we do not need to enter into the details of those constraints
and how they are manipulated.
Instead, we treat them axiomatically.

\begin{definition}[extended constraint system/symbolic process]
\label{def:ext-constraint-system}
An \emph{extended constraint system}
$\C^+ = \csext{\Phi}{\Set}{\Set^+}$ consists of a
constraint system $\C = \cs{\Phi}{\Set}$ together with an additional set
$\Set^+$ of \emph{extended constraints}. We treat this latter set abstractly,
only assuming an associated satisfaction relation,
written $\theta\models\Set^+$, such that $\theta\models\emptyset$
always holds, and $\theta\models\Set^+_1$ implies
$\theta\models\Set^+_2$ when $\Set^+_2 \subseteq \Set^+_1$.
We define the set of solutions of $\C^+$ as
$ \Sol^+(\C^+) =
\{\; \theta\in\Sol(\C) \;|\; \theta\models\Set^+ \;\}$.

\label{def:ext-symb-proc}
An \emph{extended symbolic process }
$\quadr{\p}{\Phi}{\Set}{\Set^+}$ is a symbolic process with
an additional set of extended constraints~$\Set^+$.
\end{definition}

We shall denote extended constraint systems by $\Set^+$, $\Set^+_1$,
etc. Extended symbolic processes will be denoted by $A^+$, $B^+$, etc.
Sets of extended symbolic processes will simply be denoted by
$\as$, $\bs$, etc.
For convenience, we extend $\Sol$ and $\Sol^+$ to symbolic processes
and extended symbolic processes in the natural way:
\[
\Sol\triple{\p}{\Phi}{\Set} = \Sol\cs{\Phi}{\Set}
\quad\mbox{and}\quad
\Sol^+\quadr{\p}{\Phi}{\Set}{\Set^+} =
\Sol^+\csext{\Phi}{\Set}{\Set^+}.
\]
We may also use the following notation to translate back and forth
between symbolic processes and extended symbolic processes:
\[
\extof{\triple{\p}{\Phi}{\Set}}=\quadr{\p}{\Phi}{\Set}{\emptyset}
\quad\mbox{and}\quad
\ofext{\quadr{\p}{\Phi}{\Set}{\Set^+}}=\triple{\p}{\Phi}{\Set}
.
\]

We can now introduce the key notion of symbolic equivalence between
sets of extended symbolic processes, or more precisely between
their underlying extended constraint systems.

\begin{definition}[symbolic equivalence]
  \label{def:symbeqset}
  Given two sets of extended symbolic processes~$\as$ and~$\bs$,
  we have that $\as\symbinclset \bs$
  if for every $A^+ = \quadr{\p_A}{\Phi_A}{\Set_A}{\Set^+_A}\in\as$, for every ${\theta \in\Sol^+(A^+)}$, there
  exists $B^+ = \quadr{\p_B}{\Phi_B}{\Set_B}{\Set^+_B} \in\bs$ such that $\theta \in\Sol^+(B^+)$ and
  $\Phi_A\lambda^A_\theta\statequiv\Phi_B\lambda^B_\theta$
where $\lambda^A_\theta$ (resp.~$\lambda^B_\theta$) is the
substitution associated to $\theta$
w.r.t. $\cs{\Phi_A}{\Set_A}$ (resp.~$\cs{\Phi_B}{\Set_B}$).
  We say that $\as$ and $\bs$ are in \emph{symbolic equivalence},
  denoted by $\as\symbeqset\bs$, if  $\as\symbinclset \bs$
  and $\bs\symbinclset \as$.
\label{def:apte:symbEqu}
\end{definition}

The whole trace equivalence procedure can finally be abstractly described by
means of a transition system $\ssyma{}$ on pairs of sets of extended
symbolic processes, labelled by observable symbolic actions.
Informally, the intent is that a pair of processes is in trace
equivalence iff only symbolically equivalent pairs
may be reached from the initial pair using $\ssyma{}$.

We now define $\ssyma{}$ formally.
A transition $(\as;\bs) \ssyma{}$ can take place iff
$\as$ and $\bs$ are in symbolic equivalence\footnote{
  This definition yields infinite executions for $\ssyma{}$ if no
  inequivalent pair is met. Each such execution eventually
  reaches $(\emptyset;\emptyset)$ while,
  in practice, executions are obviously not explored past empty pairs.
  We chose to introduce this minor gap to make the theory more uniform.
}.

Each transition for some observable action $\alpha$ consists of two steps, \ie
$(\as;\bs)\ssyma{\alpha}(\as'';\bs'')$ iff
$(\as;\bs)\ssymaone{\alpha}(\as';\bs')$ and $(\as';\bs')\ssymatwo(\as'';\bs'')$,
where the latter transitions are described below:
\begin{enumerate}
\item The first part of the transition consists in
  performing an observable symbolic action $\alpha$ (either $\In(c,X)$ or
  $\Out(c,w)$) followed by all available unobservable ($\tau$) actions.
  This is done for each extended symbolic process that occurs in
  the pair of sets, and each possible transition of one such process
  generates a new element in the target set.
  Formally, we have $(\as;\bs) \ssymaone{\alpha} (\as';\bs')$
  if
  \[
    \as' = \bigcup_{(\p;\Phi;\Set;\Set^+)\in\as} \bigl\{\;
      (\p';\Phi';\Set';\Set^+) \;|\; (\p;\Phi;\Set) \ssym{\alpha\cdot \tau^*} 
      (\p';\Phi';\Set') \not\ssym{\tau}
    \;\bigr\},
  \]
  and correspondingly for $\bs'$.
  Note that elements of $(\as;\bs)$ that cannot perform $\alpha$
  are simply discarded, and that the constraint systems
  of individual processes are enriched according to their own
  transitions whereas the extended part of constraint systems are
  left unchanged.
  For a fixed symbolic action $\alpha$, the $\ssymaone{\alpha}$
  transition is deterministic. The choice of names for
  handles and second-order variables does not matter, and therefore the 
  relation $\ssymaone{}$ is also finitely branching.
\item The second part consists in
  simplifying the constraint systems of $(\as';\bs')$
  until reaching \emph{solved forms}.
  This part of the transition is non-deterministic, \ie several different
  $(\as'';\bs'')$ may be reached depending on various choices, \eg
  whether a message is derived by using a function symbol or
  one of the available handles.
  Although branching, this part of the transition is finitely branching.
  Moreover, only extended constraints may change:
  for any $(\p;\Phi;\Set;\Set^+_1)\in\as''$ there must be 
  a $\Set^+_0$ such that $(\p;\Phi;\Set;\Set^+_0)\in\as'$,
  and similarly for $\bs''$.
\end{enumerate}

An important invariant of this construction is that all the processes
occurring in any of the two sets of processes have constraint systems
that share a common structure.
 More precisely the transitions maintain
that for any $(\p_1;\Phi_1;\Set_1;\Set^+_1),(\p_2;\Phi_2;\Set_2;\Set^+_2)\in\as\cup\bs$,
$\fvs(\Set_1)=\fvs(\Set_2)$ and
$\dedi{D}{X}{x}$ occurs in $\Set_1$ iff it occurs in $\Set_2$.

\begin{example}
\label{ex:toy1}
Consider the simple basic processes
 $R_i=\In(c_i,x_i).\testt{x_i=\ok}{\Out(c_i,n_i)}$
for $i\in\mathbb{N}, x_i\in\X, n_i\in\N$, $\ok$ a public constant.
We illustrate the roles of $\ssymaone{}$ and $\ssymatwo{}$ on the pair
$(\{Q_0\};\{Q_0\})$ where $Q_0 =
\quadr{\{R_1,R_2\}}{\emptyset}{\emptyset}{\emptyset}$.
We have that
$$
(\{Q_0\};\{Q_0\})\ssymaone{\In(c_2,X_2)}(\{Q_0^t,Q_0^e\};\{Q_0^t,Q_0^e\})
$$
where $Q_0^t$ and $Q_0^e$ are the two symbolic processes
one may obtain by executing the observable action $\In(c_2,X_2)$,
depending on the conditional after that input. Specifically, we have:
\begin{itemize}
\item 
$Q_0^t=\quadr{\{R_1,\Out(c_2,n_2)\}}
{\emptyset}
{\{\dedi{X_2}{\emptyset}{x_2},x_2\eqi \ok\}}
{\emptyset}$
\item 
$Q_0^e=\quadr{\{R_1\}}
{\emptyset}
{\{\dedi{X_2}{\emptyset}{x_2},x_2\neqi \ok\}}
{\emptyset}$
\end{itemize}
After this first step, $\ssymatwo{}$ is going to non-deterministically 
solve the constraint systems.
From the latter pair, it will produce only two alternatives.
Indeed, if $x_2\eqi \ok$ holds then \apte infers that the only recipe
that it needs to consider is the recipe $R=\ok$. In that case, the only
considered solution is $\{X_2\mapsto \ok\}$.
Otherwise, $x_2\neqi \ok$ holds but, at this point,
no more information is inferred on $X_2$.
Formally,
\[
\begin{array}[]{cl}
(\{Q_0^t,Q_0^e\};\{Q_0^t,Q_0^e\})\ssymatwo{}(\{Q_1^t\};\{Q_1^t\})&\\
(\{Q_0^t,Q_0^e\};\{Q_0^t,Q_0^e\})\ssymatwo{}(\{Q_1^e\};\{Q_1^e\})
\end{array}
\]
where
\begin{itemize}
\item
  $Q_1^t=\quadr{\{R_1,\Out(c_2,n_2)\}} {\emptyset}
  {\{\dedi{X_2}{\emptyset}{x_2},x_2\eqi \ok\}} {\Set_1^t}$
  and $\Sol^+(Q_1^t)=\{\Theta^t_1\}$ 
where
  $\Theta^t_1=\{X_2\mapsto\ok\}$;
\item
  $Q_1^e=\quadr{\{R_1\}} {\emptyset}
  {\{\dedi{X_2}{\emptyset}{x_2},x_2\neqi \ok\}} {\Set_1^e}$.
\end{itemize}
\stef{The content of $\Set_1^t$ and $\Set_1^e$ is not important.}
\stef{Note} that after ${\ssymatwo}$, only one alternative remains (\ie there is only
one extended symbolic process on each side of the resulting pair) because only one of the two processes
$Q_0^t,Q_0^e$ complies with the choices made in each branch.
\end{example}

\begin{definition}[${\eqapte}$] \label{def:eqapte}
Let $A = \proc{\p_A}{\Phi_A}$ and $B = \proc{\p_B}{\Phi_B}$ be two
processes. We say that $A \eqapte B$ when
$\as\symbeqset\bs$ for any pair $(\as;\bs)$ such that 
${(\quadr{\p_A}{\Phi_A}{\emptyset}{\emptyset};\quadr{\p_B}{\Phi_B}{\emptyset}{\emptyset})
  \ssyma{\tr}(\as;\bs)}$.
\end{definition}

As announced above, we expect $\eqapte$ to coincide with trace
equivalence. We shall actually prove it (see Section~\ref{subsec:proof}), 
after having introduced a few axioms (Section~\ref{subsec:specs}).
We note, however, that this can only hold under some minor assumptions
on processes.
In practice, \apte does not need those assumptions but they allow
for a more concise presentation.

\begin{definition}
  A simple process (resp. symbolic process)
  $A$ is said to be {\em quiescent} when
  $A\not\sint{\tau}$ (resp. $A\not\ssym{\tau}$).
  An extended symbolic process $A^+$ is {\em quiescent} when
  $\ofext{A^+}\not\ssym{\tau}$.
\end{definition}

In $\ssymaone{\alpha}$ transitions,
processes must start by executing an observable action
$\alpha$ and possibly some $\tau$ actions after that.
Hence, it does not make sense to consider $\ssyma{}$ transitions 
on processes that can still perform $\tau$ actions.
We shall thus establish that $\eqapte$ and $\esym$ coincide
only on quiescent processes,
which is not a significant restriction since
it is always possible to pre-execute all available $\tau$-actions
before testing equivalences.



\subsection{Specification of the procedure}
\label{subsec:specs}

We now list and comment the specification satisfied by the
exploration performed by \apte. These statements are consequences of
results stated and proved in~\cite{Cheval-phd2012} \stef{but}
it is beyond the scope of this paper to prove them.


\subsubsection*{Soundness and completeness of constraint resolution}
The ${\ssymatwo}$ step, corresponding to \apte's constraint resolution 
procedure, only makes sense under some assumptions on the (common) structure
of the processes that are part of the pairs of sets under consideration. Rather than 
precisely formulating these conditions (which would be at odds with the 
abstract treatment of extended constraint systems) we start by 
defining an \david{under}-approximation of the set of pairs on which we may apply
${\ssymatwo}$ at some point.
We choose this \david{under}-approximation sufficiently large to cover
pairs produced by the compressed semantics, and we then formulate our specifications in that domain.
\stef{More precisely, the \david{under}-approximation has to cover two things:
\begin{enumerate}
\item we have to consider additional disequalities
  of the form $u \neq^? u$ in constraint systems since they are eventually added by our
  compressed symbolic semantics (see Figure~\ref{figure:compressed-symbolic-sem});
\item we have to allow the removal of some extended symbolic process
  from the original sets since they are eventually discarded by our
  compressed (resp. reduced) symbolic semantics.
\end{enumerate}
}

Given an extended symbolic process $A^+ =
\quadr{\p}{\Phi}{\Set}{\Set^+}$, we \stef{denote} $\mathrm{add}(A^{+})$ 
the set of extended symbolic processes obtained
from $A^+$ by adding into $\Set$ a number of disequalities of the form $u\neq^?u$ with
$\fvp(u) \subseteq \fvp(\Set)$. 
This is then extended to sets of
extended symbolic processes as follows: $\mathrm{add}(\{A^+_1,\ldots,A^+_n\}) =
\{ \{ B^+_1, \ldots, B^+_n \} ~ | ~ B^+_i \in \mathrm{add}(A^+_i) \}$.
\begin{definition}[valid and intermediate valid pairs]
  The set of \emph{valid pairs} is the least set such that:
  \begin{itemize}
    \item
      For all quiescent, symbolic processes $A=(\p;\Phi;\emptyset)$ and $B=(\q;\Psi;\emptyset)$,
      $(\{\extof{A}\};\{\extof{B}\})$ is valid.
    \item If $(\as;\bs)$ is valid and $\as \symbeqset \bs$,
      $(\as;\bs)\ssymaone{\alpha}(\as_1;\bs_1)$,
      $\as_2\subseteq\as_1$, $\bs_2\subseteq\bs_1$,
  $\as_3 \in \mathrm{add}(\as_2)$,
  $\bs_3 \in \mathrm{add}(\bs_2)$,
      and
      $(\as_3;\bs_3)\ssymatwo(\as';\bs')$
      then $(\as';\bs')$ is valid.
      In that case, the pair $(\as_3;\bs_3)$ is called
      an \emph{intermediate valid pair}.
  \end{itemize}
\end{definition}

It immediately follows that
$(\{\extof{A}\};\{\extof{B}\})\ssyma{\tr}(\as;\bs)$ implies that
$(\as;\bs)$ is valid and only made of quiescent, extended symbolic processes.
But the notion of validity accomodates more pairs:
it will cover pairs accessible under refinements of $\ssyma{}$
based on subset restrictions of $\ssymaone{}$.
\stef{We may note that these pairs are actually pairs that would have
  been explored by $\apte$ when starting with another pair of
  processes (\eg a process that makes explicit the use of trivial conditionals
  of the form $\test{u=u}{P}{Q}$ ). Therefore, those pairs  do not
  cause any trouble when they have to be handled by $\apte$.}

\begin{axiom}[soundness of constraint resolution] \label{spec:sound}
Let $(\as';\bs')$ be an intermediate valid pair such that
$(\as';\bs')\ssymatwo(\as'';\bs'')$. Then, for all $A''\in\as''$
(resp. $B''\in\bs''$) there exists
some $A'\in\as'$ (resp. $B'\in\bs')$ such that $\ofext{A'}=\ofext{A''}$
(resp. $\ofext{B'}=\ofext{B''}$)
and $\Sol^+(A'') \subseteq \Sol^+(A')$
(resp. $\Sol^+(B'') \subseteq \Sol^+(B')$).
\end{axiom}

\apte treats almost symmetrically the two  components of the pair of sets on which
transitions take place.
This is reflected by the fact that axioms concern both sides and are completely 
symmetric, like \stef{{\bf Axiom~\ref{spec:sound}}}. 
In order to make \stef{the} following specifications more concise and readable, we state 
properties only for one of the two sets and consider \david{the other}
\stef{``symmetrically''} as well.


The completeness specification is in two parts: it first states that no 
first-order solution is lost in the constraint resolution process, and then
that the branching of $\ssymatwo$ corresponds to different second-order solutions.

\begin{axiom}[first-order completeness of constraint resolution]
  \label{spec:complete}
  Let $(\as;\bs)$ be an intermediate valid pair.
  For all $A^+\in \as$ and $\theta\in\Sol(A^+)$
  there exists $(\as;\bs)\ssymatwo(\as_2;\bs_2)$,
  $A_2^+\in\as_2$ and $\theta^+\in\Sol^+(A^+_2)$
  such that
  $\ofext{A^+_2}=\ofext{A^+}$ and
  $\lambda_\theta^{A} =_\E \lambda_{\theta^+}^{A}$,
  where $\lambda^{A}_\theta$
  (resp.~$\lambda^{A}_{\theta^+}$) is the substitution associated
  to $\theta$ (resp.~to $\theta^+$) w.r.t.~$\ofext{A^+}$. 
  Symmetrically for $B^+\in\bs$.
\end{axiom}

\begin{axiom}[second-order consistency of constraint resolution]
  \label{spec:consistent}
  Let $(\as;\bs)$ be an intermediate valid pair such that
  $(\as;\bs)\ssymatwo(\as_2;\bs_2)$,
  {$\theta\in\Sol^+(A^+)$} for some $A^+\in\as$ and
  {$\theta\in\Sol^+(C^+_2)$} for some $C^+_2\in\as_2\cup\bs_2$.
  Then there exists some $A^+_2\in\as_2$ such that $\ofext{A^+}=\ofext{A^+_2}$
  and $\theta\in\Sol^+(A^+_2)$.
  Symmetrically for $B^+\in\bs$.
\end{axiom}

\subsubsection*{Partial solution}
In order to avoid performing some explorations when dependency
constraints of our reduced semantics are not satisfied,
we shall be interested in knowing when all solutions of a given constraint
system assign a given recipe to some variable.
Such information is generally available in the solved forms computed
by \apte, but not always in a complete fashion.
We reflect this by introducing an abstract function that
represents the information that can effectively be inferred by the
procedure.

\begin{definition}[partial solution]
  \label{def:presol}
  We assume a \emph{partial solution}\footnote{We use the
notation $\sigma_1\sqcup \sigma_2$ to emphasize the fact that the
two substitutions do not interact together. They have 
disjoint domain, i.e. \stef{$\dom(\sigma_1) \cap \dom(\sigma_2) =
\emptyset$}, and 
no variable of $\dom(\sigma_i)$ occurs in $\img(\sigma_j)$ with
$\{i,j\} = \{1,2\}$.} function $\ps$
  which maps sets of extended constraints $\Set^+$ to a substitution,
  such that for any
  $\theta \in\Sol^+\quadr{\p}{\Phi}{\Set}{\Set^+}$, there exists~$\theta'$ such that
  $\theta = \ps(\Set^+) \sqcup \theta'$.
  We extend $\ps$ to extended symbolic processes:
  $\ps\quadr{\p}{\Phi}{\Set}{\Set^+} = \ps({\Set^+})$.
\end{definition}

Intuitively, given an extended constraint system, the function
  $\ps$
returns the value of some of its second-order variables (those for
which their instantiation is already completely determined).
Our specification of the partial solution shall postulate that
the partial solution returned by \apte is the same for each
extended symbolic process occurring in a pair $(\as; \bs)$ reached
during the exploration. Moreover,
there is a monotonicity property that ensures that this partial
solution becomes more precise along the exploration.

\begin{axiom}
  \label{spec:presol}
  We assume the following about the partial solution:
  \begin{enumerate}
    \item
      For any valid pair $(\as;\bs)$,
      we have that $\ps(A) = \ps(B)$ for any $A,B \in \as \cup \bs$.
      This allows us to simply write $\ps(\as;\bs)$
      when $\as \cup \bs \neq \emptyset$.
    \item
      For any intermediate valid pair $(\as;\bs)$
      such that $(\as;\bs)\ssymatwo(\as';\bs')$ and
      ${\as' \cup \bs' \neq\emptyset}$, we have
      $\ps(\as';\bs') = \ps(\as;\bs) \sqcup \theta$ for some $\theta$.
  \end{enumerate}
\end{axiom}

\begin{example}
  \label{ex:toy2}
  Continuing Example~\ref{ex:toy1},
  we first note that $(\{Q_0\};\{Q_0\})$ is a valid pair.
  Second,
  the exploration $(\{Q_0\};\{Q_0\})\ssyma{\In(c_2,X_2)}(\{Q_1^t\};\{Q_1^t\})$
  covers all executions of the form 
  $\ofext{Q_0}\ssym{\In(c_2,X_2).\tau}\ofext{Q_1^t}$ going to the
  $\mathtt{then}$
  branch even though the only solution of $Q_1^t$ is $\Theta_1^t$.
  Indeed, if $\Theta\in\Sol(\ofext{Q_1^t})$ then the message computed
  by $X_2\Theta$ should be equal to $\ok$ and thus no first-order solution is lost
  as stated by {\textbf{\normalfont{Axiom 2}}}.
  Moreover, because the value of $X_2$ is already known in $Q_1^t$,
  we may have $\ps(Q_1^t)=\ps(\Set^+_1)=\{X_2\mapsto \ok\}$.
\end{example}


\subsection{Proof of the original procedure}
\label{subsec:proof}
The procedure, axiomatized as above, can be proved correct w.r.t 
the regular symbolic semantics $\ssym{}$ and its induced trace 
equivalence~$\esym$ as defined in Section~\ref{subsec:symbolic-calculus}.
Of course, Axiom~\ref{spec:presol} is unused in this first result.
It will be used later on when implementing our reduced semantics.
We first start by establishing that all the explorations
performed by $\apte$ correspond to symbolic executions.

\stef{This result is not new and has been established from
  scratch (\ie without relying on the axioms stated in the previous
  section) in~\cite{Cheval-phd2012}. Nevertheless, we found it useful to establish that our axioms
  are sufficient to prove correctness of the original \apte procedure.
 The proofs provided in the following sections to establish
  correctness of our optimised procedure follow the same lines
as the ones  presented below.}

\begin{lemma} \label{lem:ssyma-sound}
  Let $(\as;\bs)$ be a valid pair such that
  $(\as;\bs)\ssyma{\tr}(\as';\bs')$. Then, for all
  $A'\in\as'$ there is some $A\in\as$ such that
  $\ofext{A}\ssym{\tr'}\ofext{A'}$ for some $\tr'$ with $\obse(\tr')=\tr$.
  Symmetrically for $B'\in\bs'$.
\end{lemma}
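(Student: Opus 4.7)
The plan is to prove the lemma by induction on the length of the trace $\tr$, with the main work done in unfolding a single $\ssyma{}$ transition into its two sub-steps and then invoking \refS{spec:sound}. Only the $\as'$-side is treated; the $\bs'$-side is perfectly symmetric.

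For the base case $\tr = \epsilon$, we have $\as' = \as$, so given $A' \in \as'$ we simply take $A = A'$ and $\tr' = \epsilon$, and the statement holds trivially. For the inductive step, decompose $\tr = \tr_0 \cdot \alpha$, so that $(\as;\bs) \ssyma{\tr_0} (\as_0;\bs_0) \ssyma{\alpha} (\as';\bs')$. By definition of $\ssyma{\alpha}$, this last transition unfolds as $(\as_0;\bs_0) \ssymaone{\alpha} (\as_1;\bs_1) \ssymatwo (\as';\bs')$, and since a $\ssyma{}$ step requires $\as_0 \symbeqset \bs_0$, the pair $(\as_1;\bs_1)$ fits the definition of an intermediate valid pair (taking $\as_2 = \as_1$, $\bs_2 = \bs_1$, with trivial applications of $\mathrm{add}$).

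Now fix $A' \in \as'$. By \refS{spec:sound} applied to the intermediate valid pair $(\as_1;\bs_1)$, there exists $A_1 \in \as_1$ such that $\ofext{A_1} = \ofext{A'}$. By the definition of $\ssymaone{\alpha}$, this $A_1$ is produced from some $A_0 \in \as_0$ whose underlying symbolic process executes $\alpha$ followed by a (possibly empty) sequence of $\tau$-steps; that is, $\ofext{A_0} \ssym{\alpha \cdot \tau^*} \ofext{A_1}$. Applying the induction hypothesis to $\tr_0$ and $A_0 \in \as_0$ yields some $A \in \as$ and a trace $\tr_0'$ with $\obse(\tr_0') = \tr_0$ such that $\ofext{A} \ssym{\tr_0'} \ofext{A_0}$.

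Concatenating the two symbolic executions gives $\ofext{A} \ssym{\tr_0' \cdot \alpha \cdot \tau^*} \ofext{A'}$; setting $\tr' = \tr_0' \cdot \alpha \cdot \tau^*$, we obtain $\obse(\tr') = \obse(\tr_0') \cdot \alpha = \tr_0 \cdot \alpha = \tr$, concluding the induction. I do not expect a real obstacle here: the statement is essentially a bookkeeping exercise matching \apte's exploration $\ssyma{}$ against the symbolic semantics $\ssym{}$. The one subtle point is to remember that $\ssymatwo$ does not change the underlying symbolic process (only the extended part), which is exactly what \refS{spec:sound} guarantees via the equality $\ofext{A'} = \ofext{A_1}$; this is what allows the symbolic execution obtained by the inductive hypothesis to be transported from $A_1$ to $A'$.
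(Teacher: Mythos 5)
Your proof is correct and follows essentially the same argument as the paper: induction on the trace length, unfolding one $\ssyma{}$ step into $\ssymaone{}$ followed by $\ssymatwo$, tracing back through $\ssymatwo$ via Axiom~\ref{spec:sound} and through $\ssymaone{}$ via its definition. The only (immaterial) difference is that you peel off the last action and induct on the prefix, whereas the paper peels off the first action and inducts on the suffix.
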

\begin{proof}
  We proceed by induction on $\tr$. 
When
  $\tr$ is empty, we have that $(\as;\bs)=(\as';\bs')$, and the
  result trivially holds.
  Otherwise we have that:
\begin{center}
 $(\as;\bs) \ssymaone{\alpha} (\as_1;\bs_1) 
  \ssymatwo (\as_2;\bs_2) \ssyma{\tr_0} (\as';\bs')$ with $\tr =
  \alpha\cdot \tr_0$.
\end{center}
  Let $A'$ be a process
  of $\as'$. By induction hypothesis we have some $A_2 \in \as_2$ such that
  $\ofext{A_2} \ssym{\tr'_0} \ofext{A'}$ with $\obse(\tr'_0)=\tr_0$. By \refS{spec:sound} there is some
  $A_1 \in \as_1$ such that $\ofext{A_1} = \ofext{A_2}$, and by definition
  of $\ssymaone{}$ we finally find some $A \in \as$ such that
  $\ofext{A} \ssym{\alpha\cdot \tau^k} \ofext{A_1}$. To sum up, we have
  $A \in \as$ such that $\ofext{A} \ssym{\tr} \ofext{A'}$ with
  $\obse(\tr') = \tr$. 
\end{proof}

We now turn to completeness results. Assuming that processes
  under study
  are in equivalence $\eqapte$ (so that \apte will not stop its exploration
  prematurely), we are able to show that any valid symbolic execution
  (\ie a symbolic execution with a solution in its resulting
  constraint system) is captured by an exploration performed by \apte.
 Actually, since \apte discards some second-order solution during its
 exploration, we can only assume that another second-order solution
 with the same associated first-order solution will be found.
 
\begin{lemma} \label{lem:ssyma-complete-first}
  Let $A=\triple{\p}{\Phi}{\emptyset}$,
  $B=\triple{\q}{\Psi}{\emptyset}$ and
   $A'=\triple{\p'}{\Phi'}{\Set'}$ be three quiescent, symbolic processes
  such that $\proc{\p}{\Phi} \eqapte \proc{\q}{\Psi}$,
  $A \ssym{\tr} A'$, and $\theta\in\Sol(A')$.
  Then there exists an \apte exploration
  $(\{\extof{A}\};\{\extof{B}\}) \ssyma{\tr_o} (\as';\bs')$ and
  some $A^+ \in \as'$, $\theta^+ \in \Sol^+(A^+)$ such that
  $\obse(\tr)=\tr_o$,
  $\ofext{A^+} = A'$ and $\lambda_\theta =_\E \lambda_{\theta^+}$,
  where $\lambda_\theta$
  (resp.~$\lambda_{\theta^+}$) is the substitution associated
  to $\theta$ (resp.~to $\theta^+$) with respect to $\cs{\Phi'}{\Set'}$. 
  Symmetrically for $B\ssym{\tr} B'$.
\end{lemma}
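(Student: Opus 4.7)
The plan is to proceed by induction on the number $n$ of observable actions in $\tr$. Before inducting, since $A$ is quiescent (so $\tr$ cannot start with a $\tau$) and each symbolic $\tau$ acts locally on a single basic process (hence commutes with actions on other channels), a standard permutation argument lets us reorder $\tr$ into a trace $\tr^{*}=\alpha_1\cdot\tau^{k_1}\cdots\alpha_n\cdot\tau^{k_n}$ satisfying $A\ssym{\tr^{*}} A'$, $\obse(\tr^{*})=\obse(\tr)$, and such that each intermediate state reached after $\alpha_1\cdot\tau^{k_1}\cdots\alpha_i\cdot\tau^{k_i}$ is quiescent. This matches the shape imposed by $\ssymaone{}$ transitions.

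The base case $n=0$ is immediate: $\tr=\epsilon$ since $A$ is quiescent, and the empty \apte exploration works with $A^{+}=\extof{A}$ and $\theta^{+}=\theta$. For the inductive step, decompose $\tr^{*}=\tr^{*}_0\cdot\alpha_n\cdot\tau^{k_n}$ with $A\ssym{\tr^{*}_0} A_{n-1}\ssym{\alpha_n\cdot\tau^{k_n}} A'$, $A_{n-1}$ quiescent, and let $\theta_{n-1}$ be the restriction of $\theta$ to $\fvs(A_{n-1})$. By well-formedness of constraint systems, $\theta_{n-1}\in\Sol(A_{n-1})$ and its associated first-order substitution agrees with $\lambda_\theta$ on $\fvp(A_{n-1})$ modulo $\E$. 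Applying the induction hypothesis yields an \apte exploration $(\{\extof{A}\};\{\extof{B}\})\ssyma{\tr_o^0}(\as_{n-1};\bs_{n-1})$ together with $A^{+}_{n-1}\in\as_{n-1}$ and $\theta^{+}_{n-1}\in\Sol^{+}(A^{+}_{n-1})$ such that $\ofext{A^{+}_{n-1}}=A_{n-1}$, $\obse(\tr^{*}_0)=\tr_o^0$, and $\lambda_{\theta_{n-1}}=_\E\lambda_{\theta^{+}_{n-1}}$.

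Since $\proc{\p}{\Phi}\eqapte\proc{\q}{\Psi}$, Definition~\ref{def:eqapte} gives $\as_{n-1}\symbeqset\bs_{n-1}$, so one may fire $(\as_{n-1};\bs_{n-1})\ssymaone{\alpha_n}(\as_n;\bs_n)$. By construction of $\ssymaone{}$, $\as_n$ contains the extended symbolic process $A^{+}_1=\quadr{\p'}{\Phi'}{\Set'}{\Set^{+}_{n-1}}$ where $A'=\triple{\p'}{\Phi'}{\Set'}$ and $\Set^{+}_{n-1}$ is the extended part of $A^{+}_{n-1}$. I would then build the candidate solution $\theta'=\theta^{+}_{n-1}\sqcup\theta|_{\fvs(A')\setminus\fvs(A_{n-1})}$ (the second component consists of a single recipe when $\alpha_n$ is an input and is empty when $\alpha_n$ is an output) and check $\theta'\in\Sol(A^{+}_1)$ with $\lambda_{\theta'}=_\E\lambda_\theta$: the new deduction constraint (if any) reads only recipes over $\dom(\Phi_{n-1})$, and the new (dis)equality constraints added by $\tau^{k_n}$ are preserved modulo $=_\E$ once old first-order variables are switched from $\lambda_{\theta_{n-1}}$ to $\lambda_{\theta^{+}_{n-1}}$. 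The pair $(\as_n;\bs_n)$ is intermediate valid (taking trivial subset restrictions and no additions), so applying \refS{spec:complete} to $A^{+}_1$ and $\theta'$ yields a step $(\as_n;\bs_n)\ssymatwo(\as';\bs')$ with some $A^{+}\in\as'$ and $\theta^{+}\in\Sol^{+}(A^{+})$ such that $\ofext{A^{+}}=A'$ and $\lambda_{\theta^{+}}=_\E\lambda_{\theta'}=_\E\lambda_\theta$, completing the induction with $\tr_o=\tr_o^0\cdot\alpha_n=\obse(\tr)$.

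The symmetric case for $B\ssym{\tr} B'$ follows identically. The main obstacle is the construction and correctness of $\theta'$: the induction hypothesis only guarantees first-order agreement between the restricted $\theta$ and $\theta^{+}_{n-1}$, not equality of recipes, so gluing $\theta^{+}_{n-1}$ with the new recipes drawn from $\theta$ requires carefully tracking that first-order instantiations are preserved modulo $=_\E$, which hinges on well-formedness of the constraint systems (acyclic dependencies and a unique defining deduction constraint per first-order variable) so that changes on the old second-order variables do not propagate beyond $=_\E$-equivalent first-order terms.
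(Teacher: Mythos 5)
Your proof is correct and follows the same overall structure as the paper's: reorder the $\tau$-actions so that each observable action is immediately followed by all available $\tau$'s (yielding quiescent intermediate states), induct on the number of observable actions, use the induction hypothesis to reach a valid pair $(\as_{n-1};\bs_{n-1})$ containing $A^+_{n-1}$ with $\ofext{A^+_{n-1}}=A_{n-1}$, invoke $\eqapte$ to fire $\ssymaonec{}$\hspace{-3pt}\ldots{} sorry, $\ssymaone{\alpha_n}$, and conclude with \refS{spec:complete}. The one place where you diverge is the final step: you construct a glued second-order solution $\theta'=\theta^+_{n-1}\sqcup\theta|_{\fvs(A')\setminus\fvs(A_{n-1})}$ and feed \emph{that} to \refS{spec:complete}, which forces you into the verification you flag as ``the main obstacle'' (checking that swapping recipes on old variables only perturbs first-order values modulo $=_\E$). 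This detour is sound but unnecessary: \refS{spec:complete} only asks for an element of $\Sol(A^+_2)=\Sol(\ofext{A^+_2})=\Sol(A')$, \ie a solution of the \emph{non-extended} constraint system, and the original $\theta$ is such a solution by hypothesis. The paper therefore applies the axiom directly to $\theta$ and obtains $\theta^+$ with $\lambda_{\theta^+}=_\E\lambda_\theta$ in one step; the value $\theta^+_{n-1}$ produced by the induction hypothesis is never reused in the inductive step. Adopting this shortcut removes the delicate gluing argument entirely.
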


\begin{proof}
By hypothesis, we have that $A \ssym{\tr} A'$. We will first
reorganize this derivation to ensure that $\tau$ actions are always
performed as soon as possible.
Then,  we proceed by induction on $\obse(\tr)$.
When $\obse(\tr)$ is empty, we have that {$A'=A$} since $A$ is 
quiescent.
Let   $(\as';\bs') = (\{\extof{A}\};\{\extof{B}\})$, $A^+ = \extof{A}$,
$\theta^+ = \theta$. We have that $\theta \in \Sol(A)$ and therefore
$\theta \in \Sol^+(\extof{A})$, \ie $\theta^+ = \theta \in \Sol^+(A^+)$.
We easily conclude.

  Otherwise, consider $A \ssym{\tr_0} A_1 \ssym{\alpha\cdot \tau^k} A'$
  with $\theta\in\Sol(A')$.
Let $A' = \triple{\p'}{\Phi'}{\Set'}$ and $A_1 =
\triple{\p_1}{\Phi_1}{\Set_1}$. We have that $\Set_1 \subseteq
\Set'$.  Since $\theta\in\Sol(A')$, we also have
$\theta|_V\in\Sol(A_1)$ where $V = \fv^2(\Set_1)$.
Therefore, we apply our induction hypothesis and we obtain that there
exists an \apte exploration
  $
  (\{\extof{A}\};\{\extof{B}\})\ssyma{\tr'_0}(\as_1;\bs_1)
  $
and some $A_1^+\in\as_1$,  $\theta_1^+\in\Sol^+(A_1^+)$ such that
$\obse(\tr_0) = \tr'_0$, $\ofext{A_1^+}=A_1$, and the
first-order substitutions
associated to $\theta|_V$ and $\theta_1^+$ with respect to $(\Phi_1;
\Set_1)$ are identical.
  By hypothesis we have $\proc{\p}{\Phi} \eqapte \proc{\q}{\Psi}$,
  thus $\as_1 \symbeqset \bs_1$. Hence
  a $\ssymaone{}$ transition can take place on that pair.
  By definition of $\ssymaone{}$ and since $\ofext{A_1^+} = A_1 
  \ssym{\alpha\cdot \tau^k} A'$ with $A'$ quiescent, there must be some
  $(\as_1;\bs_1) \ssymaone{\alpha} (\as_2;\bs_2)$
  with $A_2^+\in\as_2$, $\ofext{A_2^+}=A'$.
  Thus $\theta\in\Sol(A^+_2)$ and we can apply
  \refS{spec:complete}. There exists
  $(\as_2;\bs_2)\ssymatwo(\as';\bs')$,  $A^+\in\as'$,
  $\ofext{A^+}=\ofext{A_2^+}$ and
  $\theta^+\in\Sol^+(A^+)$ such that $\ofext{A^+_2} =
  \ofext{A^+}$, and the substitutions associated to $\theta$
  (resp. $\theta^+$) w.r.t. $(\Phi';\Set')$ coincide.
To sum up, the exploration 
\begin{center}
$(\{\extof{A}\};\{\extof{B}\})\ssyma{\tr'_0}(\as_1;\bs_1)
\ssymaone{\alpha} (\as_2;\bs_2) \ssymatwo(\as';\bs')$
\end{center}
together with $A^+ \in \as'$, and $\theta^+ \in \Sol^+(A^+)$ satisfy
all the hypotheses.
\end{proof}

\begin{lemma} \label{lem:ssyma-complete-second}
  Let $A,B,A'$ be quiescent symbolic processes such that
  $A \ssym{\tr} A' = \triple{\p'}{\Phi'}{\Set'}$, $\theta \in \Sol(A')$
  and $(\{\extof{A}\};\{\extof{B}\})\ssyma{\tr_o}(\as';\bs')$
  with $\obse(\tr)=\tr_o$ and $\theta\in\Sol^+(C)$ for some $C \in \as'\cup\bs'$.
  Then there exists some $A^+ \in \as'$ such that
  $\ofext{A^+} = A'$ and $\theta\in\Sol^+(A^+)$.
  Symmetrically for $B\ssym{\tr}B'$.
\end{lemma}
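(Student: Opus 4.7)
The plan is to proceed by induction on $|\obse(\tr)|$ after re-organising~$\tr$ so that every $\tau$-action is performed as eagerly as possible, closely following the shape of the proof of Lemma~\ref{lem:ssyma-complete-first}. In the base case, $\tr_o$ is empty, so the \apte exploration is trivial and $(\as';\bs')=(\{\extof A\};\{\extof B\})$; quiescence of~$A$ forces~$\tr$ itself to contain no $\tau$-action, giving $A'=A$. I then take $A^+=\extof A$, whose extended-constraint part is empty, so $\Sol^+(\extof A)=\Sol(A)\ni\theta$.

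For the inductive step, I would split $\tr=\tr_0\cdot\alpha\cdot\tau^k$ and correspondingly $\tr_o=\tr_o'\cdot\alpha$, via a quiescent intermediate $A\ssym{\tr_0}A_1\ssym{\alpha\cdot\tau^k}A'$, and split the \apte exploration accordingly as
\[
  (\{\extof A\};\{\extof B\})\ssyma{\tr_o'}(\as_1;\bs_1)\ssymaone{\alpha}(\as_2;\bs_2)\ssymatwo(\as';\bs').
\]
The whole task is then to propagate~$\theta$ backward through $\ssymatwo$ and $\ssymaone\alpha$ so as to feed the induction hypothesis, and then to propagate it forward again to reach the desired $A^+\in\as'$.

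Going backward, I would first invoke \refS{spec:sound} on the given $C\in\as'\cup\bs'$ with $\theta\in\Sol^+(C)$ to obtain some $C_2\in\as_2\cup\bs_2$ with $\ofext{C_2}=\ofext{C}$ and $\theta\in\Sol^+(C_2)$. By the definition of $\ssymaone\alpha$, $C_2$ comes from a process $C_1\in\as_1\cup\bs_1$ sharing the same extended-constraint part, so the restriction $\theta_1:=\theta|_{\fvs(\ofext{C_1})}$ still satisfies $\theta_1\in\Sol^+(C_1)$. Applying the induction hypothesis to the prefix exploration, to the symbolic trace~$\tr_0$, to the intermediate process~$A_1$ and to the solution~$\theta_1$ then yields $A_1^+\in\as_1$ with $\ofext{A_1^+}=A_1$ and $\theta_1\in\Sol^+(A_1^+)$.

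To close the argument by going forward again, I would use the shape of $\ssymaone\alpha$ once more: since $A_1\ssym{\alpha\cdot\tau^k}A'$ with $A'$ quiescent, there exists $A_2^+\in\as_2$ with $\ofext{A_2^+}=A'$ and $\Set^+_{A_2^+}=\Set^+_{A_1^+}$; combining $\theta\in\Sol(A')$ with $\theta_1\models\Set^+_{A_1^+}$, together with the fact that $\Set^+_{A_1^+}$ only mentions second-order variables already present in~$A_1^+$, upgrades this to $\theta\in\Sol^+(A_2^+)$. Finally, \refS{spec:consistent} applied to the intermediate valid pair $(\as_2;\bs_2)$, with the two witnesses $A_2^+\in\as_2$ and $C\in\as'\cup\bs'$ both satisfied by~$\theta$, produces the desired $A^+\in\as'$ with $\ofext{A^+}=A'$ and $\theta\in\Sol^+(A^+)$; the case $B\ssym{\tr}B'$ is symmetric. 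The most delicate point, where I expect the main obstacle to lie, is the bookkeeping around $\Set^+$ and second-order variable scopes across $\ssymaone\alpha$: I need that $\Set^+$ is untouched by that transition and that it only refers to variables of the source process, which is precisely what justifies both the restriction to~$\theta_1$ on the way down and its upgrade back to~$\theta$ on the way up.
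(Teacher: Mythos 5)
Your proposal is correct and follows essentially the same route as the paper's proof: induction on the observable trace, Axiom~\ref{spec:sound} to pull $\theta$ back from $C$ to some $C_2$, the invariance of $\Set^+$ under $\ssymaone{}$ to restrict $\theta$ and feed the induction hypothesis, then the same $\Set^+$-invariance to push forward to $A_2^+\in\as_2$, and Axiom~\ref{spec:consistent} to land in $\as'$. The only (trivial) point left implicit is that $\theta_1\in\Sol(A_1)$, which follows by monotonicity from $\theta\in\Sol(A')$, as the paper notes explicitly.
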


\begin{proof}
  We proceed by induction on $\tr_o$.
  When $\tr_o$ is empty, we have that $A' = A$ (because~$A$ 
  is quiescent), 
  $\as' = \{\extof{A}\}$, and $\bs' = \{\extof{B}\}$. Let $A^+$ be  $\extof{A} = \extof{A'}$. We deduce that
    $\theta \in \Sol^+(A^+)$ from the fact that $\theta \in \Sol(A)$
    and $A^+ = \extof{A}$.

  We consider now the case of a non-empty execution:
  \[
  (\{\extof{A}\};\{\extof{B}\})\ssyma{\tr_o}(\as_1;\bs_1)
                               \ssymaone{\alpha}(\as_2;\bs_2)
                               \ssymatwo(\as_3;\bs_3)
  \;\mbox{ and }\;
  A \ssym{\tr} A_1 \ssym{\alpha\cdot \tau^k} A_3.
  \]

  Note that, by reordering $\tau$ actions, we can assume $A_1$ to be quiescent.
  By assumption we have $\theta\in\Sol(A_3)$, $\obse(\tr)=\tr_o$ and
  $\theta\in\Sol^+(C_3)$ for some $C_3 \in \as_3\cup\bs_3$.
  By \refS{spec:sound}, there exists some
  $C_2 \in \as_2\cup\bs_2$ such that $\theta\in\Sol^+(C_2)$.
  By definition of $\ssymaone{}$ we obtain $C_1\in\as_1\cup\bs_1$ such that
  $\ofext{C_1} \ssym{\alpha\cdot \tau^k} \ofext{C_2}$ and
  $\Set^+(C_1) = \Set^+(C_2)$ (\ie the sets of extended constraints of $C_1$ and $C_2$
  coincide).
  The first fact implies $\theta|_V\in\Sol(C_1)$ by monotonicity
  (where $V = \fv^2(\Set(C_1))$, \ie second-order variables that occur
  in the set of non-extended constraints of $C_1$),
  and the second allows us to conclude more strongly that
  $\theta|_V\in\Sol^+(C_1)$.
  Since we also have $\theta|_V\in\Sol(A_1)$ by monotonicity,
  the induction hypothesis applies and we obtain some
  $A^+_1 \in \as_1$ with $\ofext{A^+_1} = A_1$ and $\theta|_V\in\Sol^+(A^+_1)$.

  By definition of $\ssymaone{}$, and since
  $\ofext{A^+_1} \ssym{\alpha\cdot \tau^k} A_3\not\ssym{\tau}$
  ($A_3$ is quiescent by hypothesis),
  we have $A^+_2\in\as_2$
  such that $\ofext{A^+_2} = A_3$ and
  $\Set^+(A^+_1)=\Set^+(A^+_2)$. Therefore, we have that $\theta \in
  \Sol(\ofext{A^+_2})$, and 
  the fact that  $\Set^+(A^+_1)=\Set^+(A^+_2)$ allows us to say that $\theta\in\Sol^+(A^+_2)$
We can finally apply \refS{spec:consistent}
  to obtain some $A^+_3$ such that $\ofext{A^+_3} = \ofext{A^+_2} = A_3$
  and $\theta\in\Sol^+(A^+_3)$.
\end{proof}


\begin{theorem} \label{thm:eqapte-esym}
For any quiescent
    extended simple processes, we have that:
    \[
 \text{$A \esym B$ if, and only if, $A\eqapte B$}.
    \]
\end{theorem}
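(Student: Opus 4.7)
The plan is to prove the two implications separately, both relying heavily on the soundness and completeness lemmas (Lemmas~\ref{lem:ssyma-sound}, \ref{lem:ssyma-complete-first} and~\ref{lem:ssyma-complete-second}) already established for the \apte procedure.

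For the $\Rightarrow$ direction, assume $A \esym B$ and consider an arbitrary \apte exploration $(\{\extof{A}\};\{\extof{B}\}) \ssyma{\tr} (\as;\bs)$; I have to show $\as \symbeqset \bs$. Pick $A^+ \in \as$ and $\theta \in \Sol^+(A^+)$. Lemma~\ref{lem:ssyma-sound} lifts the \apte exploration to a symbolic execution $A \ssym{\tr'} \ofext{A^+}$ with $\obse(\tr')=\tr$. Since $\Sol^+(A^+) \subseteq \Sol(\ofext{A^+})$, applying the hypothesis $A \esym B$ yields a matching execution $B \ssym{\tr''} B'$ with $\obse(\tr'')=\tr$, $\theta \in \Sol(B')$, and static equivalence of the resulting frames. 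Lemma~\ref{lem:ssyma-complete-second} (symmetric version for $B$) then provides $B^+ \in \bs$ with $\ofext{B^+}=B'$ and $\theta \in \Sol^+(B^+)$; the hypothesis $\theta \in \Sol^+(C)$ required by that lemma is directly witnessed by $C = A^+$. Frame equivalence carries over immediately, concluding this direction.

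For the $\Leftarrow$ direction, assume $A \eqapte B$ and take a symbolic execution $A \ssym{\tr} A'$ together with $\theta \in \Sol(A')$. Lemma~\ref{lem:ssyma-complete-first} supplies an \apte exploration $(\{\extof{A}\};\{\extof{B}\}) \ssyma{\tr_o} (\as';\bs')$ with $\obse(\tr)=\tr_o$, an $A^+ \in \as'$ with $\ofext{A^+}=A'$, and a $\theta^+ \in \Sol^+(A^+)$ satisfying $\lambda^A_\theta =_\E \lambda^A_{\theta^+}$. Since $A \eqapte B$ entails $\as' \symbeqset \bs'$, I obtain $B^+ \in \bs'$ with $\theta^+ \in \Sol^+(B^+)$ and $\Phi^A\lambda^A_{\theta^+} \statequiv \Phi^B\lambda^B_{\theta^+}$, and Lemma~\ref{lem:ssyma-sound} lifts this back to a symbolic execution $B \ssym{\tr''} \ofext{B^+}$.

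The main obstacle is then to conclude that $\theta$ itself---and not merely the $\theta^+$ coming out of Lemma~\ref{lem:ssyma-complete-first}---is a solution of $\ofext{B^+}$, as the definition of $\esym$ insists that the same second-order substitution be used on both sides. To address this, I would exploit the fact that $A^+$ and $B^+$, both arising from the common symbolic trace $\tr_o$, share exactly the same deducibility constraints; only their frames and (dis)equation constraints differ. A straightforward induction on the dependency order of first-order variables then shows that $\lambda^B_\theta =_\E \lambda^B_{\theta^+}$: at each step, the hypothesis $\lambda^A_\theta =_\E \lambda^A_{\theta^+}$ is transported across the static equivalence $\Phi^A\lambda^A_{\theta^+} \statequiv \Phi^B\lambda^B_{\theta^+}$ to yield the corresponding equality on the $B$ side. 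Once this key fact is in place, $\theta \in \Sol(\ofext{B^+})$ follows because every constraint of $\ofext{B^+}$ is checked against a first-order substitution that is $=_\E$-equivalent to $\lambda^B_{\theta^+}$, which $\theta^+$ already satisfies; likewise the required frame equivalence $\Phi^A\lambda^A_\theta \statequiv \Phi^B\lambda^B_\theta$ reduces to the one already provided by symbolic equivalence, modulo $=_\E$ rewriting on both sides.
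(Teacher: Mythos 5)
Your proof follows essentially the same route as the paper's: the same two lemma-driven directions, with the final transfer of $\theta$ to the $B$ side achieved by observing that the recipe-level equality $(X\theta)(\Phi_A\lambda_{\theta^+}) =_\E (X\theta^+)(\Phi_A\lambda_{\theta^+})$ is preserved by static equivalence --- your induction on the dependency order of first-order variables is just a more explicit rendering of the paper's direct computation. The only detail you skip is that Lemma~\ref{lem:ssyma-complete-first} requires the reached process to be quiescent, so in the $\Leftarrow$ direction one must first complete $A \ssym{\tr} A'$ with the available $\tau$-actions (choosing the conditional branches consistent with $\lambda_\theta$ so that $\theta$ remains a solution), as the paper does explicitly.
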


\begin{proof}
  Let $A_0 = \proc{\p}{\Phi}$, $B_0 = \proc{\p'}{\Phi'}$,
  $\as_0 = \{(\p;\Phi;\emptyset;\emptyset)\}$ and
  $\bs_0 = \{(\p';\Phi';\emptyset;\emptyset)\}$. We prove the two
  directions separately.

  \smallskip\noindent ($\Rightarrow$) 
  Assume $A_0 \esym B_0$ and consider some exploration
  $(\as_0;\bs_0)\ssyma{\tr_o}(\as;\bs)$. We shall establish
  that $\as \symbinclset \bs$.
  Let $A = \triple{\p_A}{\Phi_A}{\Set_A}$ be in $\as$ and $\theta\in\Sol^+(A)$.
  By Lemma~\ref{lem:ssyma-sound},
  we have $\triple{\p}{\Phi}{\emptyset} \ssym{\tr} \ofext{A}$
  such that $\obse(\tr)=\tr_o$.
  By hypothesis, there exists $B = \triple{\p_B}{\Phi_B}{\Set_B}$ such
  that 
  $\triple{\p'}{\Phi'}{\emptyset} \ssym{\tr'} B$,
  $\obse(\tr')=\obse(\tr)=\tr_o$,
  $\theta\in\Sol(B)$ and $\Phi_B\lambda_\theta^B \statequiv
  \Phi_A\lambda_\theta^A$.
  We can finally apply Lemma~\ref{lem:ssyma-complete-second},
  which tells us that
  there must be some $B^+ \in \bs$ such that $\ofext{B^+} = B$ and
  $\theta\in\Sol^+(B^+)$.

  \smallskip\noindent ($\Leftarrow$) 
  We now establish $A_0 \incesym B_0$
  assuming $A_0 \eqapte B_0$. Consider $\triple{\p}{\Phi}{\emptyset} \ssym{\tr} A$
  and $\theta\in\Sol(A)$.
  If $A$ is not quiescent, it is easy to complete the latter execution
  into $\triple{\p}{\Phi}{\emptyset} \ssym{\tr\cdot \tau^k} A' = \triple{\p_A}{\Phi_A}{\Set_A}$ and $\theta\in\Sol(A')$ such
  that $A'$ is quiescent.
  By Lemma~\ref{lem:ssyma-complete-first} we know that
  $(\as_0;\bs_0)\ssyma{\tr_o}(\as;\bs)$ with $\obse(\tr)=\tr_o$,
  $A^+ \in \as$,
  $\theta^+\in\Sol^+(A^+)$ with $A' = \ofext{A^+}$ and
  $\lambda_\theta =_\E \lambda_{\theta^+}$ where $\lambda_\theta$
  (resp. $\lambda_{\theta^+}$) is the substitution  associated to
  $\theta$ (resp. $\theta^+$) w.r.t. $(\Phi_A;\Set_A)$.
  By assumption we have $\as\symbinclset\bs$ and thus there
  exists some $B = (\p_B; \Phi_B; \Set_B; \Set^+_B) \in\bs$ with $\theta^+\in\Sol^+(B)$, 
  and $\Phi_{B}\lambda_{\theta^+}^{B} \statequiv
  \Phi_A\lambda_{\theta^+}$ where $\lambda_{\theta^+}^B$  is the
  substitution associated to $\theta^+$ w.r.t. $(\Phi_B;\Set_B)$. 
  %
  By Lemma~\ref{lem:ssyma-sound} we have
  $\triple{\p'}{\Phi'}{\emptyset} \ssym{\tr'} \ofext{B}$ with $\obse(\tr')=\tr_o=\obse(\tr)$.
  To conclude the proof, it remains to show that
  $\theta\in\Sol(\ofext{B})$  and that $\Phi_A\lambda_\theta \statequiv
  \Phi_B\lambda^B_\theta$ where 
  $\lambda^B_\theta$ is the substitution associated to $\theta$
  w.r.t. $(\Phi_B; \Set_B)$.

For any $X \in \fvs(\Set_{B}) = \fvs(\Set_A)$, we have 
$\valid((X\theta)(\Phi_{A}\lambda_{\theta^+}))$,
$\valid((X\theta^+)(\Phi_{A}\lambda_{\theta^+}))$, and
\[
(X\theta)(\Phi_{A}\lambda_{\theta^+})
  =_\E (X\theta)(\Phi_{A}\lambda_\theta)
  =_\E x_A\lambda_\theta
  =_\E x_A\lambda_{\theta^+}
  =_\E (X\theta^+)(\Phi_{A}\lambda_{\theta^+})
\]
where $x_A$ is the first-order variable associated to $X$ in $\Set_A$.
Since 
  $\Phi_{A}\lambda_{\theta^+} \sim
   \Phi_{B}\lambda_{\theta^+}^{B}$, we deduce that 
  $(X\theta)(\Phi_{B}\lambda_{\theta^+}^{B}) =_\E 
   (X\theta^+)(\Phi_{B}\lambda_{\theta^+}^{B})$,
   $\valid((X\theta)(\Phi_B\lambda_{\theta^+}^B))$ and therefore
   $\theta \in \Sol(\ofext{B})$, and its associated substitution $\lambda_\theta^B$
   w.r.t. $(\Phi_B;\Set_B)$
   coincides with $\lambda_{\theta^+}^B$, and therefore $\Phi_A\lambda_\theta \statequiv
  \Phi_B\lambda^B_\theta$ is a direct consequence of  $\Phi_{B}\lambda_{\theta^+}^{B} \statequiv
  \Phi_A\lambda_{\theta^+}$ and $\lambda_\theta =_\E \lambda_{\theta^+}$.
\end{proof}

%
%
%


\subsection{Integrating compression} 
\label{subsec:apte-compression}
We now discuss the integration of the compressed semantics of 
Section~\ref{sec:constraint-solving} as a replacement for the regular symbolic 
semantics in \apte. 

Although our compressed semantics $\ssymc{}$ has been defined as executing 
blocks rather than elementary actions, we allow ourselves to view 
it in a slightly different way in this section: we shall assume that the 
symbolic compressed semantics deals with elementary actions and enforces
that those actions, when put together,
form a prefix of a
sequence of blocks that can actually be executed (for the process under
consideration) in the compressed semantics of Section~\ref{sec:constraint-solving}. 
This can easily be obtained by means of extra annotations at the level
of processes, and we will not detail that modification. This slight change
makes it simpler to integrate compression into \apte, both in the theory
presented here and in the implementation.


\begin{definition}
 Given two sets of extended symbolic processes $\as,\bs$,
 and an observable action $\alpha$, we write
$(\as;\bs)\ssymaonec{\alpha}(\as';\bs')$  when
  \[
    \as' = \;\;\bigcup_{\mathclap{(\p;\Phi;\Set;\Set^+)\in\as}}\;\; \bigl\{\;
      (\p';\Phi';\Set';\Set^+) \;|\; (\p;\Phi;\Set) \ssymc{\alpha} 
      (\p';\Phi';\Set')\not\ssym{\tau}
    \;\bigr\},
  \]
  and similarly for $\bs'$.
  We say that $(\as;\bs)\ssymac{\alpha}(\as'';\bs'')$
  when
  $(\as;\bs)\ssymaonec{\alpha}(\as';\bs')$
  and $(\as';\bs')\ssymatwo(\as'';\bs'')$.
 
 Finally, given two simple extended processes $A = (\p_A;\Phi_A)$ and
  $B = (\p_B; \Phi_B)$, we say
  that $A \eqaptec B$ when $\as \approx^+ \bs$ for any 
  $(\{\extof{(\p_A;\Phi_A;\emptyset)}\};
    \{\extof{(\p_B;\Phi_B;\emptyset)}\})
    \ssymac{\tr} (\as; \bs)$.
\end{definition}

As expected, $\ssymaonec{}$ allows to \stef{consider} much fewer
explorations than with the original $\ssymaone{}$. It inherits the features of compression,
prioritizing outputs, not considering interleavings of outputs, executing
inputs only under focus, and preventing executions beyond improper blocks.
These constraints apply to individual processes in $\as\cup\bs$,
but we remark that they also have a global effect in $\ssymaonec{}$,
\eg all processes of $\as\cup\bs$ must start a new block simultaneously:
recall that the beginning of a block corresponds to an input after some
outputs, and such inputs can only be executed if no more outputs are
available.

\begin{example}
  \label{ex:toy3}
  Continuing Example~\ref{ex:toy2},
  there is only one non-trivial\footnote{
    We dismiss here the (infinitely many) transitions
    obtained for infeasible actions, which yield
    $(\emptyset;\emptyset)$.
  } compressed exploration of one action from the valid pair
  $(\{Q_1^t\};\{Q_1^t\})$.
  It corresponds to the output on channel $c_2$:
  $(\{Q_1^t\};\{Q_1^t\})\ssymac{\Out(c_2,w_2)}(\{Q_2\},\{Q_2\})$
    for $Q_2=\quadr{\{R_1\}} {\{w_2\refer n_2\}}
  {\{\dedi{X_2}{\emptyset}{x_2},x_2\eqi \ok\}} {\Set_2^+}$.
  In particular, for any $i \in \{1,2\}$,
  we have
  $(\{Q_1^t\};\{Q_1^t\})\ssymac{\In(c_i,X_i)}(\emptyset;\emptyset)$.
\end{example}

Observe that,
because $\ssymac{}$ is obtained from $\ssyma{}$ by a subset restriction in 
$\ssymaone{}$ up to some disequalities, 
we have that $(\as';\bs')$ is a valid pair when
$(\{\extof{A}\};\{\extof{B}\})\ssymac{\tr}(\as';\bs')$ for some quiescent,
symbolic processes $A,B$ having empty sets of constraints.
Following the same reasoning as the one performed in
Section~\ref{subsec:proof}, 
we can establish that $\esymc{}$ coincides with
$\eqaptec$.
The main difference is that $\ssymc{}$ already
ignores $\tau$-actions, and therefore we do not need to apply the 
$\obse(\cdot)$ operator.

\begin{lemma} \label{lem:ssymac-sound}
  Let $(\as;\bs)$ be a valid pair such that
  $(\as;\bs)\ssymac{\tr}(\as';\bs')$. Then, for all
  $A'\in\as'$ there is some $A\in\as$ such that
  $\ofext{A}\ssymc{\tr}\ofext{A'}$.
  Symmetrically for $B'\in\bs'$.
\end{lemma}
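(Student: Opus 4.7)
The plan is to proceed by induction on the length of $\tr$, closely mirroring the proof of Lemma~\ref{lem:ssyma-sound}. The compressed setting is actually simpler in one respect (no $\tau$ actions to trace through, since the compressed semantics already hides them) but requires us to use the axioms on the intermediate valid pairs produced by $\ssymaonec{}$; this is legitimate because $\ssymaonec{}$ is a subset restriction of $\ssymaone{}$ up to additional disequalities of the form $u \neq^? u$, so pairs obtained along a $\ssymac{}$ exploration starting from $(\{\extof{A}\};\{\extof{B}\})$ are indeed valid and the intermediate pairs reached before $\ssymatwo$ fit into the definition of intermediate valid pair.

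For the base case, $\tr$ is empty, so $(\as;\bs) = (\as';\bs')$ and we simply take $A = A'$, using the reflexivity of $\ssymc{}$. For the inductive step, we decompose
\[
(\as;\bs) \ssymaonec{\alpha} (\as_1;\bs_1) \ssymatwo (\as_2;\bs_2) \ssymac{\tr_0} (\as';\bs') \quad\text{with } \tr = \alpha \cdot \tr_0,
\]
pick $A' \in \as'$, and apply the induction hypothesis to the shorter exploration $(\as_2;\bs_2) \ssymac{\tr_0} (\as';\bs')$ to obtain some $A_2 \in \as_2$ with $\ofext{A_2} \ssymc{\tr_0} \ofext{A'}$. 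Then \refS{spec:sound} provides $A_1 \in \as_1$ with $\ofext{A_1} = \ofext{A_2}$, and by definition of $\ssymaonec{}$ we get $A \in \as$ such that $\ofext{A} \ssymc{\alpha} \ofext{A_1}$. Concatenating yields $\ofext{A} \ssymc{\alpha \cdot \tr_0} \ofext{A'}$, which is the desired conclusion. The symmetric case for $B' \in \bs'$ is identical.

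The only step that requires a bit of care is checking that the intermediate pair $(\as_1;\bs_1)$ produced by $\ssymaonec{}$ is an intermediate valid pair in the sense of the definition, so that \refS{spec:sound} applies. This follows because $\ssymaonec{\alpha}$ is obtained from $\ssymaone{\alpha}$ by a subset restriction together with possible additions of disequalities $u \neq^? u$ (coming from the \textsc{Proper} and \textsc{Improper} rules of the compressed symbolic semantics of Figure~\ref{figure:compressed-symbolic-sem}); both of these operations are precisely what the definition of intermediate valid pair accommodates through the $\mathrm{add}(\cdot)$ construction and the $\as_2 \subseteq \as_1, \bs_2 \subseteq \bs_1$ restrictions. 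I expect this to be the only subtle point; everything else is a direct transposition of the proof of Lemma~\ref{lem:ssyma-sound}, with the simplification that no $\obse(\cdot)$ bookkeeping is needed since $\tau$-hiding is already built into $\ssymc{}$.
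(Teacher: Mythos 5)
Your proof is correct and follows exactly the route the paper intends: the paper omits an explicit proof of this lemma and simply states that it is established "following the same reasoning" as for Lemma~\ref{lem:ssyma-sound}, which is precisely the induction you carry out, and you correctly identify (and justify via the $\mathrm{add}(\cdot)$ construction and subset restriction in the definition of intermediate valid pairs) the one point that needs checking before \refS{spec:sound} can be applied. Nothing is missing.
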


\begin{lemma} \label{lem:ssymac-complete-first}
  Let $A=\triple{\p}{\Phi}{\emptyset}$, $B=\triple{\q}{\Psi}{\emptyset}$, and
  $A'=\triple{\p'}{\Phi'}{\Set'}$ be three quiescent, symbolic processes
  such that $\proc{\p}{\Phi} \eqaptec \proc{\q}{\Psi}$,
  $A \ssymc{\tr} A'$ and $\theta\in\Sol(A')$.
  Then there exists an exploration
  $(\{\extof{A}\};\{\extof{B}\}) \ssymac{\tr} (\as';\bs')$ and
  some $A^+ \in \as'$, $\theta^+ \in \Sol^+(A^+)$ such that
  $\ofext{A^+} = A'$ and $\lambda_\theta =_\E \lambda_{\theta^+}$,
  where $\lambda_\theta$
  (resp.~$\lambda_{\theta^+}$) is the substitution associated
  to $\theta$ (resp.~to $\theta^+$) with respect to $\cs{\Phi'}{\Set'}$. 
  Symmetrically for $B\ssymc{\tr}B'$.
\end{lemma}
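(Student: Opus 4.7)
The plan is to mirror the structure of Lemma~\ref{lem:ssyma-complete-first}, adapting each step to the compressed setting. I would proceed by induction on the length of $\tr$ (viewed as a sequence of elementary actions, as indicated in the text).

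For the base case $\tr = \epsilon$, since $A$ is quiescent we have $A' = A$. It then suffices to take the trivial exploration $(\{\extof{A}\};\{\extof{B}\}) \ssymac{\epsilon} (\{\extof{A}\};\{\extof{B}\})$, with $A^+ = \extof{A}$ and $\theta^+ = \theta$; the required conditions hold by construction, as $\theta \in \Sol(A)$ entails $\theta \in \Sol^+(\extof{A})$.

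For the inductive case, I would first reorganize the derivation $A \ssymc{\tr} A'$ so that $\tau$ actions are performed as soon as possible—this reorganization is compatible with the compressed semantics because $\tau$ actions are transparent with respect to block boundaries and do not affect the block-prefix constraint. Split the derivation as $A \ssymc{\tr_0} A_1 \ssymc{\alpha \cdot \tau^k} A'$, where $\alpha$ is the last observable action and $A_1$ is quiescent. Writing $A_1 = \triple{\p_1}{\Phi_1}{\Set_1}$ and $V = \fvs(\Set_1)$, monotonicity of symbolic executions gives $\theta|_V \in \Sol(A_1)$. Applying the induction hypothesis yields an exploration $(\{\extof{A}\};\{\extof{B}\}) \ssymac{\tr_0} (\as_1;\bs_1)$ together with $A_1^+ \in \as_1$ and $\theta_1^+ \in \Sol^+(A_1^+)$ such that $\ofext{A_1^+} = A_1$ and the first-order substitutions associated to $\theta|_V$ and $\theta_1^+$ coincide modulo $\E$.

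By hypothesis $A \eqaptec B$, so $\as_1 \symbeqset \bs_1$ and a $\ssymaonec{\alpha}$ transition is enabled. Since $\ofext{A_1^+} = A_1 \ssymc{\alpha \cdot \tau^k} A'$ with $A'$ quiescent, the definition of $\ssymaonec{}$ produces some $A_2^+ \in \as_2$ with $\ofext{A_2^+} = A'$, where $(\as_1;\bs_1) \ssymaonec{\alpha} (\as_2;\bs_2)$. Thus $\theta \in \Sol(A_2^+)$, and Axiom~\ref{spec:complete} gives $(\as_2;\bs_2) \ssymatwo (\as';\bs')$, $A^+ \in \as'$ with $\ofext{A^+} = A'$ and $\theta^+ \in \Sol^+(A^+)$ whose associated first-order substitution matches $\lambda_\theta$ modulo $\E$, concluding the induction.

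The main obstacle I anticipate is verifying that the eager-$\tau$ reorganization used to obtain a quiescent $A_1$ is legitimate in the compressed setting: one must check that pushing $\tau$ actions earlier does not break the prefix-of-block invariant imposed on $\ssymc{}$ in this section. This is the only substantive deviation from the proof of Lemma~\ref{lem:ssyma-complete-first}; the rest of the argument—in particular the use of Axiom~\ref{spec:complete} to align symbolic and extended solutions—is essentially identical, because $\ssymatwo$ is shared between $\ssyma{}$ and $\ssymac{}$.
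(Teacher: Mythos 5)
Your proposal is correct and follows exactly the route the paper intends: the paper gives no separate proof for this lemma but explicitly says to repeat the argument of Lemma~\ref{lem:ssyma-complete-first}, which is what you do. The one obstacle you flag is actually moot — as the paper notes, $\ssymc{}$ already absorbs $\tau$-actions into its transitions (so the last step is labelled $\alpha$, not $\alpha\cdot\tau^k$, and no $\obse(\cdot)$ or $\tau$-reordering is needed), which only makes the adaptation easier.
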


\begin{lemma} \label{lem:ssymac-complete-second}
  Let $A,B$ and $A'$ be quiescent, simple symbolic processes such that
  $A \ssymc{\tr} A' = \triple{\p'}{\Phi'}{\Set'}$, $\theta \in \Sol(A')$,
  and $(\{\extof{A}\};\{\extof{B}\})\ssymac{\tr}(\as';\bs')$
  with $\theta\in\Sol^+(C)$ for some $C \in \as'\cup\bs'$.
  Then there exists some $A^+ \in \as'$ such that
  $\ofext{A^+} = A'$ and $\theta\in\Sol^+(A^+)$.
  Symmetrically for $B\ssymc{\tr}B'$.
\end{lemma}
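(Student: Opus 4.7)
The plan is to mimic the proof of Lemma~\ref{lem:ssyma-complete-second} essentially verbatim, taking advantage of the fact (noted just before the lemma) that, in this section, $\ssymc{}$ is viewed as manipulating elementary actions whose concatenations are required to form a prefix of a block-structured compressed trace. In particular, $\ssymaonec{}$ is defined from $\ssymc{}$ exactly as $\ssymaone{}$ is defined from $\ssym{}$, so the abstract axioms (\refS{spec:sound}, \refS{spec:complete}, \refS{spec:consistent}) apply uniformly. The proof proceeds by induction on the length of $\tr$, and we only treat one side, the other being symmetric.

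For the base case $\tr = \epsilon$, quiescence of $A$ gives $A' = A$, the exploration yields $\as' = \{\extof{A}\}$ and $\bs' = \{\extof{B}\}$, and one simply takes $A^+ = \extof{A}$; then $\theta\in\Sol(A)$ implies $\theta\in\Sol^+(A^+)$ since $A^+$ carries an empty set of extended constraints. For the inductive step, I would write $(\{\extof{A}\};\{\extof{B}\}) \ssymac{\tr_0} (\as_1;\bs_1) \ssymaonec{\alpha} (\as_2;\bs_2) \ssymatwo (\as_3;\bs_3)$ and correspondingly $A \ssymc{\tr_0} A_1 \ssymc{\alpha} A_3$, with $A_1$ quiescent (as the targets of $\ssymac{}$ always are, by the $\not\ssym{\tau}$ side condition in $\ssymaonec{}$).

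From the hypothesis $\theta\in\Sol^+(C_3)$ for some $C_3\in\as_3\cup\bs_3$, I apply \refS{spec:sound} to obtain $C_2\in\as_2\cup\bs_2$ with $\ofext{C_2}=\ofext{C_3}$ and $\theta\in\Sol^+(C_2)$; then, by the definition of $\ssymaonec{}$, $C_2$ comes from some $C_1\in\as_1\cup\bs_1$ with identical extended constraints, so $\theta|_V\in\Sol^+(C_1)$ where $V = \fvs(\Set(C_1))$. The compressed symbolic transition relation gives $\theta|_V\in\Sol(A_1)$ by monotonicity of solutions along a prefix. The induction hypothesis thus yields $A^+_1\in\as_1$ with $\ofext{A^+_1} = A_1$ and $\theta|_V\in\Sol^+(A^+_1)$. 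Since $A_1 \ssymc{\alpha} A_3$ and $A_3$ is quiescent, $\ssymaonec{}$ produces some $A^+_2\in\as_2$ with $\ofext{A^+_2} = A_3$ and $\Set^+(A^+_1) = \Set^+(A^+_2)$, so that $\theta\in\Sol^+(A^+_2)$; finally \refS{spec:consistent} yields the desired $A^+_3\in\as_3$ with $\ofext{A^+_3} = A_3$ and $\theta\in\Sol^+(A^+_3)$.

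The argument is structurally identical to that of Lemma~\ref{lem:ssyma-complete-second}; the main subtlety to check is that replacing $\ssym{}$ by $\ssymc{}$ is harmless. Concretely, one must verify that (a) the quiescence of $A_3$ makes the side condition $\not\ssym{\tau}$ in $\ssymaonec{}$ match the target of $A_1 \ssymc{\alpha} A_3$, avoiding the need for any $\obse(\cdot)$ translation of traces, and (b) the extended-constraint component is preserved by $\ssymaonec{}$ in exactly the same way as for $\ssymaone{}$, so that the transport of solutions between $A^+_1$ and $A^+_2$ remains valid. Both points are immediate from the definition of $\ssymaonec{}$ as a subset restriction of $\ssymaone{}$, so no new difficulty arises and the proof goes through.
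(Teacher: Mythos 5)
Your proof is correct and follows exactly the route the paper intends: the paper gives no separate argument for this lemma but explicitly states that it is obtained by "following the same reasoning" as Lemma~\ref{lem:ssyma-complete-second}, replacing $\ssym{}$/$\ssymaone{}$ by $\ssymc{}$/$\ssymaonec{}$ and dropping the $\obse(\cdot)$ bookkeeping since $\ssymc{}$ already absorbs $\tau$-actions. Your two explicit checks (quiescence of targets via the $\not\ssym{\tau}$ side condition, and preservation of the extended-constraint component) are precisely the points the paper flags as the only differences, so nothing further is needed.
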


\begin{theorem} \label{thm:eqaptec-esymc}
 For any quiescent extended simple processes, we have that:
\begin{equation*}
\text{$A \esymc B$ if, and only if, $A \eqaptec B$.}
\end{equation*}
\end{theorem}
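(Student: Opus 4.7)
The plan is to follow the exact same structure as the proof of Theorem~\ref{thm:eqapte-esym}, essentially replacing $\ssym{}$ by $\ssymc{}$ and $\ssyma{}$ by $\ssymac{}$, and using the three compressed lemmas (Lemmas~\ref{lem:ssymac-sound}, \ref{lem:ssymac-complete-first} and \ref{lem:ssymac-complete-second}) in place of their non-compressed counterparts. The key simplification is that we no longer need to apply the $\obse(\cdot)$ operator: since $\ssymc{}$ already absorbs $\tau$-actions into blocks, traces produced by $\ssymc{}$ and $\ssymac{}$ are already $\tau$-free. We also exploit the observation made just before the lemmas: because $\ssymac{}$ is obtained from $\ssyma{}$ by a subset restriction on $\ssymaone{}$ (up to the $u \neq^? u$ disequalities introduced by the {\sc Proper}/{\sc Improper} symbolic rules), every intermediate pair reached by $\ssymac{}$ from an initial pair with empty constraints is a valid pair in the sense of Section~\ref{subsec:specs}, so the axioms may be invoked freely.

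For the $(\Rightarrow)$ direction, let $A_0 = \proc{\p}{\Phi}$ and $B_0 = \proc{\p'}{\Phi'}$ with $A_0 \esymc B_0$, and consider an exploration $(\as_0;\bs_0) \ssymac{\tr} (\as;\bs)$. I want to show $\as \symbinclset \bs$. Pick $A^+ \in \as$ and $\theta \in \Sol^+(A^+)$. By Lemma~\ref{lem:ssymac-sound} there exists a concrete counterpart $\triple{\p}{\Phi}{\emptyset} \ssymc{\tr} \ofext{A^+}$. Using $A_0 \esymc B_0$, I obtain $B$ with $\triple{\p'}{\Phi'}{\emptyset} \ssymc{\tr} B$, $\theta \in \Sol(B)$, and the two frame instances statically equivalent. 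Then Lemma~\ref{lem:ssymac-complete-second} yields some $B^+ \in \bs$ with $\ofext{B^+}=B$ and $\theta \in \Sol^+(B^+)$, giving the required witness.

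For the $(\Leftarrow)$ direction, assume $A_0 \eqaptec B_0$ and consider $\triple{\p}{\Phi}{\emptyset} \ssymc{\tr} A'$ with $\theta \in \Sol(A')$. Since compressed blocks are maximal, $A'$ is automatically quiescent in the sense required, so Lemma~\ref{lem:ssymac-complete-first} applies directly: it yields $(\as_0;\bs_0) \ssymac{\tr} (\as;\bs)$ together with $A^+ \in \as$ and $\theta^+ \in \Sol^+(A^+)$ such that $\ofext{A^+} = A'$ and the two associated first-order substitutions coincide modulo $\E$. From $A_0 \eqaptec B_0$, we have $\as \symbeqset \bs$, which gives some $B^+ = \triple{\p_B}{\Phi_B}{\Set_B;\Set^+_B} \in \bs$ with $\theta^+ \in \Sol^+(B^+)$ and $\Phi_A\lambda^A_{\theta^+} \statequiv \Phi_B\lambda^B_{\theta^+}$. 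Lemma~\ref{lem:ssymac-sound} then provides $\triple{\p'}{\Phi'}{\emptyset} \ssymc{\tr} \ofext{B^+}$.

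The only remaining work is the routine transfer argument that already appeared in the proof of Theorem~\ref{thm:eqapte-esym}: showing that $\theta$ itself (not merely $\theta^+$) is a solution of $\ofext{B^+}$ and that $\Phi_A\lambda^A_\theta \statequiv \Phi_B\lambda^B_\theta$. This follows by unfolding the definitions and using static equivalence of $\Phi_A\lambda^A_{\theta^+}$ and $\Phi_B\lambda^B_{\theta^+}$, exactly as in the proof of Theorem~\ref{thm:eqapte-esym}. I expect no real obstacle here; the main care needed is checking that the quiescence hypothesis carries through the compressed semantics, which is immediate because the compressed symbolic semantics only stops at the end of a block (hence after all enabled $\tau$-actions of the in-focus process have been resolved) and because the {\sc Improper} rule kills the offending process altogether, so reached extended symbolic processes are indeed $\ssym{}$-quiescent whenever Lemma~\ref{lem:ssymac-complete-first} must be invoked on them.
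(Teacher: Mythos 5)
Your proposal is correct and follows essentially the same route as the paper, which itself only sketches this proof by stating Lemmas~\ref{lem:ssymac-sound}--\ref{lem:ssymac-complete-second} and remarking that the argument of Theorem~\ref{thm:eqapte-esym} carries over with $\ssym{}$, $\ssyma{}$ replaced by $\ssymc{}$, $\ssymac{}$ and without the $\obse(\cdot)$ operator. Your additional observations---that intermediate pairs remain valid because $\ssymac{}$ is a subset restriction of $\ssymaone{}$ up to the added $u\neq^? u$ disequalities, and that quiescence of the reached processes is automatic since compressed blocks only terminate at $0$, an input, or a blocked output---are exactly the points the paper relies on implicitly.
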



\subsection{Integrating dependency constraints in \apte} 
\label{subsec:apte-dependency}

We now define a final variant of \apte explorations, which integrates
the ideas of Section~\ref{sec:diff} to further reduce redundant explorations.
We can obviously generate dependency constraints in \apte, just like we
did in Section~\ref{sec:diff}, but the real difficulty is to exploit
them in constraint resolution to prune some branches of the exploration
performed by \apte.
Roughly, we shall simply stop the exploration when reaching
a state for which we know that all of its solutions
violate dependency constraints.
To do that, we rely on the notion of partial solution
introduced in Section~\ref{subsec:nutshell}.
In other words, we do not modify \apte's constraint resolution, but simply
rely on information that it already provides to know when dependency
constraints become unsatisfiable. As we shall see, this simple strategy is
very satisfying in practice.

\begin{definition} \label{def:ssymad}
 We define $\ssymad{}$ as the greatest relation contained in $\ssymac{}$
  and such that, for any symbolic processes $A$ and $B$
  with empty constraint sets,
  $(\{\extof{A}\};\{\extof{B}\})\ssymad{\tr}(\as';\bs')$ implies that
  there is no
  $\constrd{\vect X}{\vect w}\in \AllDep{\tr}$
  such that
  for all $X_i\in\vect X$ we have
  $X_i\in \dom(\ps(\as';\bs'))$,
  and $\vect w\cap\fvp(X_i\ps(\as';\bs'))=\varnothing$.

Finally, given two simple extended process $A = (\p_A; \Phi_A)$ and $B
= (\p_B; \Phi_B)$, we say that $A \eqapted B$ when $\as \symbeqset \bs$
for any pair $(\as; \bs)$ such that $((\p_A; \Phi_A; \emptyset;\emptyset);(\p_B;\Phi_B;\emptyset;\emptyset)) \ssymad{\tr} (\as;\bs)$.
\end{definition}

\begin{example}
  \label{ex:toy4}
Continuing Example~\ref{ex:toy3},
consider the following compressed exploration, where $Q_3$ contains 
the constraints $\dedi{X_2}{\emptyset}{x_2}$,
$\dedi{X_1}{\{w_2\refer n_2\}}{x_1}$,
$x_2\eqi \ok$ and $x_1\eqi\ok$:
$$(\{Q_0\};\{Q_0\})\ssymac{\In(c_2,X_2)}\ssymac{\Out(c_2,w_2)}
(\{Q_2\};\{Q_2\})\ssymac{\In(c_1,X_1)}\ldots\ssymac{\Out(c_1,w_1)}
(\{Q_3\};\{Q_3\}).
$$
Assuming that $\ps(Q_3)=\{X_2\mapsto\ok,X_1\mapsto\ok\}$ (which
is the case in the actual \apte procedure)
this compressed exploration is not explored by $\ssymad{}$ because
$$\constrd{X_1}{w_2}\in\AllDep{\iossvect{c_2}{X_2}{w_2}\cdot\iossvect{c_1}{X_1}{w_1}},\;
X_1\ps(Q_3) = \ok
\;\mbox{ and }\;
\{w_2\}\cap\fvp(\ok)=\emptyset.$$
\end{example}

\begin{lemma} \label{lem:ssymad-complete-first}
  Let $A=\triple{\p}{\Phi}{\emptyset}$, $B=\triple{\q}{\Psi}{\emptyset}$
  and $A'=\triple{\p'}{\Phi'}{\Set'}$ be quiescent, simple symbolic processes
  such that $\proc{\p}{\Phi} \eqapted \proc{\q}{\Psi}$,
  $A \ssymc{\tr} A'$, $\theta\in\Sol(A')$ and
  $\theta\models_{(\Phi';\Set')}\AllDep{\tr}$.
  Then there exists an exploration
  $(\{\extof{A}\};\{\extof{B}\}) \ssymad{\tr} (\as';\bs')$ and
  some $A^+ \in \as'$, $\theta^+ \in \Sol^+(A^+)$ such that
  $\ofext{A^+} = A'$ and $\lambda_\theta =_\E \lambda_{\theta^+}$,
  where $\lambda_\theta$
  (resp.~$\lambda_{\theta^+}$) is the substitution associated
  to $\theta$ (resp.~to $\theta^+$) with respect to $\cs{\Phi'}{\Set'}$. 
  Symmetrically for $B\ssymc{\tr}B'$.
\end{lemma}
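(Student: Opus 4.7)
My plan is to mimic the proof of Lemma~\ref{lem:ssymac-complete-first}, proceeding by induction on $|\tr|$, and performing at each step the extra verification that the pruning condition defining $\ssymad{}$ does not apply. The base case $\tr = \epsilon$ is immediate, since $A = A'$ (as $A$ is quiescent) and we can take $A^+ = \extof{A}$, $\theta^+ = \theta$. For the inductive step I would write $\tr = \tr_0 \cdot \alpha$ for an observable action $\alpha$ (using the elementary-action view of $\ssymc{}$ from Section~\ref{subsec:apte-compression}), decompose $A \ssymc{\tr_0} A_1 \ssymc{\alpha} A'$ after reordering so that $A_1$ is quiescent, and apply the induction hypothesis to $A_1$ with $\theta$ suitably restricted. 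This yields an $\ssymad{}$-exploration ending in a pair $(\as_1; \bs_1)$ with some $A_1^+$ and $\theta_1^+$ corresponding to $A_1$ and $\theta$.

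Since $A \eqapted B$ and $(\as_1; \bs_1)$ is reachable through $\ssymad{}$, we have $\as_1 \symbeqset \bs_1$, so we can perform a $\ssymaonec{\alpha}$ transition followed by $\ssymatwo$ using Axioms~\ref{spec:sound}--\ref{spec:consistent} to obtain the desired $(\as'; \bs')$, $A^+ \in \as'$ and $\theta^+ \in \Sol^+(A^+)$ with $\ofext{A^+} = A'$ and $\lambda_\theta =_\E \lambda_{\theta^+}$, exactly as in Lemma~\ref{lem:ssymac-complete-first}. What is new is that I must check that this transition actually belongs to $\ssymad{}$. Thanks to the monotonicity of partial solutions (Axiom~\ref{spec:presol}(2)), it is enough to verify the pruning condition at the final pair $(\as'; \bs')$ with respect to the full trace $\tr$, since if the condition held at some earlier pair it would also hold at the current one.

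Suppose, toward contradiction, that some $\constrd{\vect X}{\vect w} \in \AllDep{\tr}$ has every $X_i \in \vect X$ in $\dom(\ps(\as';\bs'))$ and satisfies $\vect w \cap \fvp(X_i\ps(\as';\bs')) = \varnothing$ for every $X_i$. From the hypothesis $\theta \models_{(\Phi';\Set')} \AllDep{\tr}$ I would extract some distinguished $X_i \in \vect X$ such that any valid recipe producing $(X_i\theta)(\Phi'\lambda_\theta)$ must mention a variable in $\vect w$. Because $\lambda_\theta =_\E \lambda_{\theta^+}$ and $\theta^+ \in \Sol^+(A^+)$, the recipe $X_i\theta^+$ is valid and produces the very same message modulo $\E$; combined with $X_i \in \dom(\ps(\as';\bs'))$ and Definition~\ref{def:presol}, this forces $X_i\theta^+ = X_i\ps(\as';\bs')$, hence $\fvp(X_i\ps(\as';\bs')) \cap \vect w \neq \varnothing$, contradicting the assumption.

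The main obstacle, and the place where some care is required, is precisely this last transfer: bridging \apte's axiomatic partial solution with the semantic satisfaction of dependency constraints. The bridge relies on two ingredients that are already available in the paper: that every solution of a constraint system must agree with $\ps$ on its domain (Definition~\ref{def:presol}), and that the first-order substitutions associated with $\theta$ and $\theta^+$ coincide modulo $\E$ (delivered by the inductive construction combined with Axiom~\ref{spec:complete}). Once this is in place, the rest of the argument is a routine adaptation of Lemma~\ref{lem:ssymac-complete-first}.
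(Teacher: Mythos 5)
Your proof is correct and follows essentially the same route as the paper's: induction on the trace, reuse of the machinery of Lemma~\ref{lem:ssymac-complete-first} (via Axioms~\ref{spec:sound}--\ref{spec:consistent}) for the new step, and then the key observation that $\theta^+$ agrees with $\ps(\as';\bs')$ on its domain while $\lambda_\theta =_\E \lambda_{\theta^+}$, so a violated pruning condition would contradict $\theta\models_{(\Phi';\Set')}\AllDep{\tr}$. Your explicit unfolding of the contradiction through the distinguished $X_i$ and the recipe $X_i\theta^+$ is just a more detailed phrasing of the paper's argument that $\theta^+ = \ps(\as';\bs')\sqcup\tau$ would force $\theta\not\models\AllDep{\tr}$.
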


\begin{proof}
  We proceed by induction on $\tr$. The empty case is easy.
  Otherwise, consider $A \ssymc{\tr} A_1 \ssymc{\alpha} A_3 = \triple{\p_3}{\Phi_3}{\Set_3}$ with 
  $\theta\in\Sol(A_3)$, $A_1,A_3$ quiescent, and
  $\theta\models_{(\Phi_3; \Set_3)}\AllDep{\tr\cdot \alpha}$.
Let $A_1 = \triple{\p_1}{\Phi_1}{\Set_1}$ and $V_1 = \fvs(\Set_1)$.
  We also have $\theta|_{V_1}\in\Sol(A_1)$ and
  $\theta|_{V_1}\models_{(\Phi_1; \Set_1)}\AllDep{\tr}$,
  so the induction hypothesis applies and we obtain
  $(\{\extof{A}\};\{\extof{B}\})\ssymad{\tr}(\as_1;\bs_1)$
  with $A_1^+\in\as_1$, $\ofext{A_1^+}=A_1$ and $\theta_1^+\in\Sol^+(A_1^+)$
 such that the first-order substitutions associated to
  $\theta|_{V_1}$ and $\theta_1^+$ w.r.t.\ $(\Phi_1;\Set_1)$
  coincide.
 
  By hypothesis we have $A \eqapted B$, thus $\as_1 \symbeqset \bs_1$.
  Hence a $\ssymaonec{}$ transition can take place on that pair.
  By definition of $\ssymaonec{}$ and since $\ofext{A_1^+} = A_1 
  \ssymc{\alpha} A_3$, there must be some
  $(\as_1;\bs_1) \ssymaonec{\alpha} (\as_2;\bs_2)$
  with $A_2^+\in\as_2$, $\ofext{A_2^+}=A_3$.
  Thus $\theta\in\Sol(A^+_2)$ and we can apply \refS{spec:complete}
  to obtain $(\as_2;\bs_2)\ssymatwo(\as_3;\bs_3)$ with $A^+_3\in\as_3$,
  $\ofext{A^+_3}=\ofext{A^+_2}$ and
  $\theta^+_3\in\Sol^+(A^+_3)$ such that the subsitutions associated to
  $\theta$ and $\theta^+_3$ w.r.t.\ $(\Phi_3;\Set_3)$ coincide.

  It only remains to show that this extra execution step in $\ssymac{}$
  is also present in $\ssymad{}$, \ie
  that $\ps(\as_3;\bs_3)$ does not violate
  $\AllDep{\tr\cdot \alpha}$ in the sense of Definition~\ref{def:ssymad}.
  This is because, by definition of the partial solution, we have that
  $\theta_3^+ = \ps(\as_3;\bs_3) \sqcup \tau$ for some $\tau$,
  so that if $\ps(\as_3;\bs_3)$ violated $\AllDep{\tr\cdot \alpha}$ then
  we would have $\theta_3^+\not\models_{(\Phi_3;\Set_3)}\AllDep{\tr\cdot \alpha}$.
  Since $\theta_3^+$ and $\theta$
  induce the same first-order substitutions with
  respect to $(\Phi_3;\Set_3)$,
  we would finally have
  $\theta\not\models_{(\Phi_3;\Set_3)}\AllDep{\tr\cdot \alpha}$,
  contradicting the hypothesis on $\theta$.
\end{proof}

\begin{theorem}
\label{thm:cor-com-apte-red}
For any quiescent initial simple processes $A$ and $B$,
we have that:
\begin{equation*}
\text{$A \eint B$ if, and only if, $A \eqapted B$.}
\end{equation*}
\end{theorem}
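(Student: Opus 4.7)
The theorem chains four results. By Theorem~\ref{theo:comp-soundness-completeness} (applicable because $A,B$ are initial) and Theorem~\ref{theo:reduced}, we have $A \eint B \iff A \eintc B \iff A \esymdff B$, so it suffices to prove $A \esymdff B \iff A \eqapted B$. The structure then mirrors the proof of Theorem~\ref{thm:eqaptec-esymc}, with the extra layer of dependency constraints handled by the partial solution mechanism axiomatized in \refS{spec:presol}.

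For $(\Leftarrow)$, I would assume $A \eqapted B$ and consider a symbolic execution $A \ssymc{\tr} A' = \triple{\p'}{\Phi'}{\Set'}$ with $\theta \in \Sol(A')$ satisfying $\theta \models \AllDep{\tr}$. By Lemma~\ref{lem:ssymad-complete-first} this execution is captured by an \apte exploration $(\{\extof{A}\};\{\extof{B}\}) \ssymad{\tr} (\as';\bs')$ producing some $A^+ \in \as'$ with $\ofext{A^+} = A'$ and some $\theta^+ \in \Sol^+(A^+)$ whose associated first-order substitution coincides with $\lambda_\theta$ modulo $\E$. Invoking $A \eqapted B$ gives $\as' \symbeqset \bs'$, hence some $B^+ \in \bs'$ with $\theta^+ \in \Sol^+(B^+)$ and with frames in static equivalence. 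Since $\ssymad{} \subseteq \ssymac{}$, Lemma~\ref{lem:ssymac-sound} provides the matching symbolic execution $B \ssymc{\tr} \ofext{B^+}$. As in the proof of Theorem~\ref{thm:eqapte-esym}, the static equivalence of the frames together with the coincidence of $\lambda_\theta^A$ and $\lambda_{\theta^+}^A$ lets us conclude that $\theta \in \Sol(\ofext{B^+})$ with $\lambda_\theta^B = \lambda_{\theta^+}^B$ modulo $\E$. Finally, because static equivalence preserves the deducibility and validity judgements used in Definition~\ref{def:dep-constr}, the condition $\theta \models \AllDep{\tr}$ transfers from the $A$-side to the $B$-side.

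For $(\Rightarrow)$, one cannot work directly with $\esymdff$, because a $\theta \in \Sol^+(A^+)$ appearing in an $\ssymad{}$ exploration need not satisfy $\AllDep{\tr}$ (the partial solution is only guaranteed to detect some violations, not all). Instead I would route through the concrete semantics. Assume $A \eint B$ (equivalently $A \eintc B$) and fix an exploration $(\{\extof{A}\};\{\extof{B}\}) \ssymad{\tr} (\as;\bs)$ together with $A^+ \in \as$ and $\theta \in \Sol^+(A^+)$. Via $\ssymad{} \subseteq \ssymac{}$ and Lemma~\ref{lem:ssymac-sound}, one recovers $A \ssymc{\tr} \ofext{A^+}$, and Proposition~\ref{pro:symb-soundness} yields the concrete execution $A \sintc{\tr\theta} \proc{\p'\lambda_\theta}{\Phi'\lambda_\theta}$. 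Trace equivalence in the compressed semantics produces a matching $B \sintc{\tr\theta} \proc{\q'}{\Psi'}$ with $\Phi'\lambda_\theta \statequiv \Psi'$, which Proposition~\ref{pro:symb-completeness} lifts back to a symbolic execution $B \ssymc{\tr_s} \triple{\q_s}{\Psi_s}{\Set_B}$ with $\tr_s\theta' = \tr\theta$; by choosing the fresh second-order variables appropriately (the renaming trick used in the proof of Theorem~\ref{theo:reduced}) one ensures $\tr_s = \tr$ and $\theta' = \theta$. Lemma~\ref{lem:ssymac-complete-second}, applied with $A^+$ as the witness extended process carrying $\theta$, then produces the desired $B^+ \in \bs$ with $\ofext{B^+} = \triple{\q_s}{\Psi_s}{\Set_B}$ and $\theta \in \Sol^+(B^+)$, and static equivalence at the symbolic level follows from its concrete-level counterpart.

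The main obstacle is precisely this asymmetry in $(\Rightarrow)$: an $\ssymad{}$ exploration may legitimately expose solutions that violate the dependency constraints, so $\esymdff$ is not strong enough to match them directly. The detour through the concrete semantics, feeding Theorem~\ref{theo:comp-soundness-completeness} into Propositions~\ref{pro:symb-soundness} and~\ref{pro:symb-completeness}, is what compensates for this gap; all other arguments are bookkeeping to align symbolic traces, transfer static equivalence and first-order substitutions, and propagate dependency satisfaction across the two sides, exactly as in the proof of Theorem~\ref{thm:eqapte-esym}.
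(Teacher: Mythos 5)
Your proposal is correct and follows essentially the same route as the paper: the backward direction ($A \eqapted B$ implies $A \incesymdff B$ via Lemma~\ref{lem:ssymad-complete-first}, symbolic equivalence of the reached pair, Lemma~\ref{lem:ssymac-sound}, and the transfer of $\AllDep{\tr}$-satisfaction across statically equivalent frames) matches the paper's argument step for step, and you correctly identify that the forward direction must not go through $\esymdff$ since $\ssymad{}$ explorations may expose solutions violating the dependency constraints. The only difference is one of economy: where you re-derive the forward direction by a round trip through the concrete compressed semantics (Propositions~\ref{pro:symb-soundness} and~\ref{pro:symb-completeness} plus Lemma~\ref{lem:ssymac-complete-second}), the paper obtains it in one line from the already-established implication $A \eint B \Rightarrow A \eqaptec B$ together with the containment of $\ssymad{}$ in $\ssymac{}$, which makes every $\ssymad{}$-reachable pair an $\ssymac{}$-reachable one and hence symbolically equivalent by hypothesis.
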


\begin{proof}
  Let $A=\proc{\p}{\Phi}$ and $B=\proc{\q}{\Psi}$ be two quiescent,
  initial simple processes. Thanks to
our previous results, we have that $A \eint B$ implies $A \eqaptec B$.
  Then, we obviously have $A \eqapted B$:
  for any $(\{\extof{\triple{\p}{\Phi}{\emptyset}}\};\{\extof{\triple{\q}{\Psi}{\emptyset}}\})\ssymad{\tr}(\as';\bs')$
  we have $(\{\extof{\triple{\p}{\Phi}{\emptyset}}\};\{\extof{\triple{\q}{\Psi}{\emptyset}}\})\ssymac{\tr}(\as';\bs')$
  by definition of $\ssymad{}$, and thus $\as' \symbeqset \bs'$
  by hypothesis.

  For the other direction, it suffices to show that
  $A \eqapted B$ implies $A \incesymdff B$.
  Let $\triple{\p}{\Phi}{\emptyset} \ssymc{\tr} A' = (\p';\Phi';\Set')$ with $\theta\in\Sol(A')$ and
  $\theta\models_{(\Phi';\Set')}\AllDep{\tr}$.
  By Lemma~\ref{lem:ssymad-complete-first}
  we have 
  $(\{\extof{\triple{\p}{\Phi}{\emptyset}}\};\{\extof{\triple{\q}{\Psi}{\emptyset}}\})\ssymad{\tr}(\as';\bs')$
  with $A^+\in\as'$, $\theta^+\in\Sol^+(A^+)$ such that $\ofext{A^+}=A'$
  and $\lambda_{\theta}^{A'} =_\E \lambda_{\theta^+}^{A'}$ where
  $\lambda_{\theta}^{A'}$ (resp. $\lambda_{\theta^+}^{A'}$) is the
  substitution associated to $\theta$ (resp. $\theta^+$) w.r.t. $(\Phi';\Set')$.

  Since $A \eqapted B$, we have $\as'\symbeqset\bs'$:
  there must be some $B^+ = (\p_{B'};\Phi_{B'};\Set_{B'};\Set_B^+) \in\bs'$
  such that $\theta^+\in\Sol^+(B^+)$ 
  and $\Phi'\lambda_{\theta^+}^{A'} \statequiv 
  \Phi_{B'}\lambda_{\theta^+}^{B'}$ where $\lambda_{\theta^+}^{B'}$ is
  the substitution associated to $\theta^+$ w.r.t. $(\Phi_{B'};\Set_{B'})$.
  By Lemma~\ref{lem:ssymac-sound} we have
  $\triple{\q}{\Psi}{\emptyset} \ssymc{\tr} \ofext{B^+}$.
  Furthermore, we can show as before (see the end of the proof of
  Theorem~\ref{thm:eqapte-esym}) that $\theta\in\Sol(B^+)$
  and
  $\Phi'\lambda_\theta^{A'} \statequiv \Phi_{B'}\lambda_\theta^{B'}$,
  where $\lambda_\theta^{B'}$ is the substitution associated
  to $\theta$ w.r.t.\ $(\Phi_{B'};\Set_{B'})$.
  Finally, by
  $\theta \models_{(\Phi';\Set')}\AllDep{\tr}$,
  $D_{(\Phi';\Set')}=D_{(\Phi_{B'};\Set_{B'})}$ (\ie sets of handles
     that second-order variables may use coincide),
  and
  $\Phi'\lambda_\theta^{A'} \statequiv \Phi_{B'}\lambda_\theta^{B'}$,
  we obtain that
  $\theta\models_{(\Phi_{B'};\Set_{B'})}\AllDep{\tr}$.
\end{proof}

\subsection{Benchmarks}
\label{subsec:benchmarks}


The optimisations developed in the present paper have been implemented,
following the above approach, in the official version of
\apte~\cite{apte-github}.

In practice, many processes enjoy a nice property that allows one to
  ensure that non-blocking outputs will never occur: it is often the case
  that enough tests have been performed before outputting a term 
  to
  ensure its validity.

\begin{example}
\label{ex:non-blocking}
Consider the following process, where $k'$ is assumed to be valid
  (\eg because it is a pure constructor term):
\[
  \In(c,x).\testt{\dec{x}{k} = 
  \mathsf{hash}(u)}{\Out(c,\enc{\dec{x}{k}}{k'})}
\]
The term outputted during an execution is necessarily valid thanks to the test
that is performed just before this output.
\end{example}

We exploit this property in order to avoid
  adding additional disequalities when integrating compression
  in~\apte.
  Therefore, in this section,
  we will restrict ourselves to simple
  processes that are \emph{non-blocking} as defined below.
\begin{definition}
Let $(\p;\Phi)$ be a simple process. We say that $(\p;\Phi)$ is
  \emph{non-blocking}
if  $u$ is valid for
  any $\tr$, $c$, $u$, $Q'$, $\q$, $\Psi$ such that  $(\p;\Phi)\sint{\tr} (\{\Out(c,u).Q'\}\cup
  \q;\Psi)$.

\end{definition}
 This condition may be hard to check in general, but
 it is actually quite
 easy to see that it is satisfied on all of our
examples. Roughly, enough tests are performed before any output action,
and this ensures the validity of the term when the output action 
becomes reachable, as in Example~\ref{ex:non-blocking}.
\medskip{}

\david{
Our modified version of \apte can verify our optimised equivalences
in addition to the original trace equivalence. It has been integrated
into the main development line of the tool.%
}
The modifications of the code ($\approx$ 2kloc)
are summarized at
\begin{center}
  \url{https://github.com/lutcheti/APTE/compare/ref...APTE:POR}
\end{center}
\david{
For reference, the version of \apte that we are using in the benchmarks
below is available at 
\url{https://github.com/APTE/APTE/releases/tag/bench-POR-LMCS} together
with all benchmark files, in subdirectory \nolinkurl{bench/protocols}.
More details, including}
instructions for reproducing our benchmarks are available
\david{at \url{http://www.lsv.fr/~hirschi/apte_por}}.

We ran the tool (compiled with OCaml 3.12.1) on a single 2.67GHz Xeon core
(memory is not relevant) and compared three different versions:
\begin{itemize}
\item {\em reference:} the reference version without our optimisations (\ie $\eqapte$);
\item {\em compression:} using only the compression optimisation (\ie $\eqaptec{}$);
\item {\em reduction:} using both compression and reduction (\ie $\eqapted{}$).
\end{itemize}

We \david{first} show examples in which equivalence holds.
\david{They are the most significant}, because the time
spent on inequivalent processes is too sensitive to the order in which
the (depth-first) exploration is performed.



\paragraph{\textit{Toy example.}} 
We  consider a parallel composition of $n$ roles $R_i$ as defined
in Example~\ref{ex:toy1}:
$P_n := \Pi_{i=1}^n R_i$.
When executed in the regular symbolic semantics $\ssym{}$,
the $2 n$ actions of $P_n$ may be interleaved in $(2n)!/2^n$ ways in
a trace containing all actions.
In the compressed symbolic semantics $\ssymc{}$,
the actions of individual $R_i$ processes must be bundled in blocks, so there
are only $n!$ interleavings containing all actions.
In the reduced symbolic semantics $\ssymd{}$,
only one interleaving of that length remains:
the trace cannot deviate from the priority order, since the only way
to satisfy a dependency constraint would be to feed an input with a message
that cannot be derived without some previously output nonce $n_i$,
but in that case the message will not be $\ok$
and the trace won't be explored further.
Note that there is still an exponential number of 
symbolic traces in the reduced semantics
when one takes into account traces with less than $2n$ actions.

\david{We show in Figure~\ref{fig:graph-toy} the time needed
to verify $P_n \approx P_n$ for $n = 1$ to $22$ in the three versions
of \apte described above: \emph{reference}, \emph{compression} and 
\emph{reduction}.
The results, in logarithmic scale,
show that each of our optimisations}
brings an exponential speedup,
as predicted by our theoretical analysis.
Similar \david{improvements} are observed if one compares the numbers of explored
pairs rather than execution times.

\begin{figure}[htpb]
  \begin{center}
    \includegraphics[scale=1.5]{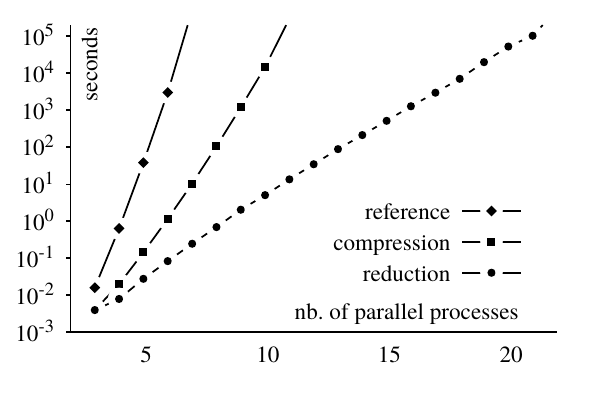}
  \end{center}
  \caption{Impact of optimisations on verification time on toy example.}
  \label{fig:graph-toy}
\end{figure}

\paragraph{\textit{Denning-Sacco protocol.}} We ran a similar benchmark,
checking that Denning-Sacco ensures strong secrecy in various scenarios.
The protocol
has three roles and we added processes playing those roles in turn,
starting with three processes in parallel.
\david{
Srong secrecy is expressed by considering, after one of the roles B,
the output of a message encrypted with the established key
on one side of the equivalence, and with a fresh key on the other side.
}
\stef{The results are plotted in
Figure~\ref{fig:graph-DS}.}
The fact that we add one role out of three at each step
explains the irregular growth in verification time. 
We still observe an exponential speedup for each optimisation.

\begin{figure}[htpb]
  \begin{center}
    \includegraphics[scale=1.5]{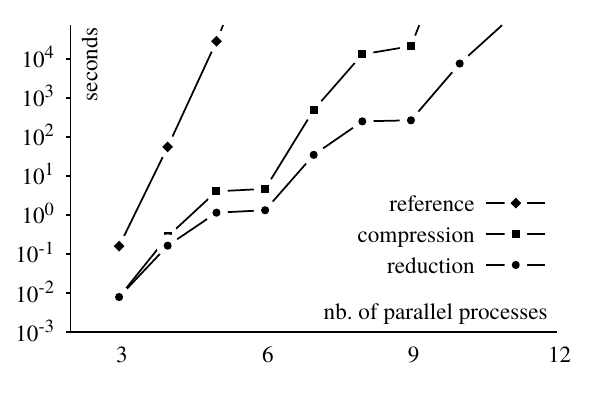}
  \end{center}
  \caption{Impact of optimisations on verification time on Denning-Sacco.}
  \label{fig:graph-DS}
\end{figure}

\paragraph{\textit{Practical impact.}}
Finally, we illustrate how our optimisations make \apte much more useful
in practice for investigating interesting scenarios.
Verifying a single session of a protocol brings little assurance
into its security. In order to detect replay attacks and to allow the
attacker to compare messages that are exchanged, at least two sessions
should be considered.
This means having at least four parallel processes for two-party protocols,
and six when a trusted third party is involved.
This is actually beyond 
what the unoptimised \apte can handle in a reasonable amount
of time.
We show \david{in Figure~\ref{fig:sessions}}
how many parallel processes could be handled in $20$ hours
by \apte on various use cases of protocols,
\david{for the same three variants of \apte as before, \ie
\emph{reference}, \emph{compression} and \emph{reduction}.}
\david{We verify an anonymity property for the Passive Authentication
protocol of e-passports. For other protocols,}
\lucca{we analyse strong secrecy of established keys:
\david{
for one of the roles we add, on one side of the equivalence,
an output encrypted by the established key and, on the other side,
an output encrypted by a fresh key.}}


\begin{figure}[htpb]
\centering
\begin{tabular}{|l|>{\centering\arraybackslash}p{2.5cm}|>{\centering\arraybackslash}p{2.5cm}|>{\centering\arraybackslash}p{2.5cm}|}\hline
 Protocol & reference & compression & reduction \\ \hline
      Needham Schroeder (3-party)& 4 & 6  & 7  \\
  Private Authentication (2-party)& 4 & 7 & 7 \\ 
Yahalom (3-party) & 4 & 5 &5 \\
E-Passport PA (2-party) & 4 & 7 & 9 \\  
Denning-Sacco (3-party) & 5 & 9  & 10\\
  Wide Mouth Frog (3-party)& 6 & 12 & 13\\
  \hline
\end{tabular}
\caption{Maximum number of parallel processes verifiable in 20 hours.}
\label{fig:sessions}
\end{figure}

\david{We finally present the benefits of our optimisations for discovering 
  attacks.}
\lucca{We performed some experiments on flawed variants of protocols,
  \david{
    shown in Figure~\ref{fig:time-attacks}, corresponding
    to example files in subdirectory
      \nolinkurl{bench/protocols/attacks/} of the above mentioned release.
    The scenario \emph{Denning-Sacco A} expresses strong secrecy of the
    (3-party) Denning-Sacco protocol,
    but this time on two instances of roles at the same time (instead of one as in Figure~\ref{fig:sessions}).
    In \emph{Denning-Sacco B}, we consider again a form of strong secrecy
    expressed by outputting encrypted messages 
    but this time at the end of role B.
    The \emph{Needham-Schroeder pub} scenario corresponds to strong secrecy of
    the public-key Needham-Schroder protocol.
    The \emph{E-Passport PA exposed} experiments show that anonymity is 
    (obviously) lost with the Passive Authentication protocol
    when the secret key is made public.
    Similarly, the \emph{Yahalom exposed} experiment shows that strong secrecy
    of Yahalom is lost when secrets keys are revealed.
  }
  Since \apte stops its exploration as soon as
  an attack is found, the time needed for \apte to
  find the attack highly depends on the order in which the depth-first 
  exploration is performed.
  However,  as shown in Figure~\ref{fig:time-attacks}, we always observe 
  in practice dramatic improvements brought by our optimisations
  compared to the reference version of \apte. 
\david{In some cases,} our 
  optimisations are even mandatory for \apte \david{to find} the attack
  \david{using reasonable resources}.
}

\begin{figure}[htpb]
\centering
\begin{tabular}{|l|>{\centering\arraybackslash}p{2.5cm}|>{\centering\arraybackslash}p{2.5cm}|>{\centering\arraybackslash}p{2.5cm}|}\hline
 Protocol & reference & compression & reduction \\ \hline
Denning-Sacco A (6 par. proc.) & OoM & 0.07s  & 0.02s\\
Denning-Sacco B (6 par. proc.) & 5.83s & 0.04s  & 0.04s\\
Needham-Shroeder pub (7 par. proc.) & TO & 0.77s & 0.67s \\
Needham-Shroeder pub (\david{5} par. proc.) & 0.79s & 0.21s & 0.13s \\
E-Passport PA exposed (8 par. proc.) & TO & 0.02s & 0.02s \\  
E-Passport PA exposed (6 par. proc.) & 4.37s & 0.03s & 0.02s \\  
Yahalom exposed (4 par. proc.) & 7.24s & 0.02s & 0.02s \\
 \hline
\end{tabular}
\caption{Impact of optimisations for finding attacks (OoM denotes a consumption of $>$32Go of RAM and TO denotes a running time of $>$20 hours).}
\label{fig:time-attacks}
\end{figure}

\section{Related Work}
\label{sec:relWork}

The techniques we have presented borrow from standard ideas
from concurrency theory and trace theory.
Blending all these ingredients, and adapting them to the
demanding framework of security protocols, we have come up with
partial order reduction techniques that can effectively be used in
symbolic verification algorithms for equivalence properties
of security protocols.
We now discuss related work, and there is a lot of it given the
huge success of POR techniques in various application areas.
We shall focus on the novel aspects of our approach,
and explain why such techniques have not been needed outside
of security protocol analysis. These observations are not new:
as pointed out by Baier and Katoen~\cite{ModelCheckingBook},
``[POR] is mainly appropriate to control-intensive applications and less
suited for data-intensive applications'';
Clarke \emph{et al.}~\cite{clarke2000partial} also remark that
``In the domain of model checking of
reactive systems, there are numerous techniques for reducing
the state space of the system. One such technique
is partial-order reduction. This technique does not directly
apply to [security protocol analysis] 
because we explicitly keep
track of knowledge of various agents, and our logic can
refer to this knowledge in a meaningful way.''
We first compare our work with classical POR techniques, and 
then comment
on previous work in the domain of security protocol analysis.

\subsection{Classical POR}

Partial order reduction techniques
have proved very useful in the domain of model checking concurrent programs.
Given a Labelled Transition System (LTS) and some property to check
(\eg a Linear Temporal Logic formula), 
the basic idea of POR~\cite{Peled98,godefroid1996partial,ModelCheckingBook}
is to only consider a reduced version of the given LTS whose transitions of some states might be not exhaustive
but are such that this transformation does not affect the property.
POR techniques can be categorized in two groups~\cite{godefroid1996partial}.
First, the {\em persistent set} techniques
(\eg \emph{stubborn sets, ample sets}) where only a
sufficiently representative subset of available transitions is explored.
Second, {\em sleep set} techniques memoize past exploration
and use this information along with available transitions to
disable some provably redundant transitions.
Note that these two kinds of techniques are compatible, and are indeed
often combined to obtain better reductions.
Theoretical POR techniques apply to transition systems which may not be
explicitly available in practice, or whose explicit computation may be
too costly. In such cases, POR is often applied to an approximation of
the LTS that is obtained through static analysis. Another, more recent
approach is to use
\emph{dynamic 
POR}~\cite{flanagan2005dynamic,tasharofi2012transdpor,abdulla2014optimal}
where the POR arguments are applied
based on information that is obtained during the execution of the
system.

Clearly, classical POR techniques would apply
to our concrete LTS, but that would not be practically useful since this
LTS is wildly infinite, taking into account all recipes that the attacker
could build.
Applying most classical POR techniques to the LTS from which
data would have been abstracted away would be ineffective:
any input would be dependent \stef{on} any output (since the attacker's
knowledge, increased by the output, may enable new input messages).
Our compression technique lies between these two extremes. It exploits
a semi-commutation property: outputs can be permuted before inputs,
but not the converse in general.
Further, it exploits the fact that inputs
do not increase the attacker's knowledge, and can thus be executed
in a chained fashion, under focus.
The semi-commutation is reminiscent of the asymmetrical dependency
analysis enabled by the \emph{conditional} stubborn set
technique~\cite{godefroid1996partial},
and the execution of inputs under focus may be explained
by means of sleep sets.
While it may be possible to formally derive our compressed semantics
by instantiating abstract POR techniques to our setting,
we have not explored this possibility in detail\footnote{
  Although this would be an interesting question, we do not expect
  that any improvement of compression would come out of it.
  Indeed, compression can be argued to be maximal in terms of
  eliminating redundant traces without analysing data:
  for any compressed trace there is a way to choose messages
  and modify tests to obtain a concrete execution
  which does not belong to the equivalence class
  of any other compressed trace.
}.
Concerning
our reduced semantics, it may be seen as an application of the sleep set
technique (or even as a reformulation of Anisimov's and Knuth's
characterization of lexicographic normal forms) but the real contribution
with this technique is to have formulated it in such a way 
(see Definition~\ref{def:alldep}) that it can
be implemented without requiring an \emph{a priori} knowledge of
data dependencies: it allows us to eliminate redundant traces on-the-fly
as data (in)dependency is discovered by the constraint resolution procedure
(as explained in Section~\ref{subsec:apte-dependency})---in this sense, it may be viewed as a case of dynamic POR.

Narrowing the discussion a bit more, we now focus on the fact that
our techniques are designed for the verification of equivalence properties.
This requirement turns several seemingly trivial observations into
subtle technical problems. For instance,
ideas akin to compression are often applied without justification
(\eg in~\cite{sen2006automated,tasharofi2012transdpor,ModersheimVB10})
because they are obvious when one does
reachability rather than equivalence checking.
To understand this,
it is important to distinguish between two very
different ways of applying POR to equivalence checking
(independently of the precise equivalence under consideration).
The first approach is to reduce a system such that the reduced system
and the original systems are equivalent.
In the second approach, one only requires that two reduced
systems are equivalent iff the original systems are equivalent.
The first approach seems to be more common in the POR 
\stef{literature} (where one finds, \eg reductions that preserve
LTL-satisfiability~\cite{ModelCheckingBook}
or bisimilarity~\cite{huhn98fsttcs})
though there are instances of the second approach
(\eg for Petri nets~\cite{godefroid1991using}).
In the present work, we follow the second approach:
neither of our two reduction techniques preserves trace equivalence.
This allows stronger reductions but requires extra care:
one has to ensure that the independencies used in the reduction of one
process are also meaningful for the other processes; in other words,
reduction has to be symmetrical.
We come back to these two different approaches later, when discussing
specific POR techniques for security.

\subsection{Security applications}

The idea of applying POR to the verification of security protocols dates
back, at least, to the work of
Clarke \emph{et al.}~\cite{clarke2000partial,ClarkeJM03}.
In this work, the authors remark that traditional POR techniques
cannot be directly applied to security mainly because
``[they] must keep track of knowledge of various agents''
and ``[their] logic can refer to this knowledge in a meaningful way''.
This led them to define a notion of {\em semi-invisible actions}
(output actions, that cannot be swapped after inputs but only before them)
and design a reduction that prioritizes outputs
and performs them in a fixed order.
Compared to our work, this reduction is much weaker
(even weaker than compression only),
only handles a finite set of messages, and
only focuses on reachability properties checking.

In \cite{escobar14ic}, the authors develop ``state space reduction''
techniques for the Maude-NRL  Protocol Analyzer (Maude-NPA). This tool proceeds
by backwards reachability analysis and treats at the same level
the exploration of protocol executions and attacker's deductions.
Several reductions techniques are specific to this setting,
and most are unrelated to partial order reduction in general,
and to our work in particular.
We note that the lazy intruder techniques from \cite{escobar14ic}
should be compared to what is done in constraint resolution procedures
(\eg the one used in \apte) rather than to our work.
A simple POR technique used in Maude-NPA is based on the observation
that inputs can be executed in priority in the backwards exploration,
which corresponds to the fact that we can execute outputs first in
forward explorations. We note again that this is only one aspect
of the focused strategy, and that it is not trivial to lift this observation
from reachability to trace equivalence.
Finally, a ``transition subsumption'' technique is described for Maude-NPA.
While highly non-trivial due to the technicalities of the model,
this is essentially a tabling technique rather than a partial order
reduction. Though it does yield a significant state space reduction
(as shown in the experiments \cite{escobar14ic}) it falls short of
exploiting independencies fully, and has a potentially high computational
cost (which is not evaluated in the benchmarks of 
\cite{escobar14ic}).

In \cite{fokkink2010partial},
Fokkink {\it et al}.~model security protocols as labeled transition
systems whose states contain the control points of different agents
as well as previously outputted messages.
They devise some POR technique for these transition systems,
where output actions are prioritized and performed in a fixed order.
In their work, the original and reduced systems
are trace equivalent {\em modulo} outputs
(the same traces can be found after removing output actions).
The justification for their reduction would fail in our setting,
where we consider standard trace equivalence with observable outputs.
More importantly, their requirement that a reduced system should be
equivalent to the original one makes it impossible to swap input actions,
and thus reductions such as the execution under focus of our
compressed semantics cannot be used.
The authors leave as future work the problem of combining
their algorithm with symbolic executions, in order to be able
to lift the restriction to a finite number of messages.

Cremers and Mauw proposed~\cite{cremers2005checking} a reduction technique
for checking secrecy in security protocols. Their method allows to perform
outputs eagerly, as in our compressed semantics. It also uses a form of
\emph{sleep set} technique to avoid redundant interleavings of input
actions.
%
%
In addition to being applicable only for reachability property,
the algorithm of \cite{cremers2005checking} works under the assumption that for each
input only finitely \stef{many} input messages need to be considered.
The authors identify as important future work the
need to lift their method to the symbolic setting.

\stef{Earlier work by Mödersheim {\it et al.}
has shown how to combine POR techniques with symbolic semantics~\cite{ModersheimVB10}
in the context of reachability properties for security protocols.
This has led to high efficiency gains in the OFMC tool of the
AVISPA platform~\cite{sysdesc-CAV05}.
While their reduction is very
limited, it brings some key insight on how POR may be combined
with symbolic execution.
For instance, their reduction imposes a dependency constraint
\david{(called \emph{differentiation} constraint in their work)}
on the interleavings of 
$$\{\In(c,x).\Out(c,m),\,\In(d,y).\Out(d,m')\}.$$ Assuming that priority
is given to the process working on channel $c$,
this constraint enforces that any symbolic interleaving of the form
${\In(d,M').\Out(d,w').\In(c,M).\Out(c,w)}$
would only be explored for instances of $M$ that depend on $w'$.
Our reduced semantics constrains patterns of arbitrary size (instead of
just size~2 diamond patterns as above) by means of dependency
constraints. Going back to Example~\ref{ex:taille-trois}, their technique will 
only be able (at most) to exploit the dependencies depicted in plain blue arrows,
and they will not consider the one represented by the dashed 2-arrow.
\david{
Moreover, while we generate dependency constraints on the fly, they
implement their technique by looking for such a pattern afterwards.
}
This causes \lucca{a tradeoff between reduction and the cost of redundancy 
detection}: \lucca{their technique fails} to detect all
patterns of this kind. 
\david{Besides these differences, we note that Mödersheim et al.\ use
dependency constraints to guide a dedicated constraint resolution procedure, 
while we chose to treat constraint resolution (almost) as a black box,
and leave it unchanged.}
\david{Finally, we recall that} our POR technique has \stef{been} designed to 
be sound and complete for trace equivalence checking as well \david{as reachability 
checking}.
}

Finally, in \cite{BDH-concur15}, the authors of the present paper
extend some of the results presented here. Instead of considering
the syntactic fragment of simple processes, we work under the
more general semantical assumption of \emph{action-determinism}.
We show that compression and reduction can be extended to that case,
preserving the main result: the induced equivalences coincide.
However, that work is completely carried out in concrete rather than
symbolic semantics. Thus, this development should be viewed as being
orthogonal to the one carried out in the present paper. The main
ideas behind the integration in symbolic semantics and \apte
would apply to the action-deterministic case as well.
The line of research followed in \cite{BDH-concur15},
that consists in extending the supported fragment
for our POR techniques, is still open: it would be interesting to
support processes that are not action-deterministic,
which are commonplace when analysing anonymity or unlinkability
scenarios.

\section{Conclusion}
\label{sec:conclu}

We have developed two POR techniques that are adequate for verifying
trace equivalence properties between simple
processes. 
The first refinement groups actions in blocks, while
the second one uses dependency constraints to restrict to minimal 
interleavings among a class of permutations. In both cases, the
refined semantics has less traces, yet we show that the associated
trace equivalence coincides with the standard one. 
We have effectively implemented these refinements in \apte,
and shown that they yield the expected, significant benefit.

We claim that our POR techniques – at least compression – and the significant optimisations
they allow are generic enough to be applicable to other verification methods as long as they
perform forward symbolic executions. In addition to the integration in Apte we have extensively
discussed, we also have successfully done so in Spec~\cite{SPEC-red}. Furthermore, parts of our POR techniques
have been independently integrated and implemented in the distributed version of
\akiss\footnote{See \url{https://github.com/akiss/akiss}.}.

We are considering several directions for future work.
Regarding the theoretical results presented here, it is actually possible to
slightly relax the syntactic condition we imposed on processes by an
action-determinism hypothesis and apply our reduction techniques on
replicated processes~\cite{BDH-concur15}. The question of
whether the action-determinism condition can be removed without degrading the
reductions too much is left open.
Another interesting direction would be to adapt our techniques for verification
methods based on {\em backward \stef{search}} instead of {\em forward \stef{search}} as 
is the case in this paper.
We also believe that stronger reductions can be achieved:
for instance, exploiting symmetries should be very useful for
dealing with multiple sessions.
Regarding the practical application of our results, we can certainly
go further. We could investigate the role of the particular choice of
the order $\prec$, to determine heuristics for maximising the
practical impact of reduction.

\paragraph*{Acknowledgements}
We would like to thank Vincent Cheval for interesting discussions and
comments, especially on Section~\ref{sec:apte}.

\bibliographystyle{abbrv}
\bibliography{reference}

\begin{thebibliography}{10}

\bibitem{AbadiFournet2001}
M.~Abadi and C.~Fournet.
\newblock Mobile values, new names, and secure communication.
\newblock In {\em Proc.\ 28th Symposium on Principles of Programming Languages
  (POPL'01)}, pages 104--115. ACM Press, 2001.

\bibitem{AbadiF04}
M.~Abadi and C.~Fournet.
\newblock Private authentication.
\newblock {\em Theoretical Computer Science}, 322(3):427--476, 2004.

\bibitem{abdulla2014optimal}
P.~Abdulla, S.~Aronis, B.~Jonsson, and K.~Sagonas.
\newblock Optimal dynamic partial order reduction.
\newblock {\em ACM SIGPLAN Notices}, 49(1):373--384, 2014.

\bibitem{Andreoli92}
J.-M. Andreoli.
\newblock Logic programming with focusing proofs in linear logic.
\newblock {\em J. Log. Comput.}, 2(3), 1992.

\bibitem{arapinis-csf10}
M.~Arapinis, T.~Chothia, E.~Ritter, and M.~Ryan.
\newblock Analysing unlinkability and anonymity using the applied pi calculus.
\newblock In {\em Proc.\ 23rd Computer Security Foundations Symposium
  (CSF'10)}, pages 107--121. IEEE Comp. Soc. Press, 2010.

\bibitem{BreakingGoogle-FMSE2008}
A.~Armando, R.~Carbone, L.~Compagna, J.~Cuellar, and L.~T. Abad.
\newblock Formal analysis of saml 2.0 web browser single sign-on: Breaking the
  saml-based single sign-on for google apps.
\newblock In {\em Proc.\ 6th ACM Workshop on Formal Methods in Security
  Engineering (FMSE 2008)}, pages 1--10, 2008.

\bibitem{sysdesc-CAV05}
A.~Armando et~al.
\newblock {The AVISPA Tool for the automated validation of internet security
  protocols and applications}.
\newblock In {\em {Proc.\ 17th Int. Conference on Computer Aided Verification
  (CAV'05)}}, LNCS. Springer, 2005.

\bibitem{avantssar-2012}
A.~Armando et~al.
\newblock The {AVANTSSAR} platform for the automated validation of trust and
  security of service-oriented architectures.
\newblock In {\em Tools and Algorithms for the Construction and Analysis of
  Systems - 18th International Conference, {TACAS} 2012, Held as Part of the
  European Joint Conferences on Theory and Practice of Software, {ETAPS} 2012,
  Tallinn, Estonia, March 24 - April 1, 2012. Proceedings}, pages 267--282,
  2012.

\bibitem{BDH-post14}
D.~Baelde, S.~Delaune, and L.~Hirschi.
\newblock A reduced semantics for deciding trace equivalence using constraint
  systems.
\newblock In {\em Proc.\ of {POST}'14}. Springer, 2014.

\bibitem{BDH-concur15}
D.~Baelde, S.~Delaune, and L.~Hirschi.
\newblock Partial order reduction for security protocols.
\newblock In L.~Aceto and D.~de~Frutos-Escrig, editors, {\em {P}roceedings of
  the 26th {I}nternational {C}onference on {C}oncurrency {T}heory
  ({CONCUR}'15)}, volume~42 of {\em Leibniz International Proceedings in
  Informatics}, pages 497--510, Madrid, Spain, Sept. 2015. Leibniz-Zentrum
  f{\"u}r Informatik.

\bibitem{ModelCheckingBook}
C.~Baier and J.-P. Katoen.
\newblock {\em Principles of Model Checking (Representation and Mind Series)}.
\newblock The MIT Press, 2008.

\bibitem{baudet-ccs2005}
M.~Baudet.
\newblock Deciding security of protocols against off-line guessing attacks.
\newblock In {\em Proc.\ 12th {C}onference on {C}omputer and {C}ommunications
  {S}ecurity}. ACM, 2005.

\bibitem{BlanchetCSFW01}
B.~Blanchet.
\newblock An {E}fficient {C}ryptographic {P}rotocol {V}erifier {B}ased on
  {P}rolog {R}ules.
\newblock In {\em Proc.\ 14th Computer Security Foundations Workshop
  (CSFW'01)}, pages 82--96. {IEEE} Comp. Soc. Press, 2001.

\bibitem{BlanchetAbadiFournetJLAP08}
B.~Blanchet, M.~Abadi, and C.~Fournet.
\newblock Automated verification of selected equivalences for security
  protocols.
\newblock {\em Journal of Logic and Algebraic Programming}, 2008.

\bibitem{CCK-esop12}
R.~Chadha, {\c{S}}.~Ciob{\^a}c{\u{a}}, and S.~Kremer.
\newblock Automated verification of equivalence properties of cryptographic
  protocols.
\newblock In {\em Proc.\ 21th {E}uropean {S}ymposium on {P}rogramming
  {L}anguages and {S}ystems ({ESOP}'12)}, LNCS. Springer, 2012.

\bibitem{chadha2012automated}
R.~Chadha, {\c{S}}.~Ciob{\^a}c{\u{a}}, and S.~Kremer.
\newblock Automated verification of equivalence properties of cryptographic
  protocols.
\newblock In {\em Programming Languages and Systems}, pages 108--127. Springer,
  2012.

\bibitem{APTE}
V.~Cheval.
\newblock {APTE}: \url{http://projects.lsv.ens-cachan.fr/APTE/}, 2011.

\bibitem{Cheval-phd2012}
V.~Cheval.
\newblock {\em Automatic verification of cryptographic protocols: privacy-type
  properties}.
\newblock Th{\`e}se de doctorat, Laboratoire Sp{\'e}cification et
  V{\'e}rification, ENS Cachan, France, Dec. 2012.

\bibitem{Cheval-tacas14}
V.~Cheval.
\newblock Apte: an algorithm for proving trace equivalence.
\newblock In {\em {P}roc. {TACAS}'14}, 2014.

\bibitem{cheval-ccs2011}
V.~Cheval, H.~Comon{-}Lundh, and S.~Delaune.
\newblock Trace equivalence decision: Negative tests and non-determinism.
\newblock In {\em {P}roc.\ 18th {C}onference on {C}omputer and {C}ommunications
  {S}ecurity ({CCS}'11)}. ACM Press, 2011.

\bibitem{CCD-ic16}
V.~Cheval, H.~Comon-Lundh, and S.~Delaune.
\newblock A procedure for deciding symbolic equivalence between sets of
  constraint systems.
\newblock {\em Information and Computation}, 2016.
\newblock To appear.

\bibitem{CCD-tcs13}
V.~Cheval, V.~Cortier, and S.~Delaune.
\newblock Deciding equivalence-based properties using constraint solving.
\newblock {\em Theoretical Computer Science}, 492:1--39, June 2013.

\bibitem{apte-github}
V.~Cheval and L.~Hirschi.
\newblock sources of {APTE}, 2015.
\newblock \url{https://github.com/APTE/APTE}.

\bibitem{chevalier10}
Y.~Chevalier and M.~Rusinowitch.
\newblock Decidability of symbolic equivalence of derivations.
\newblock {\em Journal of Automated Reasoning}, 48(2), 2012.

\bibitem{clarke2000partial}
E.~Clarke, S.~Jha, and W.~Marrero.
\newblock Partial order reductions for security protocol verification.
\newblock In {\em Tools and Algorithms for the Construction and Analysis of
  Systems}, pages 503--518. Springer, 2000.

\bibitem{ClarkeJM03}
E.~M. Clarke, S.~Jha, and W.~R. Marrero.
\newblock Efficient verification of security protocols using partial-order
  reductions.
\newblock {\em International Journal on Software Tools for Technology
  Transfer}, 4(2):173--188, 2003.

\bibitem{JCS2012-Ben}
V.~Cortier and B.~Smyth.
\newblock Attacking and fixing {H}elios: An analysis of ballot secrecy.
\newblock {\em Journal of Computer Security}, 21(1):89--148, 2013.

\bibitem{Cr2008Scyther}
C.~J.~F. Cremers.
\newblock The {S}cyther {T}ool: Verification, falsification, and analysis of
  security protocols.
\newblock In {\em Proc. \ 20th International Conference on Computer Aided
  Verification ({CAV}'08)}, LNCS. Springer, 2008.

\bibitem{cremers2005checking}
C.~J.~F. Cremers and S.~Mauw.
\newblock Checking secrecy by means of partial order reduction.
\newblock In {\em System Analysis and Modeling}. Springer, 2005.

\bibitem{escobar14ic}
S.~Escobar, C.~Meadows, J.~Meseguer, and S.~Santiago.
\newblock State space reduction in the maude-nrl protocol analyzer.
\newblock {\em Inf. Comput.}, 238:157--186, 2014.

\bibitem{flanagan2005dynamic}
C.~Flanagan and P.~Godefroid.
\newblock Dynamic partial-order reduction for model checking software.
\newblock In {\em ACM Sigplan Notices}, volume~40, pages 110--121. ACM, 2005.

\bibitem{fokkink2010partial}
W.~Fokkink, M.~T. Dashti, and A.~Wijs.
\newblock Partial order reduction for branching security protocols.
\newblock In {\em Proceedings of ACSD'10}. IEEE, 2010.

\bibitem{godefroid1991using}
P.~Godefroid.
\newblock Using partial orders to improve automatic verification methods.
\newblock In {\em Computer-Aided Verification}, pages 176--185. Springer Berlin
  Heidelberg, 1991.

\bibitem{godefroid1996partial}
P.~Godefroid.
\newblock {\em Partial-Order Methods for the Verification of Concurrent Systems
  - An Approach to the State-Explosion Problem}, volume 1032 of {\em Lecture
  Notes in Computer Science}.
\newblock Springer, 1996.

\bibitem{SPEC-red}
L.~Hirschi.
\newblock {SPEC} with dependency constraints.
\newblock \url{http://www.lsv.ens-cachan.fr/~hirschi/spec.php}.
\newblock Accessed: 2017-04-04.

\bibitem{DH-survey16}
L.~Hirschi and S.~Delaune.
\newblock A survey of symbolic methods for establishing equivalence-based
  properties in cryptographic protocols.
\newblock {\em Journal of Logical and Algebraic Methods in Programming}, 2016.
\newblock To appear.

\bibitem{huhn98fsttcs}
M.~Huhn, P.~Niebert, and H.~Wehrheim.
\newblock Partial order reductions for bisimulation checking.
\newblock In V.~Arvind and R.~Ramanujam, editors, {\em Foundations of Software
  Technology and Theoretical Computer Science, 18th Conference, Chennai, India,
  December 17-19, 1998, Proceedings}, volume 1530 of {\em Lecture Notes in
  Computer Science}, pages 271--282. Springer, 1998.

\bibitem{meier2013tamarin}
S.~Meier, B.~Schmidt, C.~J.~F. Cremers, and D.~Basin.
\newblock The tamarin prover for the symbolic analysis of security protocols.
\newblock In {\em Proc. International Conference on Computer Aided Verification
  (CAV'13)}, pages 696--701. Springer, 2013.

\bibitem{MS02}
J.~Millen and V.~Shmatikov.
\newblock Constraint solving for bounded-process cryptographic protocol
  analysis.
\newblock In {\em Proc.\ 8th {ACM} Conference on Computer and Communications
  Security ({CCS}'01)}. ACM Press, 2001.

\bibitem{ModersheimVB10}
S.~M{\"o}dersheim, L.~Vigan{\`o}, and D.~A. Basin.
\newblock Constraint differentiation: Search-space reduction for the
  constraint-based analysis of security protocols.
\newblock {\em Journal of Computer Security}, 18(4):575--618, 2010.

\bibitem{Peled98}
D.~Peled.
\newblock Ten years of partial order reduction.
\newblock In {\em Proc. 10th International Conference on Computer Aided
  Verification, {CAV}'98}, volume 1427 of {\em Lecture Notes in Computer
  Science}. Springer, 1998.

\bibitem{RT01}
M.~Rusinowitch and M.~Turuani.
\newblock Protocol insecurity with finite number of sessions is {NP}-complete.
\newblock In {\em Proc.\ 14th Computer Security Foundations Workshop
  (CSFW'01)}, pages 174--190. IEEE Comp. Soc. Press, 2001.

\bibitem{santiago2014formal}
S.~Santiago, S.~Escobar, C.~Meadows, and J.~Meseguer.
\newblock A formal definition of protocol indistinguishability and its
  verification using {Maude-NPA}.
\newblock In {\em Security and Trust Management}, pages 162--177. Springer,
  2014.

\bibitem{sen2006automated}
K.~Sen and G.~Agha.
\newblock Automated systematic testing of open distributed programs.
\newblock In {\em Fundamental Approaches to Software Engineering}, pages
  339--356. Springer, 2006.

\bibitem{tasharofi2012transdpor}
S.~Tasharofi, R.~K. Karmani, S.~Lauterburg, A.~Legay, D.~Marinov, and G.~Agha.
\newblock Transdpor: A novel dynamic partial-order reduction technique for
  testing actor programs.
\newblock In {\em Formal Techniques for Distributed Systems}, pages 219--234.
  Springer, 2012.

\bibitem{spec}
A.~Tiu.
\newblock Spec: \url{http://users.cecs.anu.edu.au/~tiu/spec/}, 2010.

\bibitem{Tiu-csf10}
A.~Tiu and J.~E. Dawson.
\newblock Automating open bisimulation checking for the spi calculus.
\newblock In {\em Proc.\ 23rd IEEE Computer Security Foundations Symposium
  ({CSF'10})}, pages 307--321. IEEE Computer Society Press, 2010.

\end{thebibliography}

\newpage
\appendix

\section{Notations}
\label{sec:not}

\begin{center}
\begin{tabular}{|c|l|l|}
\hline
Symbol & Description & Reference
\\
\hline\hline
$\sint{}$ &
transition for concrete processes &
\hyperref[fig:concrete-semantics]{Figure~\ref{fig:concrete-semantics}}
\\
$\sintw{}$ &
$\sint{}$ up-to non-observable actions &
\hyperref[subsec:semantics]{Section~\ref{subsec:semantics}}
\\
$\statequiv$ &
static equivalence &
\hyperref[def:statequiv]{Definition~\ref{def:statequiv}}
\\
$\inceint$ &
trace inclusion for concrete processes &
\hyperref[def:concrete-equivalence]{Definition~\ref{def:concrete-equivalence}}
\\
$\eint$ &
trace equivalence for concrete processes &
\hyperref[def:concrete-equivalence]{Definition~\ref{def:concrete-equivalence}}
\\
\hline
$\sintf{}{}$ &
focused semantics &
\hyperref[fig:sintf]{Figure~\ref{fig:sintf}}
\\
$\sintc{}$ &
compressed semantics &
\hyperref[fig:sintc]{Figure~\ref{fig:sintc}}
 \\
$\inceintc$ &
trace inclusion induced by $\sintc{}$  &
\hyperref[def:compressed-equivalence]{Definition~\ref{def:compressed-equivalence}}
\\
$\eintc$ &
trace equivalence induced by $\sintc{}$ &
\hyperref[def:compressed-equivalence]{Definition~\ref{def:compressed-equivalence}}
\\
\hline\hline
$\ssym{}$ &
symbolic semantics &
\hyperref[figure:symbolic-sem]{Figure~\ref{figure:symbolic-sem}}
\\
$\ssymf{}{}$ &
focused symbolic semantics &
\hyperref[figure:compressed-symbolic-sem]{Figure~\ref{figure:compressed-symbolic-sem}}
\\
$\ssymc{}$ &
compressed symbolic semantics &
\hyperref[figure:compressed-symbolic-sem]{Figure~\ref{figure:compressed-symbolic-sem}}
\\
$\incesym$ &
trace inclusion induced by $\ssym{}$ &
\hyperref[def:equiv-symb]{Definition~\ref{def:equiv-symb}}
\\
$\incesymc$ &
trace inclusion induced by $\ssymc{}$ &
\hyperref[def:equiv-comp-symb]{Definition~\ref{def:equiv-comp-symb}}
\\
$\esym$ &
trace equivalence induced by $\ssym{}$ &
\hyperref[def:equiv-symb]{Definition~\ref{def:equiv-symb}}
\\
$\esymc$ &
trace equivalence induced by $\ssymc{}$ &
\hyperref[def:equiv-comp-symb]{Definition~\ref{def:equiv-comp-symb}}
\\
\hline
$\constrd{\vect X}{\vect w}$ &
dependency constraint &
\hyperref[def:dep-constr]{Definition~\ref{def:dep-constr}}
\\
$\AllDep{\tr}$ &
dependency constraints induced by a trace &
\hyperref[def:alldep]{Definition~\ref{def:alldep}}
\\
$\incesymdff$ &
trace inclusion up-to dependency constraints &
\hyperref[def:reduced-equivalence]{Definition~\ref{def:reduced-equivalence}}
\\
$\esymdff$ &
trace equivalence up-to dependency constraints  &
\hyperref[def:reduced-equivalence]{Definition~\ref{def:reduced-equivalence}}
\\
$\mid\mid$ &
independence of blocks &
\hyperref[def:independence]{Definition~\ref{def:independence}}
\\
$\equiv_\Phi$ &
equivalence of two traces &
\hyperref[def:equiv-phi]{Definition~\ref{def:equiv-phi}}
\\
\hline\hline
$\extof{\triple{\p}{\Phi}{\Set}}$ & associated extented symbolic
                         process
&
\hyperref[subsec:nutshell]{Section~\ref{subsec:nutshell}}
\\
$\ofext{\quadr{\p}{\Phi}{\Set}{\Set^+}}$ & associated symbolic process
&
\hyperref[subsec:nutshell]{Section~\ref{subsec:nutshell}}
\\
$\symbinclset$ &
symbolic inclusion of sets of extended symbolic processes &
\hyperref[def:symbeqset]{Definition~\ref{def:symbeqset}}
\\
$\symbeqset$ &
symbolic equivalence of sets of extended symbolic processes &
\hyperref[def:symbeqset]{Definition~\ref{def:symbeqset}}
\\
$\ssyma{}$ &
\apte exploration step &
\hyperref[subsec:nutshell]{Section~\ref{subsec:nutshell}}
\\
$\ssymaone{}$ &
first part of \apte exploration step &
\hyperref[subsec:nutshell]{Section~\ref{subsec:nutshell}}
\\
$\ssymatwo{}$ &
second part of \apte exploration step &
\hyperref[subsec:nutshell]{Section~\ref{subsec:nutshell}}
\\
$\ssymac{}$ &
compressed version of $\ssyma{}$ &
\hyperref[subsec:apte-compression]{Section~\ref{subsec:apte-compression}}
\\
$\ssymad{}$ &
reduced version of $\ssymac{}$ &
\hyperref[subsec:apte-dependency]{Section~\ref{subsec:apte-dependency}}
\\
$\eqapte$ &
equivalence induced by $\ssyma{}$ &
\hyperref[def:eqapte]{Definition~\ref{def:eqapte}}
\\
$\eqaptec$ &
equivalence induced by $\ssymac{}$ &
\hyperref[def:subsec:apte-compression]{Section~\ref{subsec:apte-compression}}
\\
$\eqapted$ &
equivalence induced by $\ssymad{}$ &
\hyperref[subsec:apte-dependency]{Section~\ref{subsec:apte-dependency}}
\\
\hline
\end{tabular}
\end{center}

\section{Proofs of Section~\ref{sec:compression}}
\label{app:compression}

\renewcommand{\LRstep}[1]{\xRightarrow{#1}}


\lempermute*
\begin{proof}
  It suffices to establish that
  $A \LRstep{\alpha\cdot\alpha'\;} A'$ implies
  $A \LRstep{\alpha'\cdot\alpha\;} A'$
  for any $\alpha\;\I_{a}\;\alpha'$.
\begin{itemize}
\item
  \newcommand{\acti}{\Out(c_i,w_i)}
  \newcommand{\actj}{\Out(c_j,w_j)}
  Assume that we have
  $A \LRstep{\acti} A_i \LRstep{\actj} A'$ with $c_i \neq c_j$.
  Because we are considering simple processes,
  the two actions must be concurrent.
  More specifically, our process $A$ must be of the form
  $\proc{\{P_i, P_j\}\uplus\p_r}{\Phi}$ with $P_i$ (resp.~$P_j$)
  being a basic process on channel $c_i$ (resp.~$c_j$).
  We assume that in our sequence of reductions, $\tau$ actions
  pertaining to $P_i$ are all executed before reaching $A_i$,
  and that $\tau$ actions pertaining to $\p_r$ are executed last.
  This is without loss of generality, because a $\tau$ action
  on a given basic process can easily be permuted with actions
  taking place on another basic process, since it does not depend
  on the context and has no effect on the frame.
  Thus we have that
  $A_i = \proc{\{P'_i,P_j\}\uplus\p_r}{\Phi\uplus\{w_i\refer m_i\}}$,
  $A'  = \proc{\{P'_i;P'_j\}\uplus\p'_r}{\Phi\uplus\{w_i\refer m_i, w_j\refer 
  m_j\}}$. Since the $\tau$ actions taking place on $\p'_r$ rely neither
  on the frame nor on the first two basic processes, we easily obtain
  the permuted execution:
  \[ \begin{array}{rcl}
    A
    & \LRstep{\actj} &
    \proc{\{P_i,P'_j\}\uplus\p_r}{\Phi\uplus\{w_j\refer m_j\}} \\
    & \LRstep{\acti} &
    \proc{\{P'_i,P'_j\}\uplus\p'_r}{\Phi\uplus\{w_i\refer m_i,w_j \refer 
    m_j\}}
  \end{array} \]

\item
  The permutation of two input actions on distinct channels is very
  similar. In this case, the frame does not change at all, and the
  order in which messages are derived from the frame does not matter.
  Moreover, the instantiation of the input variable on one basic
  process has no impact on the other ones.

\item
  \renewcommand{\acti}{\Out(c_i,w_i)}
  \renewcommand{\actj}{\In(c_j,M)}
  Assume that we have
  $A \LRstep{\acti} A_i \LRstep{\actj} A'$ with $c_i \neq c_j$ and 
  $w_i\not\in \fv(M)$.
  Again, the two actions are concurrent, and we can assume that
  $\tau$ actions are organized conveniently so that
  $A$ is of the form $\proc{\{P_i,P_j\}\uplus\p_r}{\Phi}$
  with $P_i$ (resp. $P_j$) a basic process on $c_i$ (resp. $c_j$);
  $A_i$ is of the form $\proc{\{P'_i,P_j\}\uplus\p_r}{\Phi\uplus\{w_i\refer 
  m_i\}}$;
  and $A'$ is of the form
  $\proc{\{P'_i,P'_j\}\uplus\p'_r}{\Phi\uplus\{w_i\refer m_i\}}$.
  As before, the $\tau$ actions from $\p_r$ to $\p'_r$ are easily moved 
  around. \stef{Additionally}, $w_i \not\in \fv(M)$ implies
  $\fv(M)\subseteq\dom(\Phi)$ and thus we have:
  \[
    \proc{\{P_i,P_j\}\uplus\p_r}{\Phi}
    \LRstep{\actj} \proc{\{P_i,P'_j\}\uplus\p_r}{\Phi} \]
  The next step is trivial:
  \[ \proc{\{P_i,P'_j\}\uplus\p_r}{\Phi}
     \LRstep{\acti}
   \proc{\{P'_i,P'_j\}\uplus\p'_r}{\Phi\uplus\{w_i\refer m_i\}} \]

\item
  We also have to perform the reverse permutation, but we shall
  not detail it; this time we are delaying the derivation of $M$ from the 
  frame, and it only gets easier.\qedhere
\end{itemize}
\end{proof}



\procompcompleteness*

\begin{proof}
  We first observe that
  $A \LRstep{\tr} A'$ implies $A \sintf{\tr}{o^*} A'$
  if $A'$ is initial and $\tr$ is a (possibly empty)
  sequence of output actions on the same channel.
  We prove this by induction on the sequence of actions.
  If it is empty, we can conclude using
  one of the \textsc{Proper} rules because $A = A'$ is initial.
  Otherwise, we have:
  $$A \LRstep{\Out(c,w)} A'' \LRstep{\tr\;} A'.$$
  We obtain $A'' \sintf{\tr}{o^*} A'$ by induction hypothesis,
  and conclude using rules \textsc{Tau} and \textsc{Out}.

  The next step is to show that
  $A \LRstep{\tr\;} A'$ implies $A \sintf{\tr}{i^{*}} A'$,
  if $A'$ is initial and $\tr$ is the
  concatenation of a (possibly empty) sequence of inputs
  and a non-empty sequence of outputs,
  all on the same channel.
  This is easily shown by induction on the number of input actions.
  If there \stef{are} none we use the previous result, otherwise we conclude
  by induction hypothesis and using rules \textsc{Tau} and \textsc{In}.
  Otherwise, the first output action allows us to conclude from the
  previous result and rules \textsc{Tau} and \textsc{Out}.

  We can now show that
  $A \LRstep{\tr\;} A'$ implies $A \sintf{\tr}{i^{+}} A'$
  if $A'$ is initial and $\tr$ is a proper block.
  Indeed, we must have $$A \LRstep{\In(c,M)} A'' \LRstep{\tr'\;} A'$$
  which allows us to conclude using the previous result and rules
  \textsc{Tau} and \textsc{In}.

  We finally obtain that $A \LRstep{\tr} A'$ implies
  $A \sintc{\tr} A'$ when $A$ and $A'$ are initial \emph{simple} processes
  and $\tr$ is a sequence of proper blocks. This is done by induction
  on the number of blocks. The base case is trivial.
  Because $A$ is initial, the execution of its basic processes can only
  start with observable actions, thus only one basic process is involved in
  the execution of the first block.
  Moreover, we can assume without loss of generality that the execution of
  this first block results in another initial process: indeed the basic process
  resulting from that execution is either in the final process $A'$, which is
  initial, or it will perform another block, \ie it can perform
  $\tau$ actions followed by an input, in which case we can force
  those $\tau$ actions to take place as early as possible. Thus we have
  \[ A \LRstep{\mathsf{b}} A'' \LRstep{\tr'} A' \]
  where $\mathsf{b}$ is a proper block, and we conclude using the previous
  result and the induction hypothesis.
\end{proof}

\stef{Before proving Theorem~\ref{theo:comp-soundness-completeness}, we establish the following result.}

\begin{proposition} \label{prop:focus-trace}
  Let $\tr$ be a trace of observable actions such that,
  for any channel $c$ occurring in the trace, it appears first
  in an input action.
  There exists a sequence of proper blocks $\tr_{io}$ and
  a sequence of improper blocks
  $\tr_i$
  such that $\tr =_{\I_a} \tr_{io}\cdot\tr_i$.
\end{proposition}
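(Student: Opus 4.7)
Plan: My approach is via Mazurkiewicz trace theory. Associate to $\tr$ the dependency partial order $\prec$ on its events, defined by $e_1 \prec e_2$ iff $e_1$ precedes $e_2$ in $\tr$ and they are not $\I_a$-independent. The $=_{\I_a}$-equivalence class of $\tr$ corresponds exactly to the linear extensions of $\prec$, so it suffices to exhibit one of the required shape.

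By the hypothesis that every channel appearing in $\tr$ appears first in an input action, the subsequence of $\tr$ restricted to each channel uniquely decomposes into a sequence of consecutive blocks: proper blocks of shape $(\In)^{+}(\Out)^{+}$, possibly followed by a single final improper block of shape $(\In)^{+}$. Consider the relation $<_{b}$ on those blocks defined by $b <_{b} b'$ iff some event of $b$ is $\prec$-below some event of $b'$.

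The two key claims are: (i) $<_{b}$ is acyclic; and (ii) no improper block is a $<_{b}$-predecessor of any other block. For (i), blocks on the same channel are already totally ordered by their positions in $\tr$, so any hypothetical cycle must involve distinct blocks $b,b'$ on distinct channels; cross-channel dependencies can only arise between an output and a later input whose recipe mentions the output's handle, so combining two such dependencies in opposite directions with the within-block ordering (all inputs of a proper block precede all its outputs) yields a temporal contradiction. This step implicitly uses plausibility of $\tr$ (recipes reference only previously produced handles), which holds in the intended applications where $\tr$ arises from an actual execution. For (ii), improper blocks have no outputs (so they cannot source any cross-channel dependency) and are last on their channel (so no same-channel block comes after them), making them sinks in $<_{b}$.

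With (i) and (ii) in hand, extract any topological sort of $<_{b}$ in which all proper blocks precede all improper blocks; expanding each block into its internal sequence of events (which agrees with $\prec$ restricted to it) produces a total order on events extending $\prec$, hence a trace $\tr' =_{\I_a} \tr$ of the required form $\tr_{io} \cdot \tr_i$. The main obstacle is claim (i), which hinges on plausibility of $\tr$: without it, $\In(c,w_d)\cdot\In(d,w_c)\cdot\Out(c,w_c)\cdot\Out(d,w_d)$ satisfies the hypothesis yet admits no block-contiguous reordering, so the argument must carefully leverage executability to rule out such pathological cross-dependencies.
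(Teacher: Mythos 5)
Your proof is correct and takes a genuinely different route from the paper's. The paper argues by induction on the length of $\tr$, explicitly peeling off one proper block at a time: it locates the first output, delays the pending inputs on other channels past it, pulls the remaining outputs of that block forward, and recurses; the final input-only residue is regrouped into improper blocks by input/input swaps. You instead work globally: you invoke the Mazurkiewicz characterization of the $=_{\I_a}$-class as the set of linearizations of the dependency order, partition the events into per-channel blocks, and topologically sort the induced block order $<_b$ with the improper blocks (which are sinks) placed last. Your acyclicity argument is phrased only for $2$-cycles, but it extends to arbitrary cycles by noting that the position in $\tr$ of the \emph{last input} of a block strictly increases along every $<_b$-edge, whether same-channel or cross-channel; making that potential function explicit would tighten the step. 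What your approach buys is transparency about where executability enters: as you correctly observe, the statement is false for arbitrary action sequences (your four-action counterexample), and the paper's swaps are likewise only licensed by plausibility --- the inputs being delayed past $\Out(c,w)$ are independent of it only because their recipes cannot mention the not-yet-produced handle $w$. Since the proposition is only ever applied to traces arising from actual executions (in the proof of Theorem~\ref{theo:comp-soundness-completeness}), this implicit hypothesis is harmless, but your proof has the merit of isolating it. The paper's induction, on the other hand, is more elementary and self-contained, using only the congruence property of $=_{\I_a}$ rather than the full linearization theorem.
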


\begin{proof}
  We proceed by induction on the length of $\tr$, and distinguish
  two cases:
\begin{itemize}
\item If $\tr$ has no output action then, by swapping input actions
  on distinct channels,
  we reorder $\tr$ so as to obtain
  $\tr_i = \tr^{c_1}\cdot\ldots\cdot\tr^{c_n}=_{\I_a}\tr$
  where the $c_i$'s are pairwise distinct and
  $\tr^{c_i}$ is an improper block on channel $c_i$.
\item Otherwise,
  there must be a decomposition $\tr = \tr_1\cdot\out{c}{w}\cdot\tr_2$
  such that
  $\tr_1$ does not contain any output. We can perform swaps involving
  input actions of $\tr_1$ on all channel $c'\neq c$, so that they
  are delayed after the first output on $c$.
  We obtain $\tr =_{\I_a}
  \inp{c}{M_1}\cdot\ldots\cdot\inp{c}{M_n}\cdot\out{c}{w}
  \cdot \tr_1'\cdot \tr_2$
  with $n\geq 1$.
  Next,
  we swap output actions on channel $c$ from $\tr_1'\cdot\tr_2$ that
  are not preceded by another input on $c$, so as to obtain
  \[ \tr =_{\I_a}
  \inp{c}{M_1}\dots\inp{c}{M_n}\cdot\out{c}{w}
  \cdot\out{c}{w_1}\dots\out{c}{w_m}\cdot\tr_2' \] such that
  either $\tr_2'$ does not contain any action on channel $c$
  or the first one is an input action.
  We have thus isolated a first proper block,
  and we can conclude by induction hypothesis on $\tr_2'$.\qedhere
\end{itemize}
\end{proof}

Note that the above result does not exploit all the richness of $\I_a$.
In particular, it never relies on the possibility to swap an input action
before an output when the input message does not use the output handled.
Indeed, the idea behind compression does not rely on messages.
This is no longer the case in Section~\ref{sec:diff} where we use ${\I_a}$
more fully.

\smallskip


We finally prove the main result about the compressed semantics
\stef{relying on Proposition~\ref{prop:focus-trace} stated and proved above.}
Given two simple process $A = \proc{\p}{\Phi}$ and $A' =
\proc{\p'}{\Phi'}$, we shall write $\Phi(A) \statequiv \Phi(A')$ (or
even $A \statequiv A'$) instead of $\Phi
  \statequiv \Phi'$.

\theocompsoundnesscompleteness*

\begin{proof} We prove the two directions separately.

\noindent $(\Rightarrow)$ 
Let $A$ be an initial simple process such that $A \approx B$
and ${A \sintc{\tr} A'}$.
One can easily see that the trace $\tr$ must be of the form
$\tr_{io} \cdot \tr_i$ where $\tr_{io}$ is made of proper blocks
and $\tr_i$ is a (possibly empty) sequence of inputs
on the same channel $c_j$.
We have: $$A \sintc{\tr_{io}} A'' \sintc{\tr_i} A'$$

Using Proposition~\ref{pro:comp-soundness}, we obtain that
$A \LRstep{\tr_{io}} A''$.
We also claim that $A''\LRstep{\tr_i} A^{+}$
for some $A^{+}$ having the same frame as $A'$ which is itself equal to the one of $A''$.
This is obvious when $\tr_i$ is empty---in that case we can simply choose $A^{+} = A' = A''$.
Otherwise, the execution of the improper block $\tr_i$ results from the 
application of rule \textsc{Improper}. Except for the fact that this rule
``kills'' the resulting process, its subderivation simply packages a
sequence of inputs, and so we have a suitable $A^+$.
We thus have:
\[
A\LRstep{\tr_{io}\;}
A''\LRstep{\tr_i\;}
A^{+} \]

By hypothesis, it implies that
$B \LRstep{\tr_{io}\;} B''$
and $B\LRstep{\tr_{io}\cdot\tr_i\;} B^+$
with $A'' \statequiv B''$ and ${A^+ \statequiv B^+}$.
Relying on the fact that $B$ is a simple process, we have:
\[ B \LRstep{\tr_{io}} B'' \LRstep{\tr_i} B^+ \]

It remains to establish that $B \sintc{\tr} B'$
such that $B'\statequiv A'$.
We can assume that $B''$ does not have any basic process
starting with a test, without loss of generality since forcing
$\tau$ actions cannot break static equivalence.
Further, we observe that $B''$ is initial. Otherwise, it
would mean that a basic process of $B$ is not initial (absurd)
or that one of the blocks of $\tr_{io}$, which are maximal for
$A$, is not maximal for $B$ (absurd again, because it contradicts
$A \approx B$).
This allows us to apply Proposition~\ref{pro:comp-completeness}
to obtain $$B \sintc{\tr_{io}} B''.$$
This concludes when $\tr_i$ is empty, because $B' = B'' \statequiv A'' = A'$.
Otherwise, we note that $A^+$ cannot perform any action on channel $c_j$,
because the execution of $\tr_i$ in the compressed semantics must be maximal.
Since $A \approx B$, it
must be that $B^+$ cannot perform any observable action on the channel $c_j$
either. Thus $B''$ can complete an improper step:
\[ B''\sintc{\tr_i}B' \mbox{ where } B' = \proc{\varnothing}{\Phi(B^+)}. \]
We can finally conclude that
$B \sintc{\tr} B'$ with $\Phi(B') = \Phi(B^+) \statequiv \Phi(A^+) = \Phi(A')$.

\bigskip{}

\noindent $(\Leftarrow)$
Let $A$ be an initial simple process such that $A \eintc B$
and $A \LRstep{\tr} A'$. We ``complete'' this execution as follows:
\begin{itemize}
\item We force $\tau$ actions whenever possible.
\item If the last action on $c$ in $\tr$ is an input, we trigger
  available inputs on $c$ using a valid public constant as a
  recipe. 
\item We trigger all the outputs that are available and not blocked.
\end{itemize}

We obtain a trace of the form $\tr\cdot\tr^{+}$.
Let $A^{+}$ be the process obtained from this trace:
\[ A \LRstep{\tr} A' \LRstep{\tr^{+}} A^{+} \]
We observe that $A^{+}$ is initial. Indeed, for each basic process
that performs actions in $\tr\cdot\tr^{+}$, we have that: 
\begin{itemize}
\item either the last action on its
channel is an output and the basic process is of the form
$\In(c,\_).P$ or $\Out(c,u).P$ with $\neg \valid(u)$, 
\item or the last action is an input and the basic
process is reduced to $0$ and disappears, or it is an output which is
blocked.
\end{itemize}

Next, we apply Proposition~\ref{prop:focus-trace} to obtain traces
$\tr_{sio}$ (resp.~$\tr_i$) made of proper (resp.~improper) blocks,
such that $\tr\cdot\tr^{+} =_{\I_{a}} \tr_{io}\cdot\tr_i$.
By Lemma~\ref{lem:permute} we know that this permuted trace
can also lead to $A^{+}$:
\[
  A \LRstep{\tr_{io}} A_{io} \LRstep{\tr_i} A^{+}
\]
As before, we can assume that $A_{io}$ cannot perform any $\tau$
action. Under this condition, since~$A^{+}$ is initial,
$A_{io}$ must also be initial.

By Proposition~\ref{pro:comp-completeness} we have that
$A \sintc{\tr_{io}} A_{io}$,
and $A \eintc B$ implies that:
$$B \sintc{\tr_{io}} B_{io} \mbox{ with } \Phi(A_{io}) \statequiv \Phi(B_{io}). $$
A simple inspection of the \textsc{Proper} rules shows that a basic
process resulting from the execution of a proper block must be
initial. Thus, since the whole simple process $B$ is initial,
{$B_{io}$ is initial too}.

Thanks to Proposition~\ref{pro:comp-soundness}, we have that
$B \LRstep{\tr_{io}} B_{io}$.
Our goal is now to prove that we can complete this execution with $\tr_i$.
This trace is of the form $\tr^{c_1}\cdot\tr^{c_2}\dots\tr^{c_n}$ where $\tr^{c_i}$ contains
only inputs on channel $c_i$ and the $c_i$ are pairwise disjoint.
Now, we easily see that for each $i$,
$$A_{io} \LRstep{\tr^{c_i}} A_i$$
and $A_i$ has no more atomic process on channel $c_i$.
Thus we have $A_{io} \sintc{\tr^{c_i}} A_i^0$ with
$A_i^0 = \proc{\varnothing}{\Phi(A_i)}$. Since
$A \approx_c B$, we must have some $B_i^0$ such that:
$$B \sintc{\tr_{io}} B_{io} \sintc{\tr^{c_i}} B_i^0 $$
We can translate this back to the regular semantics, obtaining
$
B \LRstep{\tr_{io}} B_{io} \LRstep{\tr^{c_i}} B_i
$.
We can now execute all these
inputs to obtain an execution of $\tr_{io}\cdot\tr_i$ towards some process
$B^{+}$:
$$B \LRstep{\tr_{io}} B_{io} \LRstep{\tr_i} B^{+}$$
Permuting those actions, we obtain thanks to Lemma~\ref{lem:permute}:
\[ B \LRstep{\tr} B' \LRstep{\tr^{+}} B^{+} \]
We observe that
$\Phi(B^{+}) = \Phi(B_{io}) \statequiv \Phi(A_{io}) \statequiv \Phi(A^{+})$,
and it 
 follows that ${A' \statequiv B'}$ because those frames
have the same domain, which is a subset of that of ${\Phi(A^{+}) \statequiv \Phi(B^{+})}$.
\end{proof}

\end{document}